\theoremstyle{plain}
\DeclareMathOperator*{\argmin}{arg\,min}
\let\polylog\relax
\DeclareMathOperator*{\polylog}{\mathrm{polylog}}
\DeclareMathOperator*{\tr}{\mathrm{tr}}
\newtheorem*{rep@theorem}{\rep@title}
\newcommand{\newreptheorem}[2]{%
	\newenvironment{rep#1}[1]{%
		\def\rep@title{#2 \ref{##1}}%
		\begin{rep@theorem}}%
		{\end{rep@theorem}}}
\newcommand{\undersim}[1]{\mathrel{\mathpalette\@undersim{#1}}}
\newcommand{\@undersim}[2]{%
  \vcenter{%
    \ialign{%
      ##\cr
      $\m@th#1#2$\cr
      \noalign{\nointerlineskip\kern.2ex}
      $\m@th#1\sim$\cr
      \noalign{\kern-.4ex}
    }%
  }%
}
\newcommand{\gsim}{\undersim{>}}
\newcommand{\lsim}{\undersim{<}}
\newcommand{\eps}{\epsilon}
\newcommand{\wh}{\widehat}
\newcommand{\wt}{\widetilde}
\newcommand{\Z}{\mathbb{Z}}
\newcommand{\kstodo}[1]{\todo{KS: #1}}
\let\R\relax
\newcommand{\R}{\mathbb{R}}
\DeclareMathOperator{\rank}{rank}
\let\norm\relax
\newcommand{\norm}[1]{\|#1\|}
\newcommand{\getempen}{\textsc{GetEmpiricalEnergy}}
\newcommand{\getlegsamp}{\textsc{GetLegalSample}}
\newcommand{\locatesig}{\textsc{Locate1Signal}}
\newcommand{\getempenk}{\textsc{GetKEmpiricalEnergy}}
\newcommand{\getlegsampk}{\textsc{GetKLegalSample}}
\newcommand{\locatesigk}{\textsc{LocateKSignal}}
\newcommand{\freqrecovcluster}{\textsc{FrequencyRecovery1Cluster}}
\newcommand{\onestage}{\textsc{OneStage}}
\newcommand{\hashtobins}{\textsc{HashToBins}}
\newcommand{\onegoodsamp}{\textsc{OneGoodSample}}
\newtheorem{theorem}{Theorem}
\newtheorem{lemma}{Lemma}[section]
\newtheorem{claim}[lemma]{Claim}
\newtheorem{definition}[lemma]{Definition}
\newtheorem{problem}{Problem}
\newtheorem{remark}[lemma]{Remark}
\newtheorem*{claim*}{Claim}
\newtheorem*{proposition*}{Proposition}
\newtheorem*{lemma*}{Lemma}
\newtheorem*{problem*}{Problem}
	\gdef\xxxmark{%
		\expandafter\ifx\csname @mpargs\endcsname\relax 
		\expandafter\ifx\csname @captype\endcsname\relax 
		\marginpar{xxx}
		\else
		xxx 
		\fi
		\else
		xxx 
		\fi}
	\gdef\xxx{\@ifnextchar[\xxx@lab\xxx@nolab}
	\long\gdef\xxx@lab[#1]#2{{\bf [\xxxmark #2 ---{\sc #1}]}}
	\long\gdef\xxx@nolab#1{{\bf [\xxxmark #1]}}
\title{Sublinear Time Low-Rank Approximation of Toeplitz Matrices}
\author[1]{ Cameron Musco}
\author[2]{Kshiteej Sheth}
\affil[1]{University of Massachusetts Amherst}
\affil[2]{EPFL}
\date{}
\begin{document}

\thispagestyle{empty}

\maketitle










\thispagestyle{empty}




\begin{abstract}
We present a sublinear time algorithm for computing a near optimal low-rank approximation to any positive semidefinite (PSD) Toeplitz matrix $T\in \mathbb{R}^{d\times d}$, given noisy access to its entries. In particular, given entrywise query access to $T+E$ for an arbitrary noise matrix $E\in \mathbb{R}^{d\times d}$, integer rank $k\leq d$, and error parameter $\delta>0$, our algorithm runs in time $\poly(k,\log(d/\delta))$ and outputs (in factored form) a Toeplitz matrix $\wt T \in \R^{d \times d}$ with rank $\poly(k,\log(d/\delta))$ satisfying, for some fixed constant $C$, 
\begin{equation*}
    \|T-\wt T\|_F \leq C \cdot \max\{\|E\|_F,\|T-T_k\|_F\} + \delta  \cdot \|T\|_F.
\end{equation*}

Here $\|\cdot \|_F$ is the Frobenius norm and $T_k$ is the best (not necessarily Toeplitz) rank-$k$ approximation to $T$ in the Frobenius norm, given by projecting $T$ onto its top $k$ eigenvectors. 

Our robust low-rank approximation primitive can be applied in several settings. When $E = 0$, we obtain the first sublinear time near-relative-error low-rank approximation algorithm for  PSD Toeplitz matrices, resolving the main open problem of Kapralov et al. SODA `23, which gave an algorithm with sublinear query complexity but exponential runtime. Our algorithm can also be applied to approximate the unknown Toeplitz covariance matrix of a multivariate Gaussian distribution, given sample access to this distribution. By doing so, we resolve an  open question of Eldar et al. SODA `20, improving the state-of-the-art error bounds and achieving a polynomial rather than exponential (in the sample size) runtime.

Our algorithm is based on applying sparse Fourier transform techniques to recover a low-rank Toeplitz matrix using its Fourier structure. Our key technical contribution is the first polynomial time algorithm for \emph{discrete time off-grid} sparse Fourier recovery, which may be of independent interest. We also contribute a structural heavy-light decomposition result for PSD Toeplitz matrices, which allows us to apply this primitive to low-rank Toeplitz matrix recovery.
\end{abstract}




\clearpage

\thispagestyle{empty}
 
\tableofcontents

\clearpage

\pagenumbering{arabic}



\section{Introduction}

A Toeplitz matrix $T\in \mathbb{R}^{d\times d}$ is constant along each of its diagonals. I.e., $T_{i,j}=T_{k,l}$ for all $i,j,k,l$ with $i-j=k-l$. These highly structured  matrices arise in many fields, including signal processing, scientific computing, control theory, approximation theory, and machine learning -- see \cite{Bunch:1985ti} for a survey. In particular, Toeplitz matrices often arise as the covariance matrices of stationary signals, when the covariance structure is shift invariant. I.e., when the covariance between measurements only depends on their distance in space or time \cite{Gray:2006ta}. A row-reversed Toeplitz matrix is known as a Hankel matrix. Such matrices also find broad applications \cite{Fazel:2013vw,Munkhammar:2017ud,Ghadiri:2023tv}. 

Given their importance, significant work has studied fast algorithms for basic linear algebraic tasks on Toeplitz matrices. A $d \times d$ Toeplitz matrix can be multiplied by a vector in just  $O(d\log d)$ time using the fast Fourier transform. Toeplitz linear systems can be solved in $O(d^2)$ time exactly using Levinson recursion \cite{Golub:2013wp}, and to high-precision in $O(d\cdot \polylog d)$ time using randomization  \cite{XiXiaCauley:2014,XiaXiGu:2012}. A full eigendecomposition of a symmetric Toeplitz matrix can be computed in $O(d^2\cdot \polylog d)$ time \cite{pan1999complexity}. 

\subsection{Sublinear query algorithms for Toeplitz matrices.} 

Recent work has focused on algorithms for Toeplitz matrices with complexity scaling \emph{sublinearly} in the dimension $d$ \cite{Abramovich:1999vs,Chen:2015wz,Qiao:2017tp,Wu:2017ub,Lawrence:2020ut,eldar2020toeplitz,kapralov2022toeplitz}. Kapralov et al. \cite{kapralov2022toeplitz} study low-rank approximation of positive semidefinite (PSD) Toeplitz matrices. They show that by accessing just $\poly(k,\log(d/\delta),1/\epsilon)$ entries of a PSD Toeplitz matrix $T \in \R^{d \times d}$, one can compute (in factored form) a symmetric Toeplitz matrix $\wt T$ with rank $\tilde O(k \log(1/\delta)/\epsilon)$ such that:\footnote{Throughout we use $\tilde O(\cdot)$ to hide polylogarithmic factors in the dimension $d$ and in the argument.}  
\begin{align}\label{eq:nearRelative}
\|T-\wt{T}\|_F \leq (1+\eps)\|T-T_k\|_F + \delta\|T\|_F,
\end{align}
where $\norm{M}_F = \sqrt{\sum_{i=1}^d \sum_{j=1}^d M_{ij}^2}$ is the Frobenius norm and $T_k = \underset{B:\rank(B)\leq k}{\argmin}\|T-B\|_F$ is the best rank-$k$ approximation to $T$ in the Frobenius norm, given by projecting $T_k$ onto its top $k$ eigenvectors. 

Observe that $T_k$ may not itself be Toeplitz and it is not a priori clear that Toeplitz $\wt T$ satisfying \eqref{eq:nearRelative} even exists. The key technical contribution of \cite{kapralov2022toeplitz} is to prove that it does, and further that $\wt T$ can be recovered using sample efficient off-grid sparse Fourier transform techniques. Unfortunately, despite its sample efficiency, the algorithm of \cite{kapralov2022toeplitz} is computationally inefficient, with runtime that is  exponential in $\tilde O(k \polylog(d)/\epsilon)$  --  i.e., at least $d^{\tilde O(k/\epsilon)}$. The main open question left by their work is if this runtime can be improved, giving a sublinear time, not just a sublinear query, algorithm. 

Eldar et al. \cite{eldar2020toeplitz} study the related problem of recovering an (approximately) low-rank PSD Toeplitz covariance matrix $T \in \R^{d \times d}$ given independent samples from the $d$-dimensional Gaussian $\mathcal{N}(0,T)$. Low-rank Toeplitz covariance estimation is widely applied in signal processing, including in  direction of arrival (DOA) estimation  \cite{krim1996two}, spectrum sensing for cognitive radio \cite{ma2009signal,cohen2018analog}, and medical and radar image processing \cite{snyder1989use,ruf1988interferometric,fuhrmann1991application,brookes2008optimising,asl2012low,cohen2018sparse}. 
%
Motivated by these applications, Eldar et al. \cite{eldar2020toeplitz} consider two relevant sample complexity measures: the vector sample complexity (VSC), which is the number of samples (each a vector in $\R^d$) taken from the distribution $\mathcal{N}(0,T)$, and the entry sample complexity (ESC), which is the number of entries read from each vector sample. 
They show that with $\poly(k,\log d,1/\epsilon)$ VSC and ESC, it is possible to return $\wt T$ satisfying with high probability:\footnote{Throughout, we let $f(\cdot) \lsim g(\cdot)$ denote that $f(\cdot) \le c \cdot g(\cdot)$ for some fixed constant $c$.}
\begin{equation}\label{eq:sampleBound}
    \|T-\wt T\|_2 \lsim \sqrt{\|T-T_k\|_2 \cdot  \tr(T)+ \frac{\|T-T_k\|_F  \cdot \tr(T)}{k}} + \epsilon \|T\|_2,
\end{equation}
where $\tr(T)$ is the trace and $\norm{T}_2$ is the spectral norm.
While the above error bound may seem non-standard,  Eldar et al. show that it implies fairly strong bounds when $T$ has low stable-rank. Observe that when $T$ is exactly rank-$k$, the error bound becomes $\epsilon \norm{T}_2$. As in \cite{kapralov2022toeplitz}, a key drawback is that the algorithm of \cite{eldar2020toeplitz} has runtime scaling exponentially in  $\tilde O(k \polylog d)$. Further,   the output matrix $\wt T$ is not itself guaranteed to be low-rank.

\subsection{Our contributions.}

In this work, we give a \emph{sublinear time} algorithm for computing a near-optimal low-rank approximation of a PSD Toeplitz matrix given noisy access to its entries. Our robust low-rank approximation primitive can be applied in the settings of both \cite{kapralov2022toeplitz} and \cite{eldar2020toeplitz}, yielding the first sublinear time algorithms for standard PSD Toeplitz low-rank approximation and low-rank Toeplitz covariance estimation. Our main result is:
\begin{theorem}[Robust Sublinear Time Toeplitz Low-Rank Approximation]\label{thm:sublinear_time_recovery}
	Let $T\in \mathbb{R}^{d\times d}$ be a PSD Toeplitz matrix, $E\in \mathbb{R}^{d\times d}$ be an arbitrary noise matrix, $\delta>0$ be an error parameter, and $k$ be an integer rank parameter. There exists an algorithm that, given query access to the entries of $T+E$, runs in $\poly(k,\log(d/\delta))$ time and outputs a representation of symmetric Toeplitz matrix $\wt T$ with rank $\poly(k,\log(d/\delta))$ that satisfies, with probability at least $0.9$,
	\begin{equation*}
		\|T- \wt T\|_F \lsim \max\{\|E\|_F,\|T-T_k\|_F\} + \delta \|T\|_F,
	\end{equation*}
	where $T_k = \argmin_{B: \rank(B)\leq k}\|T-B\|_F$ is the best rank-$k$ approximation to $T$.
\end{theorem}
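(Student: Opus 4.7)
The plan is to exploit the Fourier-analytic structure of PSD Toeplitz matrices together with the paper's new polynomial-time off-grid sparse Fourier recovery primitive. By the Vandermonde (Carath\'eodory--Fej\'er) decomposition, any PSD Toeplitz $T$ admits a representation $T = \sum_{j} \alpha_j v_{f_j} v_{f_j}^*$ with $\alpha_j \ge 0$ and $v_f = (1, e^{2\pi i f}, \ldots, e^{2\pi i (d-1)f})^{\top}$, so the diagonal sequence $\ell \mapsto T_{1,1+\ell}$ is a (potentially dense) sum of complex exponentials. The first step is the structural lemma promised in the abstract, namely a heavy-light decomposition: I would show that when $T$ has Frobenius tail $\|T-T_k\|_F$, the frequencies can be split into a \emph{heavy} set of size $s = \poly(k,\log(d/\delta))$ plus a residual whose total contribution to $T$ in Frobenius norm is $O(\|T-T_k\|_F + \delta\|T\|_F)$. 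This reduces the task to approximately recovering a Toeplitz matrix generated by an $s$-sparse off-grid Fourier signal.

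Since any queried entry $(T+E)_{i,j}$ equals $T_{1,|i-j|+1} + E_{i,j}$, we effectively obtain noisy sample access to this $s$-sparse discrete-time off-grid signal of length $d$, where the aggregate noise energy (averaged over diagonals, weighted by diagonal length) is controlled by $\|E\|_F^2$ plus the light-tail contribution. The second step is to invoke the paper's new polynomial-time discrete-time off-grid sparse Fourier recovery algorithm on these samples to produce estimated frequencies $\tilde f_1,\ldots,\tilde f_s$ and amplitudes $\tilde\alpha_1,\ldots,\tilde\alpha_s$ whose induced signal approximates the generating sequence in the appropriate $\ell_2$-energy sense with error $O(\|E\|_F + \|T-T_k\|_F + \delta\|T\|_F)$. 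Pairing conjugate frequencies so the output is real, I would then set $\wt T = \sum_{j} \tilde\alpha_j v_{\tilde f_j} v_{\tilde f_j}^*$, which by construction is a symmetric Toeplitz matrix of rank at most $s = \poly(k,\log(d/\delta))$. The final step is an error-translation lemma: the Frobenius norm of a Toeplitz matrix is, up to $O(\sqrt d)$-type factors, determined by the diagonal-length-weighted $\ell_2$ energy of its generating sequence, so the Fourier-domain recovery bound transfers to $\|T - \wt T\|_F \lsim \max\{\|E\|_F, \|T-T_k\|_F\} + \delta\|T\|_F$.

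The main obstacles I anticipate are twofold. The first is the structural heavy-light decomposition itself: a generic Vandermonde representation may require up to $d$ terms, and the content of the lemma is that a near-rank-$k$ PSD Toeplitz matrix forces the weight mass to concentrate on $\poly(k,\log(d/\delta))$ well-separated or cluster-representable frequencies up to the target error. Correctly handling tight clusters of frequencies at the $1/d$ scale—where the underlying Vandermonde system is ill-conditioned and individual frequencies cannot be identified, yet the joint low-rank contribution of the cluster can still be recovered—appears to be the delicate combinatorial ingredient. The second and algorithmically most demanding obstacle is the polynomial-time off-grid sparse Fourier recovery in the discrete-time setting: prior work (notably \cite{kapralov2022toeplitz}) obtained essentially the right sample complexity but with exponential runtime, and closing this gap in a way that interfaces cleanly with the heavy-light decomposition and with the Frobenius-norm error metric is the core technical challenge.
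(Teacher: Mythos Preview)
Your high-level architecture (exploit Fourier sparsity, run off-grid sparse recovery, transfer the column-level bound to Frobenius norm) matches the paper, but two of your steps are structured differently from how the paper makes them work, and one of your anticipated lemmas has a real gap.

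\textbf{Heavy--light is not a prior structural lemma on $T$.} The paper does not split the Vandermonde decomposition of $T$ itself. Instead it invokes the existence theorem of \cite{kapralov2022toeplitz} (Theorem~\ref{frobeniusexistence} here), which already hands you an $\tilde O(k\log(1/\delta))$-sparse Toeplitz $\wt T=F_SDF_S^*$ with $\|T-\wt T\|_F\le(1+\epsilon)\|T-T_k\|_F+\delta\|T\|_F$ and, crucially, with the extra structural properties (items 2--4) that any sub-sum over the frequency groups $S_i$ is itself $\delta$-close to a PSD Toeplitz matrix. The heavy--light split is then done on the \emph{recovered} frequencies of $\wt T$, not on $T$: $S_{heavy}$ is whatever the sparse-FFT found, and $S_{light}=S\setminus S_{heavy}$. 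Your plan to prove a freestanding concentration-of-mass lemma for the Vandermonde weights of a near-rank-$k$ PSD Toeplitz is essentially re-deriving Theorem~\ref{frobeniusexistence}, and the paper does not do this.

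\textbf{The error-translation step is where your proposal is genuinely incomplete.} You say the Frobenius norm of a Toeplitz matrix is determined up to $\sqrt d$ factors by its generating sequence; that upper bound is true for any symmetric Toeplitz matrix. But the sparse-FFT bound is on a length-$d/2$ chunk of one column, and for a \emph{generic} symmetric Toeplitz matrix the first half of the first column can carry arbitrarily little of the Frobenius mass (take all energy on the extreme anti-diagonals). The paper closes this by showing that $\wt T^{light}$ is itself $\delta$-close to a PSD Toeplitz matrix (this is exactly what item~4 of Theorem~\ref{frobeniusexistence} buys), and then applying Audenaert's norm-compression inequality $\|A\|_F\le\|A_{[0:d/2,0:d/2]}\|_F+\|A_{[d/2:d,d/2:d]}\|_F$ for PSD $A$ to conclude $\|\wt T^{light}\|_F\lesssim\sqrt d\,\|\wt T^{light}_{[0:d/2,0]}\|_2$. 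The same inequality, applied in the other direction to $\wt T$, is what guarantees the signal column has norm $\gtrsim\|\wt T\|_F/\sqrt d$, so that the SNR hypothesis of the sparse-FFT lemma is met after a random column is chosen (via Markov on $E$). Your proposal does not mention PSD-ness of the light part or any norm-compression tool, and without it the transfer from column error to matrix error fails.

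\textbf{Amplitude recovery is a separate regression.} The paper's sparse-FFT primitive returns only a frequency list $L$; the coefficients are found afterward by a sublinear-time leverage-score-sampled regression $\min_{D\text{ diag}}\|T+E-F_{S(L')}DF_{S(L')}^*\|_F$ (Lemma~\ref{lem:weighted_regression}), where $S(L')$ is the candidate grid built from $L$ using the frequency template of Theorem~\ref{frobeniusexistence}. Bundling amplitude estimation into the Fourier step, as you propose, is not what the paper does and would require its own argument.
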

Observe that given our runtime, which depends just poly-logarithmically on the input dimension $d$, it is not possible to output $\wt T \in \R^{d \times d}$ explicitly. Thus, our algorithm outputs a compressed representation of $\wt T$. To do so, we use the well known Vandermonde decomposition theorem, which states that any Toeplitz matrix can be diagonalized by a Fourier matrix \cite{cybenko1982moment}. We can show that $\wt T$ in particular, which has rank $r = \poly(k,\log(d/\delta))$, can be written as  $FDF^*$ where $F \in \mathbb{C}^{d\times r}$ is a Fourier matrix, with $j^{th}$ column given by $[1, e^{2\pi i \cdot f_j}, e^{2\pi i \cdot 2f_j},\ldots, e^{2\pi i  \cdot (d-1) f_j}]$ for some frequency $f_j$, and $D \in \R^{r \times r}$ is diagonal. Our algorithm outputs the frequencies $f_1,\ldots,f_r$ and diagonal entries $D_{1,1},\ldots,D_{r,r}$, which fully determine $\wt T$.

As an immediate consequence of Theorem \ref{thm:sublinear_time_recovery} applied with $E=0$, we obtain the following sublinear time constant factor low-rank approximation algorithm for PSD Toeplitz matrices, which resolves the main open problem of \cite{kapralov2022toeplitz} for the case of constant factor approximation.
\begin{theorem}[Sublinear Time Toeplitz Low-Rank Approximation]\label{thm:sublinear_time_lowrankapprox}
Let $T\in \mathbb{R}^{d\times d}$ be a PSD Toeplitz matrix, $k$ be an integer rank parameter, and $\delta>0$ be an error parameter. There exists an algorithm that, given query access to entries of $T$, runs in  $\poly(k,\log(d/\delta))$ time and outputs a representation of symmetric Toeplitz matrix $\wt T$ with rank $\poly(k,\log(d/\delta))$ that satisfies with probability at least 0.9,
	\begin{equation*}
		\|T- \wt T\|_F \lsim \|T-T_k\|_F + \delta \|T\|_F.
	\end{equation*}
\end{theorem}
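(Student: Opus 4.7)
The plan is to obtain this statement as an immediate instantiation of the robust Theorem~\ref{thm:sublinear_time_recovery} with the noise matrix set to $E=0$. First I would verify that the input model matches: Theorem~\ref{thm:sublinear_time_recovery} requires query access to the entries of $T+E$, and here we are given query access to the entries of $T$ itself, which is exactly the case $E=0$. No further preprocessing of the oracle is required, since the two models coincide under this choice.

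Next, I would substitute $E = 0$ into the error guarantee of Theorem~\ref{thm:sublinear_time_recovery}. Because $\|E\|_F = 0$, the maximum collapses, yielding
\begin{equation*}
\|T - \wt T\|_F \;\lsim\; \max\{\|E\|_F, \|T - T_k\|_F\} + \delta \|T\|_F \;=\; \|T - T_k\|_F + \delta \|T\|_F,
\end{equation*}
which is precisely the bound claimed in Theorem~\ref{thm:sublinear_time_lowrankapprox}. All of the other promised properties -- the $\poly(k,\log(d/\delta))$ runtime, the $\poly(k,\log(d/\delta))$ rank of $\wt T$, the $0.9$ success probability, and the symmetric Toeplitz structure of $\wt T$ -- are inherited verbatim from Theorem~\ref{thm:sublinear_time_recovery}.

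Since the runtime is only polylogarithmic in $d$, the output $\wt T$ cannot be written down entrywise, so I would emphasize that the algorithm returns the same compressed representation used in Theorem~\ref{thm:sublinear_time_recovery}: by the Vandermonde decomposition, $\wt T = F D F^*$ where $F \in \mathbb{C}^{d \times r}$ is a Fourier matrix whose $j$-th column is $[1, e^{2\pi i f_j}, \ldots, e^{2\pi i (d-1)f_j}]^\top$ and $D \in \mathbb{R}^{r \times r}$ is diagonal, so it suffices to output $f_1, \ldots, f_r$ together with $D_{1,1}, \ldots, D_{r,r}$ to fully determine $\wt T$.

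There is no genuine obstacle to overcome at this stage: the entire technical burden -- the off-grid sparse Fourier recovery primitive, the heavy-light decomposition for PSD Toeplitz matrices, and the robustness to arbitrary additive noise $E$ -- is absorbed into the proof of Theorem~\ref{thm:sublinear_time_recovery}. The value of Theorem~\ref{thm:sublinear_time_lowrankapprox} is simply to highlight that, by specializing to the noiseless regime, the robust primitive already resolves (up to constant-factor approximation) the main open question left by Kapralov et al.
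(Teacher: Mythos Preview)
Your proposal is correct and matches the paper's own argument exactly: the paper states that Theorem~\ref{thm:sublinear_time_lowrankapprox} ``easily follows by applying Theorem~\ref{thm:sublinear_time_recovery} for $E=0$,'' which is precisely what you do.
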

We can further apply Theorem \ref{thm:sublinear_time_recovery} to the Toeplitz covariance estimation setting of \cite{eldar2020toeplitz}. Here, we set $E=\frac{1}{s} \cdot XX^T -T$ where $X \in \R^{d \times s}$ has $s$ columns sampled i.i.d. from the Gaussian distribution $\mathcal{N}(0,T)$. That is, $E$ is the error between the true covariance matrix $T$ and a sample covariance  matrix $\frac{1}{s} XX^T$ that our algorithm can access. Using similar ideas to \cite{eldar2020toeplitz}, we show that for $s = \tilde O(k^4/\epsilon^2)$, $\norm{E}_F \lsim \sqrt{\|T-T_k\|_2 \tr(T)+ \frac{\|T-T_k\|_F \tr(T)}{k}} + \epsilon \|T\|_2$. Combined with Theorem \ref{thm:sublinear_time_recovery} applied with $\delta = \epsilon/\sqrt{d}$ so that $\delta \norm{T}_F \le \epsilon \norm{T}_2$, this gives:
\begin{theorem}[Sublinear Time Toeplitz Covariance Matrix Estimation]\label{thm:sublinear_time_covariance_estimation}
	Let $T\in \mathbb{R}^{d\times d}$ be a PSD Toeplitz matrix, $k$ be an integer rank parameter, and $\epsilon > 0$ be an error parameter. There is an algorithm that given i.i.d. samples $x^1,\ldots, x^s \sim \mathcal{N}(0,T)$ for $s=\tilde O(k^4/\epsilon^2)$, runs in $\poly(k,\log (d/\epsilon),1/\epsilon)$ time and outputs a representation of symmetric Toeplitz $\wt T$with rank $\poly(k,\log(d/\epsilon))$ satisfying, with probability at least 0.9,
	\begin{equation*}
		\|T- \wt T\|_F \lsim \sqrt{\|T-T_k\|_2 \tr(T)+ \frac{\|T-T_k\|_F \tr(T)}{k}} + \epsilon \|T\|_2.
	\end{equation*}
	Further, the algorithm has entry sample complexity (ESC) $\poly(k,\log (d/\epsilon))$ --  it reads just  $\poly(k,\log (d/\epsilon))$ entries of each vector sample $x^i$.
\end{theorem}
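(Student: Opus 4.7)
The plan is to reduce \Cref{thm:sublinear_time_covariance_estimation} to \Cref{thm:sublinear_time_recovery} by interpreting the empirical covariance matrix as a noisy version of $T$. Specifically, let $X \in \R^{d \times s}$ be the matrix whose columns are the samples $x^1,\ldots,x^s$, let $\hat T = \frac{1}{s} X X^T$ be the sample covariance, and set $E \eqdef \hat T - T$. A query to any entry $(i,j)$ of $T+E = \hat T$ can be answered by reading $x^t_i$ and $x^t_j$ from all $s$ samples and forming the empirical average. Thus, given the sample matrix $X$, we can simulate entrywise query access to $T+E$, and so we can invoke \Cref{thm:sublinear_time_recovery} with this error matrix and error parameter $\delta = \epsilon/\sqrt d$ (so that $\delta \|T\|_F \le \epsilon \|T\|_2$).

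The main technical step is to show that for $s = \tilde O(k^4/\epsilon^2)$ samples, with high probability
\begin{equation*}
\|E\|_F \lsim \sqrt{\|T-T_k\|_2 \tr(T) + \frac{\|T-T_k\|_F \tr(T)}{k}} + \epsilon \|T\|_2.
\end{equation*}
Plugging this bound into \Cref{thm:sublinear_time_recovery} yields the Frobenius error bound claimed in \Cref{thm:sublinear_time_covariance_estimation}, since $\max\{\|E\|_F, \|T - T_k\|_F\}$ is dominated by the sample covariance error term (the pure $\|T-T_k\|_F$ summand is absorbed into $\sqrt{\|T-T_k\|_F \tr(T)/k}$ up to constants whenever $\|T-T_k\|_F \le \tr(T)/k$, and the tail case is handled directly). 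I expect to establish the $\|E\|_F$ bound by following the sample covariance concentration argument of \cite{eldar2020toeplitz}, which combines a split of $T$ into its top-$k$ eigenspace and its tail together with standard matrix Bernstein / Gaussian quadratic form concentration inequalities applied to $\frac{1}{s}XX^T - T$. The main obstacle, and where I would spend the most care, is tracking the interaction between the spectral norm and Frobenius norm terms cleanly enough to recover the precise $k^4/\epsilon^2$ sample count — \cite{eldar2020toeplitz} gave a spectral-norm version of essentially the same bound, and what is required here is the Frobenius-norm analogue.

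For the runtime claim, \Cref{thm:sublinear_time_recovery} runs in $\poly(k, \log(d/\delta)) = \poly(k, \log(d/\epsilon))$ time, with the only overhead being that each of its $\poly(k,\log(d/\epsilon))$ queries to $T+E$ costs $O(s) = \tilde O(k^4/\epsilon^2)$ arithmetic operations to form the empirical average. The total runtime is therefore $\poly(k, \log(d/\epsilon), 1/\epsilon)$, as stated. The rank of the output matrix $\wt T$ is inherited directly from \Cref{thm:sublinear_time_recovery}.

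Finally, for the entry sample complexity claim, observe that \Cref{thm:sublinear_time_recovery} performs $q = \poly(k,\log(d/\epsilon))$ total entrywise queries to $T+E$, each of the form $(i,j)$. Each such query only requires coordinates $x^t_i$ and $x^t_j$ from each sample $x^t$, so the total set of coordinates read from any single sample is bounded by $2q = \poly(k,\log(d/\epsilon))$. Thus the per-sample ESC is $\poly(k,\log(d/\epsilon))$, matching the statement of the theorem.
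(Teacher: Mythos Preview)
Your proposal is correct and follows essentially the same reduction as the paper: set $E = \frac{1}{s}XX^T - T$, invoke \Cref{thm:sublinear_time_recovery} with $\delta$ chosen so that $\delta\|T\|_F \le \epsilon\|T\|_2$, and control $\|E\|_F$ by a separate concentration lemma. The runtime and ESC accounting you give matches the paper's.

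Two minor points of comparison. First, your absorption of the $\|T-T_k\|_F$ term via a case split is more complicated than needed: since $T$ is PSD, $\|T-T_k\|_F^2 = \sum_{j>k}\lambda_j^2 \le \lambda_{k+1}\sum_{j>k}\lambda_j \le \|T-T_k\|_2\,\tr(T)$, so $\|T-T_k\|_F$ is always dominated by the first square-root term. Second, the paper's proof of the concentration bound (its Lemma~\ref{lem:low_rank_concentration}) does not use matrix Bernstein; it splits $XX^T - T$ into $(XX^T - P_kXX^TP_k) + (P_kXX^TP_k - T_k) + (T_k - T)$, bounds the first piece via the projection-cost-preserving sketch guarantee of \cite{cohen2015dimensionality} together with a Johnson--Lindenstrauss bound on $\|X\|_F^2$, and bounds the second piece by an $\epsilon$-net over the rank-$k$ subspace combined with Laurent--Massart chi-squared concentration. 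If you follow the spectral-norm argument of \cite{eldar2020toeplitz} directly you may arrive at a slightly different decomposition, but the paper's route is what yields the stated $\tilde O(k^4/\epsilon^2)$ sample count cleanly.
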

Observe that the error guarantee of
Theorem \ref{thm:sublinear_time_covariance_estimation} is at least as strong, and can potentially be much stronger, than that of \eqref{eq:sampleBound}, since the bound is on $\norm{T-\wt T}_F$ rather than $\norm{T - \wt T}_2$. At the same time, our algorithm achieves the same vector and entry sample complexities as \cite{eldar2020toeplitz} up to $\poly(k,\log (d/\epsilon),1/\epsilon)$ factors, while running in sublinear time. Moreover, our output $\wt T$ is guaranteed to be low-rank, while the output of \cite{eldar2020toeplitz} is not.  


\section{Technical overview}\label{sec:approach}
In this section, we sketch the ideas behind the proof of our main result, the sublinear time robust Toeplitz low-rank approximation algorithm of Theorem \ref{thm:sublinear_time_recovery}. 

\subsection{Recovering Low-Rank Toeplitz Matrices using Fourier Structure.}

Our starting point is the main result of \cite{kapralov2022toeplitz}, which shows that any PSD Toeplitz matrix has a near optimal low-rank approximation $\wt T$ which is itself Toeplitz. Armed with this existence result,  the key idea behind the algorithm of \cite{kapralov2022toeplitz} is to leverage the well-known sparse Fourier structure of low-rank Toeplitz matrices \cite{cybenko1982moment} to recover $\wt T$ using a sample efficient (but computationally inefficient) algorithm. 
The  idea behind \cite{eldar2020toeplitz} is similar, except that they rely on a weaker existence statement, where the low-rank approximation $\wt T$ is guaranteed to have sparse Fourier structure, but not necessarily be Toeplitz.
Formally, after defining the notion of a Fourier matrix, we state the main result of \cite{kapralov2022toeplitz}.
\begin{definition}[Fourier matrix]\label{def:symmetric_fourier_matrix}
	For any set of frequencies $S=\{f_1,f_2,\ldots, f_s\}\subset [0,1]$, let the Fourier matrix $F_S\in \mathbb{C}^{d\times s}$ have $j^{th}$ column equal to $v(f_j) := [1, e^{2\pi i f}, e^{2\pi i (2f)}\ldots, e^{2\pi i  (d-1) f}]$.
\end{definition}

\begin{theorem}[Theorem 2 of \cite{kapralov2022toeplitz}]\label{frobeniusexistence}
        For any PSD Toeplitz matrix $T\in \mathbb{R}^{d\times d}$, $0<\eps,\delta<1$, and $k\leq d$, there exists symmetric Toeplitz $\wt{T}$ with rank $r = \tilde {O}(k \log(1/\delta)/\eps))$ such that the following holds,
		$$\|T-\wt{T}\|_F \leq (1+\eps)\|T-T_k\|_F + \delta\|T\|_F.$$ 
		
		Further, $\wt T$ can be written as $\wt T = F_S D F_S^*$ where $F_S \in \mathbb{C}^{d \times r}$ is a Fourier matrix (Def. \ref{def:symmetric_fourier_matrix}) and $D \in \R^{r \times r}$ is diagonal.\footnote{Our algorithms will rely on several other properties of the frequency set $S$ and diagonal matrix $D$ guaranteed to exist by Theorem \ref{frobeniusexistence}. See Section \ref{sec:preliminaries} for a more complete statement of the  theorem, which details these properties.}
\end{theorem}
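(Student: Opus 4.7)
My plan is to start from the Carath\'eodory--Vandermonde decomposition of a PSD Toeplitz matrix: every PSD Toeplitz $T\in\R^{d\times d}$ can be written as $T = F_{S^\star} D^\star F_{S^\star}^{*}$ for some (possibly large) frequency set $S^\star\subset[0,1]$ and nonnegative diagonal $D^\star$. This already writes $T$ in the desired Fourier form; the entire task is to sparsify $S^\star$ down to $r = \tilde O(k\log(1/\delta)/\epsilon)$ frequencies while paying only $(1+\epsilon)\|T-T_k\|_F+\delta\|T\|_F$ error. A first-cut simplification: split $S^\star$ into \emph{heavy} and \emph{light} frequencies according to the size of $D^\star_{ff}$. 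Light frequencies can be discarded entirely, because by choosing the weight threshold proportional to $\delta^2\|T\|_F^2 / |S^\star|$ their combined Frobenius contribution is at most $\delta\|T\|_F$, which is absorbed into the additive slack.

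Next I would attack the heavy frequencies with a clustering argument. The obstacle is that multiple heavy frequencies can be packed inside a window of width $\ll 1/d$, in which case the vectors $v(f)$ are nearly collinear, the effective rank contribution is small, yet no on-grid frequency can represent them. I would cluster the heavy frequencies at a resolution $\Delta \asymp 1/(d\cdot\poly(k,1/\epsilon))$ and argue cluster-by-cluster that the rank-$|C|$ block $\sum_{f\in C} D^\star_{ff}\, v(f)v(f)^{*}$ can be replaced, up to relative Frobenius error $\epsilon$ and additive slack $\delta\|T\|_F$, by a Toeplitz matrix supported on only $\tilde O(\log(1/\delta)/\epsilon)$ frequencies. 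The replacement is driven by polynomial approximation: near a cluster center $f_0$, each $v(f)$ is essentially a low-degree Taylor expansion in $(f-f_0)$ applied to derivatives of $v$ at $f_0$; the span of those derivatives is $\tilde O(\log(1/\delta)/\epsilon)$-dimensional up to error $\delta$, and can be realized by a Prony/Chebyshev-style node set of nearby frequencies together with appropriately chosen (possibly negative or complex, but then symmetrized) weights.

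The main conceptual step, which I expect to be the chief obstacle, is to bound the number of heavy clusters one needs to retain by $O(k)$. The right way to do this is via a charging argument against $T_k$: let $U_k\in\R^{d\times k}$ span the top-$k$ eigenvectors of $T$, and observe that if a heavy cluster $C$ has small overlap with $\mathrm{range}(U_k)$, then its spectral mass $\sum_{f\in C} D^\star_{ff}\|v(f)\|^2$ appears essentially in full inside $\|T-T_k\|_F^2$. Consequently, except for at most $O(k)$ clusters that are ``aligned'' with the top eigenspace, every remaining heavy cluster contributes to $\|T-T_k\|_F^2$ at least as much as we lose by dropping or coarsely approximating it, so its Frobenius cost is absorbed into the $(1+\epsilon)\|T-T_k\|_F$ term via AM--GM. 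Establishing this cleanly requires a quantitative lower bound on the smallest singular value of $F_S$ restricted to any $\Delta$-separated subset of frequencies (a discrete Ingham / extremal trigonometric inequality), which is where the dependence of $\Delta$ on $d$ and $k$ comes from.

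Finally I would assemble the three sources of error: (i) dropped light frequencies, bounded by $\delta\|T\|_F$; (ii) per-cluster polynomial approximation errors on the $O(k)$ retained clusters, each a $(1+\epsilon)$ multiplicative factor on that cluster's mass, combining to $(1+\epsilon)\|T-T_k\|_F$ via the triangle inequality; (iii) entirely dropped heavy clusters, charged to $\|T-T_k\|_F$ by the alignment argument above. Summing the per-cluster rank budgets of $\tilde O(\log(1/\delta)/\epsilon)$ across $O(k)$ retained clusters gives the claimed total rank $r = \tilde O(k\log(1/\delta)/\epsilon)$. The output $\wt T = F_S D F_S^{*}$ is then a real symmetric Toeplitz matrix after pairing each frequency $f$ with $1-f$ (or equivalently $-f$), which preserves the rank bound up to a factor of $2$.
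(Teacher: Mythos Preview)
This theorem is not proved in the present paper; it is imported verbatim as Theorem~2 of \cite{kapralov2022toeplitz} and merely restated (with further structural properties) in Section~\ref{finalproof}. So there is no proof here to compare your proposal against.

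That said, the restatement in Section~\ref{finalproof} reveals that the construction in \cite{kapralov2022toeplitz} produces a very specific frequency set: $S$ decomposes into $r_1=\tilde O(k/\epsilon)$ groups $\wt S_i$, each centered at an \emph{on-grid} point $f_i\in\{1/2d,3/2d,\ldots,1-1/2d\}$ and consisting of $r_2=\tilde O(\log(1/\delta))$ equally spaced perturbations $f_i\pm\gamma j$ with $\gamma$ doubly-exponentially small in $\log d$. This strongly suggests the actual argument first approximates $T$ by something supported on on-grid frequencies (where sparsification to $\tilde O(k/\epsilon)$ tones can be done via discrete Fourier / leverage-score arguments) and then replaces each retained on-grid tone by a local Prony-type combination to restore the Toeplitz structure and control the $\delta$ error. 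Your plan instead clusters the Vandermonde frequencies of $T$ directly; this is a different route and would not, as written, yield the on-grid-centered structure in items~2--4 of the restated theorem, which the rest of the present paper relies on.

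On the substance of your sketch, the step you correctly flag as the main obstacle---bounding the number of retained heavy clusters by $O(k)$ via a charging argument against $T_k$---is not justified as stated. You claim that if a cluster $C$ has small overlap with $\mathrm{range}(U_k)$ then $\sum_{f\in C}D^\star_{ff}\|v(f)\|_2^2$ shows up essentially in full inside $\|T-T_k\|_F^2$. But $\|T-T_k\|_F^2=\sum_{j>k}\lambda_j(T)^2$, whereas the quantity you want to charge is a \emph{trace}-type sum $\sum_{f\in C}D^\star_{ff}\|v(f)\|_2^2=\mathrm{tr}(T_C)$; there is no general inequality relating $\mathrm{tr}(T_C)$ for a ``misaligned'' PSD summand $T_C$ to the Frobenius tail of the full sum $T=\sum_C T_C$. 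Many small misaligned clusters could each have trace far exceeding their contribution to $\|T-T_k\|_F$, so the charging can fail by arbitrary factors. Closing this gap is exactly where the real work lies, and your proposal does not indicate how to do it.
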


The first step in our proof of Theorem \ref{thm:sublinear_time_recovery} is to  apply Theorem \ref{frobeniusexistence} to the input PSD Toeplitz matrix $T$ with $\epsilon = \Theta(1)$. Our algorithm is given access to the entries of the noisy matrix $T+E$, which we will write as $\wt T + \wt E$ for $\wt E = E + T-\wt T$. Observe that by triangle inequality, since $\norm{T-\tilde T}_F \lsim \norm{T-T_k}_F + \delta \norm{T}_F$,
$$\norm{\wt E}_F \lsim \norm{E}_F + \norm{T-T_k}_F + \delta \norm{T}_F \lsim \max\{  \norm{E}_F, \norm{T-T_k}_F\} + \delta \norm{T}_F.$$

Thus, to prove Theorem \ref{thm:sublinear_time_recovery}, it suffices to show that we can recover $\wt T$ to within a constant factor of the noise level $\norm{\wt E}_F$. Our key contribution is to improve the runtime of this step from being exponential in the rank $r = \tilde O(k \log(1/\delta))$ to being polynomial, and hence sublinear in $d$ (recall that throughout $\tilde O(\cdot)$ hides poly logarithmic factors in $d$ and the argument).

To approximately recover $\wt T$, as in \cite{kapralov2022toeplitz} and \cite{eldar2020toeplitz}, we will leverage its Fourier structure. In particular, 
Theorem \ref{frobeniusexistence} guarantees that $\wt T$ can be written as $\wt T = F_S D F_S^*$, where $F_{S} \in \mathbb{C}^{d \times r}$ is a Fourier matrix (Def. \ref{def:symmetric_fourier_matrix}) and $D \in \R^{r \times r}$ is diagonal. To find $\wt T$, the algorithm of \cite{kapralov2022toeplitz} brute force searches for a good set of frequencies $S$ and corresponding diagonal matrix $D$. In particular, they consider a large pool of candidate frequency sets drawn from a net over $[0,1]^r$. For each set $S$, they solve a regression problem to find a diagonal matrix $D$ that (approximately) minimizes $\norm{T-F_S D F_S^*}_F$. These regression problems can be solved using a sublinear number of queries to $T$ using leverage score based random sampling \cite{sarlos2006improved,Woodruff:2014tg}. One can then return the best approximation given by any of the candidate frequency sets as the final output. 

Unfortunately, the above approach incurs exponential runtime, as the number of frequency sets considered grows exponentially in the rank $r = \tilde O(k \log(1/\delta))$. A similar issue arises in \cite{eldar2020toeplitz}, where a brute force search over frequency sets is also performed, but with a different regression step, where $D$ is relaxed to be any $r \times r$ matrix.


\subsection{From Column Recovery to Matrix Recovery.}\label{subsec:intro_matrix_recovery}
To avoid the exponential runtime, of \cite{kapralov2022toeplitz} and \cite{eldar2020toeplitz}, we  apply tools from the extensive literature on efficient recovery of Fourier-sparse functions \cite{hassanieh2012nearly,indyk2014sample,indyk2014nearly,Boufounos:2015vj,kapralov2016sparse,chen2016fourier,Jin:2023uc}. Observe that if we expand out the decomposition $\wt T = F_S D F_S^*$ from Theorem \ref{frobeniusexistence},  each column of $\wt T$ is an $r$-Fourier sparse function from $\{0,\ldots,d-1\} \rightarrow \R$. In particular, letting $\wt T_j$ denote the $j^{th}$ column, for any $t \in \{0,\ldots,d-1\} $, letting $\{a_f\}_f \in S$ be the diagonal entries of $D$,
$$\wt T_j(t) = \sum_{f\in S}a_f e^{2\pi i f(t-j)}.$$ 

Our algorithm will recover a Fourier-sparse approximation to a single column of $\wt T$ using samples of $T + E = \wt T + \wt E$ in just $\poly(r,\log d)$ time using a sparse Fourier transform algorithm. This Fourier-sparse approximation will give us approximations to the frequencies in $F_S$, which in turn can be used to form an approximation to $\wt T$. This approach presents several challenges, which we discuss below.


\smallskip

\noindent \textbf{Column Signal-to-Noise ratio.}  First, we must ensure that the column that we apply sparse Fourier transform to does not have too much noise on it, compared to its norm. To do so, we sample a column $\wt T_j$ uniformly at random. By Markov's inequality, $\wt T_j$ is corrupted with noise whose $\ell_2$ norm is bounded as $\norm{\wt E_j}_2 \lsim \norm{\wt E}_F/\sqrt{d}$ with good probability. Further, using that $\wt T$ is symmetric Toeplitz and that it is close to $T$ and hence nearly PSD, we can apply a norm compression inequality of Audenaert \cite{audenaert2006norm} (see Claim \ref{lem:diagonal_dominance_psd}) to show that each column of $\wt T$ must have relatively large norm.  In particular, for any $j$ we show that  $\norm{\wt T_j}_2 \gsim \norm{\wt T}_F/\sqrt{d}$ (see Lemma \ref{lem:smallnoisefirstcol}). In combination, these facts ensure that a random column $\wt T_j$ has a similar signal-to-noise ratio as the full matrix $\wt T$ with good probability. Thus, we can expect to recover a good approximation to $\wt T$ from just this column.

\smallskip

\noindent \textbf{Heavy Frequency Recovery.}
Of course, given the noise $\wt E$, we cannot expect to recover approximations to all frequencies in $S$ given sample access to $\wt T + \wt E$. For example, if a frequency $f$ corresponds to a very small entry $a_f$ in $D$, we may not recover it. We must argue that omitting such  frequencies from our approximation of $\wt T$ does not introduce significant error. 
More formally, our sparse Fourier transform will recover a set of frequencies that well approximates some subset of the input frequencies $S_{heavy} \subset S$, but may not approximate the set frequencies $S_{light} := S \setminus S_{heavy}$. The algorithm guarantees that $S_{heavy}$ suffices to approximate our random column $\wt T_j$  up to the noise level $\norm{\wt E_j}_2$. Let $Z_{heavy}: \{0,\ldots,d-1\}  \rightarrow \R$ be given by restricting $\wt T_j$ to the frequencies in $S_{heavy}$, i.e., $Z_{heavy}(t) = \sum_{f \in S_{heavy}} a_f e^{2\pi i f t}$. Let $Z_{light} = T_j - Z_{heavy}$. Then we can show the following, $$\norm{Z_{light}}_2 = \|\wt T_j-Z\|_2 \lsim \|\wt E_j\|_2 +\delta \|\wt T_j\|_2.$$ 
We have ensured that $\|\wt E_j\|_2 \lsim \|\wt E\|_F/\sqrt{d}$ by choosing a random column. Further, by applying our recovery algorithm with error $\delta' = \delta/\sqrt{d}$, which introduces only additional $\log d$ dependences, we  have that  $\delta' \|\wt T_j\|_2 \le \delta \norm{\wt T}_F/\sqrt{d}$. Thus, we have overall that $\norm{Z_{light}}_2 \lsim \frac{\norm{\wt E}_F + \delta \norm{\wt T}_F}{\sqrt{d}}$. 

It remains to argue that this error being small on just column $\wt T_j$ ensures that it is small on the full matrix. In particular, if we let $\wt T_{light}$ denote our full Toeplitz matrix restricted to the unrecovered frequencies, we must show the following, 
$$\norm{\wt T_{light}}_F \lsim \sqrt{d} \cdot \norm{Z_{light}}_2 \lsim \norm{\wt E}_F + \delta \norm{\wt T}_F.$$
Again, we show this by arguing that $\wt T_{light}$ is itself near PSD and applying the norm compression inequality of \cite{audenaert2006norm}. We interpret the existence of $\wt T_{light}$ and $\wt T_{heavy} := \wt T - \wt T_{light}$ as a heavy-light decomposition of $\wt T$ into $\wt T = \wt T_{heavy}+\wt T_{light}$, see Section \ref{subsection:heavy_light_decomp} for the detailed proof. 

\smallskip

\noindent \textbf{Sublinear time Approximate Regression. }With the above bound in hand, the proof of Theorem \ref{thm:sublinear_time_recovery} is essentially complete, in the last step we use a leverage score based fast approximate regression primitive \cite{sarlos2006improved,Woodruff:2014tg} to regress onto the approximate frequencies that we recover to give an approximation to the full $\wt T$. We have argued that the noise incurred by not recovering all frequencies is bounded by  $\norm{\wt E}_F + \delta \norm{\wt T}_F$, and this bounds the optimum of the regression problem, which we solve approximately, giving the final guarantee of Theorem \ref{thm:sublinear_time_recovery}. To solve the regression problem efficiently, we crucially rely on the special structure of the Fourier spectrum $S$ of $\wt T$, stated in the full version of Theorem \ref{frobeniusexistence} in Section \ref{finalproof}.  Full details of this step  are presented in Section \ref{subsection:noisy_toeplitz_recovery}. 

\subsection{Sublinear Time Discrete-Time Off-Grid Sparse Fourier Transform.}\label{subsec:intro_sparsefft}
We have argued  that applying a sparse Fourier transform to a random column of $\wt T$ suffices to recover a good approximation to the full matrix. It remains to show that we can implement such a sparse Fourier transform efficiently. We sketch the key ideas behind this efficient sparse Fourier transform here, with the complete proof appearing in Section \ref{sec:dtft_recovery}. 


If the frequencies in the set  $S$ guaranteed to exist by Theorem \ref{frobeniusexistence} were integer multiples of $1/d$, i.e., they were ``on-grid'', 
then we could apply a discrete sparse Fourier transform algorithm to  recover a Fourier-sparse approximation to $\wt T_j$ given samples from the column $[T+E]_j = [\wt T + \wt E]_j$. Such algorithms  have been studied extensively \cite{hassanieh2012nearly,Hassanieh:2012tq,Gilbert:2014tp}. The main difficulty in our case is that the frequencies in $S$ may be ``off-grid'', i.e., arbitrary real numbers in $[0,1]$. 
Recent work of \cite{chen2016fourier} solves the off-grid sparse Fourier transform problem given sample access to the function $[\wt T +\wt E]_j$ on the \emph{continuous range $[0,d]$}. However, in our setting, we can only access $[\wt T + \wt E]_j$ at the integer points $\{0,1,\ldots,d-1\}$ -- i.e., at the entries in the $j^{th}$ column of our matrix. Thus, we must  modify the approach of  \cite{chen2016fourier} to show that, nevertheless, we can well approximate $\wt T_j$ from samples  at these points. In doing so, we give to the best of our knowledge, the first efficient \emph{discrete time off-grid} sparse Fourier transform algorithm, which may be applicable in other settings that involve off-grid frequencies but on grid time samples.

 Informally,  we consider the following sparse Fourier recovery problem:
 \begin{problem}[Off-Grid Sparse DTFT -- Informal]\label{def:problem_definition}
Let $x^*(t) = \sum_{f \in S} a_f \cdot e^{2\pi i f t}$ for $t\in [d]:=\{0,1,\ldots,d-1\}$ and $S \subset [0,1]$ with $|S| = k$. Let $x(t) = x^*(t)+g(t)$, where $g$ is arbitrary noise. Given access to $x(t)$ for $t\in [d]$, approximately recover all $f \in S$ that contribute significantly to $x$. 
\end{problem}
In principal, without noise, one can recover $x^*$ from $2k$ samples at any time domain points -- including at the integers in $[d]$ -- via 
Prony's method \cite{Prony:1795wv}. The key challenge is to recover $x^*$ approximately in the presence of noise. Observe that with noise, it is generally not  possible to identify all frequencies in $S$ given sample access to $x(t)$ only on $[d]$. For example, if two frequencies $f$ and $f'$ are extremely close to each other, their contributions to the function $x$ on $[d]$ could nearly cancel out, making it impossible to identify them. This is true even if the coefficients $a_{f}$ and $a_{f'}$ are arbitrarily large. 

Thus, we will settle for approximate frequency recovery. In particular, we will output a list of frequencies $L$ such that it well approximates a subset of frequencies $S_{heavy}\subseteq S$. Denoting for any function $f:[d]\rightarrow \mathbb{C}$ let $\norm{f}_d^2 = \sum_{i\in [d]}|f(i)|^2$, this subset $S_{heavy}$ spans an approximation to the Fourier sparse function $x^*$ up to error $\lsim \norm{g}_d^2 + \delta \norm{x}_d^2$. To find this approximation algorithmically, we will have to regress onto the list of frequencies $L$, and the final approximation itself will use $\poly(k,\log(d/\delta))$ frequencies. 
In our Toeplitz setting (Theorem \ref{thm:sublinear_time_recovery}), this translates to $\poly(k,\log(d/\delta))$ rank of the final Toeplitz matrix that we output. We state our main approximate frequency recovery primitive below. See Section \ref{sec:dtft_recovery} for a more formal statement.
\begin{lemma}[Approximate Frequency Recovery -- Informal] \label{lem:dtft_freqrecovery} Consider the setting of Problem \ref{def:problem_definition}. Assume that $\|g\|_d^2 \leq c\|x^*\|_d^2$ for a small absolute constant $c$. Then for any given $\delta>0$, there exists an algorithm that in time and sample complexity $\poly(k,\log(d/\delta))$ outputs a list $L$ of $\poly(k,\log(d/\delta))$  frequencies in $[0,1]$ such that, with probability at least $0.99$, letting $$S_{heavy}=\left \{f\in S: \exists f'\in L \text{ s.t. } |f-f'|_{\circ}\leq \frac{\poly(k,\log(d/\delta)}{d}\right \},$$ and letting $x^*_{S_{heavy}}(t) := \sum_{f\in S_{heavy}}a_f e^{2\pi i f t}$, we have the following, $$\|x^*-x^*_{S_{heavy}}\|_d^2 \lsim \|g\|_d^2 +\delta \|x^*\|_d^2.$$ Here $|\cdot|_{\circ}$ is the wrap-around distance. For any $f_1,f_2 \in [0,1]$, $|f_1-f_2|_{\circ} := \min\{|f_1-f_2|,|1-(f_1-f_2)|\}$. 
\end{lemma}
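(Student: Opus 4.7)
The plan is to adapt the continuous-time off-grid sparse Fourier transform of \cite{chen2016fourier}, which assumes sample access to $x^* + g$ on the continuous interval $[0, d]$, to the discrete-time setting in which we can only sample at integer points in $[d]$. The overall pipeline is the standard sparse-Fourier-transform iterative loop: in each round, (i) hash frequencies into $B = \mathrm{poly}(k)$ bins via a \textsc{HashToBins} primitive, (ii) locate the dominant frequency in each single-occupancy bin by phase-based estimation, and (iii) subtract the recovered frequencies from the samples and recurse. After $O(\log k)$ rounds, the union of recovered frequencies serves as the list $L$, and $S_{heavy}$ consists of those $f \in S$ approximated to within $\mathrm{poly}(k,\log(d/\delta))/d$ by some $f' \in L$.

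The main technical novelty is a discrete-time \textsc{HashToBins} primitive. The filter component is straightforward to discretize: I take $G : [d] \to \mathbb{C}$ to be the integer-time restriction of the ``box convolved with Gaussian'' filter from \cite{chen2016fourier} and verify, via a Poisson-summation-type calculation, that its Fourier transform retains a flat passband of width $\approx 1/B$ and sidelobes of size $1/\mathrm{poly}(d)$ for frequencies in $[0,1]$. The delicate piece is the random frequency permutation. In continuous time this is $x(t) \mapsto e^{-2\pi i b t}\, x(t/\sigma)$ for random $\sigma \in [1,2]$ and $b \in [0,1]$, but $x(t/\sigma)$ requires evaluating $x$ at non-integer times. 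I propose to combine (a) a random modulation $x(t) \mapsto e^{-2\pi i b t}x(t)$ for random $b$, which uniformly shifts all frequencies on the torus, with (b) a random integer dilation realized by subsampling the filtered signal at $\{a \cdot s \bmod d : s \in [B]\}$ for $a$ drawn uniformly from units modulo $d$. Together these two operations approximate the continuous affine permutation well enough to isolate heavy frequencies into bins with constant probability. Frequency localization within a single-occupancy bin then follows the usual phase-probing cascade: comparing hashed values at offsets $0$ and $\tau$, with $\tau$ ranging over $1, 2, 4, \ldots, d/2$, and using $O(\log(k/\delta))$ independent repetitions for amplification, giving a frequency estimate accurate to $1/\mathrm{poly}(d)$ with high probability.

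The main obstacle is the analysis of the discrete integer dilation. In continuous time, a uniformly random real $\sigma$ scatters any pair of off-grid frequencies uniformly; in discrete time, the induced map $f \mapsto af \bmod 1$ is only approximately mixing because its action depends on the arithmetic structure of $a$ and the off-grid position of $f$. The needed probabilistic lemma, analogous to the on-grid analysis in \cite{kapralov2016sparse}, is that for any distinct $f, f' \in [0,1]$, a random unit $a \bmod d$ satisfies $\Pr_a[|a(f - f')|_{\circ} \le 1/B] \lsim \mathrm{polylog}(d)/B$; this is a discrepancy-style statement about modular multiplication that should follow from classical exponential-sum bounds of Weyl type, once one accounts for the off-grid offset of $f - f'$ from multiples of $1/d$ (possibly with additional randomization to handle frequencies pathologically close to small-denominator rationals). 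Once this isolation lemma is in hand, a pairwise union bound over the $\binom{k}{2}$ pairs of heavy frequencies shows that each $f \in S$ is the unique heavy frequency in its bin with constant probability, which is the property the rest of the pipeline needs. The final bound $\|x^* - x^*_{S_{heavy}}\|_d^2 \lsim \|g\|_d^2 + \delta \|x^*\|_d^2$ then follows by charging unrecovered heavy frequencies to the noise floor (scaled by $\delta$) and the noise $g$ itself, using that at each iteration a constant fraction of the remaining heavy-frequency energy is captured with constant probability.
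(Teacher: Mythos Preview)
Your proposal has a genuine gap at the heart of the hashing step. The modular subsampling you describe, accessing the filtered signal at points $\{a s \bmod d : s \in [B]\}$, only implements the frequency permutation $f \mapsto a f$ when the signal is $d$-periodic, i.e., when all frequencies are on-grid multiples of $1/d$. For off-grid $f$, the function $t \mapsto e^{2\pi i f t}$ is \emph{not} $d$-periodic, so $x(as \bmod d) \neq x(as)$ whenever $as \ge d$, and the claimed identity $f \mapsto af \bmod 1$ in Fourier domain simply fails. You cannot import the on-grid modular-dilation trick from \cite{hassanieh2012nearly,kapralov2016sparse} to the off-grid setting this way.

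Even if one instead samples at the actual integers $as$ (restricting $s$ so that $as \in [d]$, which forces $a \lesssim d/B$), the isolation lemma you state is not true in general for integer dilations. The paper gives a concrete obstruction: if $|f_1 - f_2|_\circ = 1/2$, then for every even integer $\sigma$ one has $B\sigma|f_1-f_2|_\circ \equiv 0 \pmod B$, so the collision probability is at least $1/2$, not $\lsim 1/B$. Weyl-type equidistribution does not rescue this uniformly over all off-grid gaps $f_1 - f_2$, because the discrepancy of $\{a(f_1-f_2) \bmod 1\}$ over a bounded range of integers $a$ depends on the Diophantine properties of $f_1-f_2$, and can be arbitrarily bad. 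Your parenthetical about ``additional randomization for frequencies close to small-denominator rationals'' is exactly the hard part, and you have not sketched how to carry it out.

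The paper's fix is a two-level reduction that sidesteps the general isolation problem entirely. First, with $\sigma=1$ (no dilation, only a random shift $b$ and filtering), it splits $[0,1]$ into $B$ intervals of width $1/B$, producing what it calls \emph{bounded instances} in which all surviving frequencies lie within a single interval of width $1/B$. Only then does it apply integer dilation, now with $\sigma$ a uniform integer in $[\tfrac{1}{200Bk\Delta},\tfrac{1}{100Bk\Delta}]$, and proves the collision bound (their Lemma~\ref{lem:on_grid_claim_6_4}) only for pairs with $|f_1-f_2| < 1/B$. In that regime, $B\sigma|f_1-f_2|$ moves on a grid of spacing at most $1$ as $\sigma$ varies over integers, and a direct counting argument gives $\Pr[\text{collision}] \lesssim 1/B$. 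This is the key structural idea you are missing: restrict to close-by frequencies first, then integer dilation works. The paper also does not iterate with subtraction; it is a single pass through the two levels, recovering all heavy clusters at once.
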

Lemma \ref{lem:dtft_freqrecovery} mirrors Lemma 7.23 in \cite{chen2016fourier}, but only requires accessing $x(t)$ at integer $t \in [d]$ rather than at real $t\in [0,d]$. 
We now discuss how we adapt the approach of \cite{chen2016fourier} to work in this more restricted sampling setting. 

\smallskip

\noindent \textbf{One cluster recovery.} The approach of \cite{chen2016fourier} first considers the \emph{one cluster} case (see Section \ref{subsection:one_cluster} for a formal definition), where most of the energy of $x$ is concentrated around a single frequency $f_0$. They show how to approximately recover this central frequency of the cluster up to $\poly(k,\log(d/\delta))/d$ error in $\poly(k,\log(d/\delta))$ time. Roughly, since the signal is close to a pure frequency, by considering the ratio $ x(\alpha+\beta)/x(\alpha)$ for carefully chosen sample points $\alpha,\beta$, they show that one can approximate $e^{2\pi i f_0 \beta}$, and in turn $f_0$. In their work, $\alpha$ and $\beta$ may be real valued. We need to modify the approach to restrict them to be integers. We formally do this in Section \ref{subsection:one_cluster}. The key idea follows similar ideas to our other modifications of their approach, discussed below. 

\smallskip

\noindent\textbf{Multi-cluster recovery with bounded support.}
After handling the single cluster case, \cite{chen2016fourier} reduces the general case to it via hashing techniques. In particular, by applying an efficient transformation in time domain, they access a signal whose frequencies are hashed versions of the frequencies in $S$. This approach is standard in the literature on sparse Fourier transform. The hash function spreads out the frequencies in $S$ so that they lie in different frequency ranges (called `buckets') with good probability, and can be recovered using the single cluster recovery primitive. The hash function used in \cite{chen2016fourier} is as follows: let  $\sigma\in \mathbb{R},b\in [0,1]$ and $B$ be the number of buckets. Then define:
\begin{align}
    \pi_{\sigma,b}(f) &= B \sigma (f-b) \mod B \nonumber,\\
    h_{\sigma,b}(f) &= \text{round}(\pi_{\sigma,b}(f))\label{eqn:hash_fn},
\end{align}
where the round$(.)$ function rounds real numbers to the nearest integer. The crucial claim that \cite{chen2016fourier} shows is that if we let $\sigma$ be a uniformly random real number in $\left [\frac{d}{\poly(k,\log(d/\delta))},\frac{2d}{\poly(k,\log(d/\delta))}\right ]$ then we have that for any $f_1,f_2$ such that $|f_1-f_2|\geq \frac{\poly(k,\log(d/\delta))}{d}$,
\begin{equation}\label{eqn:low_collision_prob}
    \Pr [h_{\sigma,b}(f_1)=h_{\sigma,b}(f_2)]=\Pr [B\sigma |f_1-f_2|\in (-1,1) \mod B]\lsim 1/B.
\end{equation}
This ensures that $f_1,f_2$ land in different hash buckets with good probability. I.e., after hashing, these frequencies can be separated in the Fourier domain. In particular,  \cite{chen2016fourier}  applies an efficient filtering approach (see Lemma \ref{lem:hashtobins}) to isolate a cluster of frequencies of width $\poly(k,\log(d/\delta))/d$ around each frequency. If the hash bucket containing this frequency has high SNR, then the frequency can be recovered via single-cluster recovery. 

To implement the above approach in our setting, where we can only access the input at integer time points, we need the random seed $\sigma$ to be a random \emph{integer}. In particular, we let $\sigma$ to be a uniformly random {integer} in $\left [\frac{d}{\poly(k,\log(d/\delta))},\frac{2 d}{\poly(k,\log(d/\delta))}\right ]$. However this restriction severely affects the collision behavior of the hash function $h_{\sigma,b}$.  For example, if $|f_1-f_2|_{\circ}=1/2$, then $B\sigma|f_1-f_2|_{\circ}\mod B=0$ for \emph{every even $\sigma$} and thus  \eqref{eqn:low_collision_prob} cannot hold with any probability less than $1/2$. 

To handle this issue, we  observe that if an instance only contained frequencies within an interval of width $1/B$ then we would have $|f_1-f_2|_{\circ}\leq 1/B$ for any $f_1,f_2$ in the function's support. Then, if $\sigma$ is a uniformly random {integer} in $\left [\frac{d}{\poly(k,\log(d/\delta))},\frac{2 d}{\poly(k,\log(d/\delta))}\right ]$, $B\sigma |f_1-f_2|_{\circ}$ is uniformly distributed on a grid of spacing at most $B|f_1-f_2|_{\circ}\leq 1$. Thus for every $\sigma$ such that $B\sigma |f_1-f_2|_{\circ} \in (-1,1) \mod B$, there are roughly at least $B$ other  values of $\sigma$ for which $B\sigma |f_1-f_2|_{\circ}  \notin (-1,1) \mod B$.  This is enough to show that \eqref{eqn:low_collision_prob} holds.

\smallskip

\noindent\textbf{General multi-cluster recovery.} Of course, a general input instance may have frequencies that do not lie in an interval of width $1/B$. To reduce to this setting, we apply an additional filtering step in the Fourier domain. Our filter splits the interval $[0,1]$ into $B$ intervals of width $1/B$, based on the same filtering approach of Lemma \ref{lem:hashtobins} as discussed before. 
For the complete proof, see Section \ref{subsection:multi_cluster_freq_recovery}. This completes our proof sketch for our discrete time off-grid sparse Fourier transform primitive (Lemma \ref{lem:dtft_freqrecovery}).
\subsection{Other related work.}\label{sec:prior} 
A large body of work in numerical linear algebra, applied mathematics, theoretical computer science, and signal processing  has studied the problem of computing low-rank approximations of Toeplitz and Hankel matrices \cite{Luk:1996ur,Park:1999ta,Ishteva:2014wm,Cai:2016wh,Ongie:2017us,Krim:1996wm,Shi:2019ud}. In many applied signal processing settings, Toeplitz matrices arise as PSD covariance matrices, motivating our focus on the PSD case.
Significant prior work has also studied the problem of computing an optimal \emph{structure-preserving} Toeplitz low-rank approximation, where the low-rank approximation $\wt T$ is itself required to be Toeplitz. However, no simple characterization of the optimal solution to this problem is known \cite{chu2003structured} and polynomial time algorithms  are only known in the special cases of $k= 1$ and $k = d-1$ \cite{chu2003structured,Knirsch:2021ve}. On the practical side a range of heuristics are known, based on techniques such as convex relaxation \cite{Fazel:2013vw,Cai:2016wh,Ongie:2017us}, alternating minimization \cite{chu2003structured,Wen:2020ub}, and sparse Fourier transform \cite{Krim:1996wm}. 


Many results in the signal processing literature study sublinear query algorithms for Toeplitz matrices, these are referred to as \emph{sparse array methods} which proceed by querying a small principal submatrix to obtain an approximation to the whole matrix \cite{Abramovich:1999vs,Chen:2015wz,Qiao:2017tp,Lawrence:2020ut}. The work of \cite{eldar2020toeplitz} and \cite{kapralov2022toeplitz} on sublinear query PSD Toeplitz low-rank approximation and Toeplitz covariance estimation is closely related to this literature. 
%
%
%
%
%
%
%
%

Beyond Toeplitz matrices, significant  work has focused on  sublinear time low-rank approximation algorithms for other structured matrix classes. This includes positive semidefinite matrices \cite{musco2017recursive,Bakshi:2020tl}, distance matrices \cite{Bakshi:2018ul,Indyk:2019vy}, and kernel matrices \cite{musco2017recursive,Yasuda:2019vf,Ahle:2020vj}.



\section{Notation and preliminaries} 

In this section, we introduce notation and preliminary concepts that are used throughout this paper.

\label{sec:preliminaries}
\subsection{General and linear algebraic notation.}
Consider functions $f:\mathcal{X} \rightarrow \R$ and $g:\mathcal{X} \rightarrow \R$ for input domain $\mathcal{X}$. We write $f(\cdot )\lsim g(\cdot)$ if there exists a constant $C>0$ such that $f(x) \leq C g(x)$ for all $x\in \mathcal{X}$. For any integer $d > 0$, let $[d] = \{0,1,\ldots,d-1\}$. 
For any function $x:\Z \rightarrow \mathbb{C}$, and integer $d>0$, we let $\|x\|_{d}^2 := \sum_{j\in [d]} |x(j)|^2$. 
For any set $N$, let $N^n$ denote the set of all subsets of $N$ with $n$ elements.

For a matrix $A$, let $A^{T}$ and $A^*$ denote its transpose and Hermitian transpose, respectively. Let $A_{[i,j]}$ denote the $i,j$ entry of $A$ and for $i_1 < j_1, i_2 < j_2$, let $A_{[i_1:j_1,i_2:j_2]}$ denote the submatrix containing entries from rows $i_1$ to $j_1$ and columns $i_2$ to $j_2$. For any vector $x\in \mathbb{C}^d$, let $\|x\|_2 = \sqrt{x^*x}$ denote its $\ell_2$ norm. For a matrix $A\in \mathbb{C}^{d\times d}$, let $\|A\|_2 = \sup_{x\in \mathbb{C}^d} \|Ax\|_2/\|x\|_2$ denote its spectral norm and $\|A\|_F=\sqrt{\sum_{i,j \in [d]} |A_{[i, j]}|^2}$ denote its Frobenius norm.

A Hermitian matrix $A\in \mathbb{C}^{d\times d}$ is positive semidefinite (PSD) if for all $x\in \mathbb{C}^d$, $x^*Ax \geq 0$. Let $\lambda_{1}(A)\geq \ldots \geq \lambda_{d}(A)\geq 0$ denote its eigenvalues. Let $\preceq$ denote the Loewner ordering, that is $A\preceq B$ if and only if $B-A$ is PSD. Let $A= U\Sigma V^*$ denote the compact singular value decomposition of $A$, and when $A$ is PSD note that $U \Sigma U^*$ is its eigenvalue decomposition. In this case, let $A^{1/2} = U\Sigma^{1/2}$ denote its matrix square root, where $\Sigma^{1/2}$ is obtained by taking the elementwise square root of $\Sigma$. Let $A_k = U_k \Sigma_k V_k^{*}$ denote the projection of $A$ onto its top $k$ singular vectors. Here, $\Sigma_k\in \mathbb{R}^{k\times k}$ is the diagonal matrix containing the $k$ largest singular values of $A$, and $U_k, V_k \in \mathbb{C}^{d\times k}$ denote the corresponding $k$ left and right singular vectors of $A$. Note that $A_k$ is the optimal rank $k$ approximation to $A$ in the spectral and Frobenius norms. That is, $A_{k} = \argmin_{B: \rank(B) \le k} \|A-B\|_2$ and $A_{k} =  \argmin_{B: \rank(B) \le k} \|A-B\|_F$. 

We let $\ast$ denote the convolution operator -- in both the discrete and continuous settings. 
For discrete functions $x,y:\mathbb{Z}\to \mathbb{C}$, we have $[x\ast y] (n)= \sum_{k\in \mathbb{Z}}x(k)y(n-k)$. For continuous functions ${x},{y}:[0,1]\to \mathbb{C}$, we have $[{x}\ast {y}](f) = \int_{0}^{1}{x}(f'){y}(f-f')df'$. 
For any function $f:\mathcal{D}\to \mathbb{C}$ defined over domain $\mathcal{D}$ (e.g., $\mathcal{D}$ can be $\mathbb{Z},\mathbb{R}$ etc.), we let $supp(f)$ denote its support. That is, $supp(f)=\{x\in \mathcal{D}:|f(x)|> 0\}$. 
\subsection{Fourier analytic notation.}

Throughout, we  use the standard discrete-time Fourier transform.
\begin{definition}[Discrete-time Fourier transform (DTFT)]\label{def:dtft}
		For $x: \Z \to \mathbb{C}$, its DTFT $\wh{x}(f): [0, 1] \to \mathbb{C}$ is defined as follows,
		\[
		\wh{x}(f) =  \sum_{n \in \Z} x(n) e^{-2 \pi i f n}.
		\]
		The inverse DTFT is defined as follows,
		\[
		x(n)  = \int_{0}^{1} \wh{x}(f) e^{2 \pi i f n}  df.
		\]
  \end{definition}

The discrete time version of Parseval's theorem allows us to relate the energy in time and Fourier domains. 

	\begin{lemma}[Parseval's identity]\label{lem:parseval_dtft}
		For $x:\Z \rightarrow \mathbb{C}$ with DTFT $\wh x:[0,1]\rightarrow \mathbb{C}$, we have:
		\begin{equation*}
			\sum_{t=-\infty}^{\infty} |x(t)|^2 = \int_{0}^{1} |\wh x(f)|^2 df .
		\end{equation*}
	\end{lemma}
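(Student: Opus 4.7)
The plan is to prove Parseval's identity by directly expanding the right-hand side using the definition of the DTFT and applying the standard orthogonality relation for complex exponentials on $[0,1]$. Concretely, I would start with
\[
\int_0^1 |\wh x(f)|^2\, df = \int_0^1 \wh x(f)\, \overline{\wh x(f)}\, df
\]
and substitute the defining series $\wh x(f) = \sum_{n\in\Z} x(n) e^{-2\pi i f n}$ together with its complex conjugate $\overline{\wh x(f)} = \sum_{m\in\Z} \overline{x(m)} e^{2\pi i f m}$.

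Next, I would interchange the order of summation and integration to obtain
\[
\int_0^1 |\wh x(f)|^2\, df = \sum_{n\in\Z}\sum_{m\in\Z} x(n)\, \overline{x(m)} \int_0^1 e^{2\pi i f (m - n)}\, df.
\]
The inner integral is the key identity: $\int_0^1 e^{2\pi i f (m-n)} df$ equals $1$ when $m = n$ and $0$ otherwise, since for any nonzero integer $\ell$ the antiderivative $\frac{1}{2\pi i \ell}e^{2\pi i f \ell}$ evaluates to the same value at $f = 0$ and $f = 1$. Plugging this in collapses the double sum to $\sum_{n \in \Z} x(n)\overline{x(n)} = \sum_{n\in\Z} |x(n)|^2$, which is exactly the desired left-hand side.

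The only step that is not purely formal is justifying the swap of summation and integration. The cleanest way I would handle this is to assume $x\in \ell^2(\Z)$ (which is the meaningful regime for the identity anyway, since otherwise both sides are $+\infty$) and invoke the standard fact that the DTFT extends to an isometry between $\ell^2(\Z)$ and $L^2([0,1])$; equivalently, I would first prove the identity for finitely supported $x$ (where Fubini applies trivially) and then pass to the $\ell^2$ limit by a density argument, using continuity of the inner product on both sides. This density argument is the only real subtlety; the algebraic computation itself is routine.

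In our usage, $x$ will always be either compactly supported (so that only finitely many $x(n)$ are nonzero) or rapidly decaying, so Fubini applies without any further care and the density argument above is not even required. I would therefore state and prove the identity under the mild assumption $x \in \ell^1(\Z) \cup \ell^2(\Z)$, which is sufficient for all subsequent applications in the paper.
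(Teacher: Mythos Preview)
Your proposal is correct and gives the standard textbook proof of Parseval's identity for the DTFT. The paper itself does not prove this lemma; it is stated as a well-known fact without proof, so there is nothing to compare against beyond noting that your argument is the canonical one.
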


We  define the wrap around distance between two frequencies as follows.
\begin{definition}[Wrap around distance]\label{wraparounddefn}
     For any $f_1,f_2\in [0,1]$, let $|f_1-f_2|_{\circ}=\min\{|f_1-f_2|,|1-(f_1-f_2)|\}$. 

\end{definition}

\subsection{Compact filter functions.}
Our algorithm will use a function $H:\mathbb{Z}\to \mathbb{R}$ which approximates the indicator function of an interval $[d]$ by preserving energies of $k$-Fourier sparse functions on this interval and almost killing off their energy outside this interval. At the same time, the support of this function in the Fourier domain is  compact. Formally we have the following,
\begin{lemma}[Discrete version of Lemma 6.6 of \cite{chen2016fourier}] \label{lem:H_discrete_properties}
	Given positive integers $d,k$ and real $\delta>0$, let $s_0=\poly(k,\log(d/\delta)),s_1 = \poly(k,\log(d/\delta)), s_3 = 1-1/\poly(k,\log(d/\delta))$, and  $l = \Theta(k\log(k/\delta))$, there is a function $H:\mathbb{Z}\rightarrow \mathbb{R}$ with DTFT $\wh{H}:[0,1]\rightarrow \mathbb{R}$ having the following properties,
	\begin{align*}
		\textnormal{Property I:}& \quad H(t) \in [1 - \delta, 1] \quad \forall t \in \mathbb{Z}: |t-d/2| \leq d \left (\frac{1}{2} - \frac{2}{s_1} \right ) s_3,\\
		\textnormal{Property II:}& \quad H(t) \in [0, 1] \quad \forall t \in \mathbb{Z}: d\left(\frac{1}{2} - \frac{2}{s_1}\right) s_3 \leq |t-d/2| \leq \frac{d}{2} s_3,\\
		\textnormal{Property III:}& \quad H(t) \leq s_0 \cdot \left(s_1\left(\frac{|t-d/2|}{ds_3} - \frac{1}{2}\right) + 2\right)^{-l} \quad \forall t \in \mathbb{Z}: |t-d/2| \geq \frac{d}{2} s_3,\\
		\textnormal{Property IV:}& \quad supp(\wh{H}) \subseteq \left [-\frac{s_1 l}{2d s_3}, \frac{s_1 l}{2 ds_3} \right ] \text{ and so } \Delta_h := |supp(\wh{H})| = s_1 l/(ds_3)= \poly(k,\log(d/\delta))/d.
	\end{align*}
	For any exact $k$-Fourier-sparse signal $x^*(t)$ we additionally have the following,
	\begin{align*}
		\textnormal{Property V:}& \quad \sum_{t \in \mathbb{Z}\setminus [d]} |x^*(t) \cdot H(t) |^2 \leq \delta \|x^*\|_d^2,\\
		\textnormal{Property VI:}& \quad \|x^*\cdot H\|_d^2 \in [0.99\|x^*\|_d^2, \|x^*\|_d^2] .
	\end{align*}
\end{lemma}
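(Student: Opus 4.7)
The plan is to adapt the continuous-time filter construction of Lemma 6.6 in \cite{chen2016fourier} to the integer grid. Let $G:\mathbb{R}\to \mathbb{R}$ denote the continuous filter produced there, which satisfies the obvious continuous analogues of Properties I--VI (with sums replaced by Lebesgue integrals and $\|\cdot\|_d$ by $\|\cdot\|_{L^2[0,d]}$). I would define $H(t)=G(t)$ for $t\in \mathbb{Z}$, i.e., the pointwise restriction of $G$ to the integers. With this choice, Properties I, II, and III are purely pointwise statements about the filter values and so transfer verbatim from $G$ to its integer samples.

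Property IV is the place where the continuous-to-discrete transfer requires an explicit argument. By the Poisson summation formula, the DTFT of the integer samples of $G$ is the $1$-periodization of the continuous Fourier transform $\widehat G$:
\[
\widehat H(f) \;=\; \sum_{m\in \mathbb{Z}} \widehat G(f+m).
\]
Since $\widehat G$ is supported on an interval of width $\Delta_h = s_1 l/(d s_3)=\poly(k,\log(d/\delta))/d$, which is strictly less than $1$ for $d$ sufficiently large, no aliasing occurs and the sum collapses to a single term on the natural fundamental domain. Identifying $[0,1]$ with $[-1/2,1/2]$ by wraparound then yields $\mathrm{supp}(\widehat H)\subseteq [-s_1 l/(2 d s_3), s_1 l/(2 d s_3)]$ as claimed.

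The remaining two properties concern how $H$ interacts with $k$-Fourier-sparse signals. Property VI is the cleaner of the two: the upper bound is immediate from $H\le 1$, while for the lower bound one uses that Property I gives $H\ge 1-\delta$ on all of $[d]$ except a boundary strip of width $O(d/s_1)$, combined with a standard flatness / Vandermonde-conditioning bound for $k$-Fourier-sparse signals (of the type used in \cite{chen2016fourier}) showing that such a signal cannot place more than a $\poly(k)/s_1$ fraction of its discrete $\ell_2$-energy in such a strip; picking $s_1$ polynomially large in $k$ then closes the gap to $0.99$.

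The main obstacle is Property V. I would combine three ingredients: (a) a pointwise bound of the form $|x^*(t)|^2 \lsim \poly(k,\log d, |t|/d)\cdot \|x^*\|_d^2$ valid for all integer $t\in \mathbb{Z}\setminus [d]$, obtained by expanding $x^*$ in its $k$-term Fourier representation and bounding the coefficient $\ell_1$ norm via Vandermonde conditioning on any length-$d$ window of integers; (b) the super-polynomial decay of $H$ outside $[d]$ from Property III, whose exponent $l=\Theta(k\log(k/\delta))$ is chosen precisely to dominate any fixed polynomial growth in $|t|/d$; and (c) a geometric summation over dyadic shells $\{t\in \mathbb{Z}:|t-d/2|\asymp 2^j d\}$ to convert the pointwise product into a convergent series. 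The choice of $l$ ensures each shell contributes a term geometrically smaller than the previous one, and the whole tail sum is bounded by $\delta \|x^*\|_d^2$. Verifying this delicate balance between the polynomial growth of $x^*$ and the super-polynomial decay of $H$ is the technical heart of the argument and the only point at which the discrete-versus-continuous distinction introduces anything beyond what is already present in \cite{chen2016fourier}.
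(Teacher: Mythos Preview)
Your proposal is correct and matches the paper's approach: restrict the continuous filter of \cite{chen2016fourier} to the integers, observe that Properties I--III are pointwise and Property IV follows from the narrow Fourier support preventing aliasing, and handle Properties V--VI by combining the decay of $H$ with growth bounds on $k$-Fourier-sparse signals. The paper is slightly more concrete about the latter, quoting the specific discrete bounds $|x^*(i)|^2 \lsim k^6\log^3(k)\,\|x^*\|_d^2/d$ for $i\in[d]$ (from Lemma~C.1 of \cite{eldar2020toeplitz}) and $|x^*(i)|^2 \lsim k^{13}(ki/d)^{2.2k}\,\|x^*\|_d^2/d$ for $i\notin[d]$ (from the proof of Lemma~5.5 of \cite{chen2016fourier}), rather than rederiving them via Vandermonde conditioning; note in particular that the off-$[d]$ growth is degree-$O(k)$ in $|t|/d$ (not a fixed polynomial), which is exactly why $l=\Theta(k\log(k/\delta))$ is needed.
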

The first four properties guarantee that $H$ approximates the indicator function of the interval $[d]$ and has a compact support in Fourier domain. The final two properties capture the fact that multiplying $H$ by any $k$-Fourier sparse function almost kills of its energy outside $[d]$ and almost preserves its energy inside $[d]$. 
Lemma \ref{lem:H_discrete_properties} follows by modifying the proof of Lemma 6.6 of \cite{chen2016fourier}, which gives an analogous filter function in the continuous domain. We state Lemma 6.6 below, followed by the proof of Lemma \ref{lem:H_discrete_properties}.
\begin{lemma}[Lemma 6.6 of \cite{chen2016fourier}]\label{lem:H_continuous}
    Given positive integers $d,k$ and real $\delta>0$, let $s_0=\poly(k,\log(d/\delta)),s_1 = \poly(k,\log(d/\delta)), s_3 = 1-1/\poly(k,\log(d/\delta))$, and  $l = \Theta(k\log(k/\delta))$, there is a function $H':\mathbb{R}\rightarrow \mathbb{R}$ with continuous Fourier transform $\wh{H'}:\mathbb{R}\rightarrow \mathbb{R}$ having the following properties,
	\begin{align*}
		\textnormal{Property I:}& \quad H'(t) \in [1 - \delta, 1] \quad \forall t \in \mathbb{R}: |t-d/2| \leq d \left (\frac{1}{2} - \frac{2}{s_1} \right ) s_3,\\
		\textnormal{Property II:}& \quad H'(t) \in [0, 1] \quad \forall t \in \mathbb{R}: d\left(\frac{1}{2} - \frac{2}{s_1}\right) s_3 \leq |t-d/2| \leq \frac{d}{2} s_3,\\
		\textnormal{Property III:}& \quad H'(t) \leq s_0 \cdot \left(s_1\left(\frac{|t-d/2|}{ds_3} - \frac{1}{2}\right) + 2\right)^{-l} \quad \forall t \in \mathbb{R}: |t-d/2| \geq \frac{d}{2} s_3,\\
		\textnormal{Property IV:}& \quad supp(\wh{H'}) \subseteq \left [-\frac{s_1 l}{2d s_3}, \frac{s_1 l}{2 ds_3} \right ] \text{ and so } \Delta_h := |supp(\wh{H'})| = s_1 l/(ds_3)= \poly(k,\log(d/\delta))/d.
	\end{align*}
\end{lemma}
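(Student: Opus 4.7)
The plan is to construct $H'$ as the convolution of a rectangular window with a nonnegative, Fourier-compactly-supported bump, following the standard ``flat filter'' paradigm in the sparse Fourier transform literature (Hassanieh--Indyk--Katabi--Price; Chen--Kane). I will build $G : \mathbb{R} \to \mathbb{R}_{\ge 0}$ satisfying
\begin{itemize}[noitemsep]
    \item[(a)] $\mathrm{supp}(\wh G) \subseteq [-s_1 l/(2 d s_3),\, s_1 l/(2 d s_3)]$ and $\int_{\mathbb R} G = 1$;
    \item[(b)] concentration: $\int_{|t| \ge d/s_1} G(t)\, dt \le \delta$;
    \item[(c)] polynomial envelope: $G(t) \lsim s_0' \cdot \bigl( s_1 |t|/(d s_3) \bigr)^{-l}$ for $|t| \gsim d s_3/s_1$, with $s_0' = \poly(k,\log(d/\delta))$.
\end{itemize}
One natural candidate is $G(t) \propto \psi(t)^l$ with $\psi(t) := a\,\mathrm{sinc}^2(\pi a t)$ a Fejer kernel of bandwidth $a = s_1/(2 d s_3)$: then $\wh G = \wh\psi^{\ast l}$ is automatically supported in $[-la, la] = [-s_1 l/(2 d s_3),\, s_1 l/(2 d s_3)]$, and the time-domain bound $\mathrm{sinc}^2(x) \le x^{-2}$ combined with $l = \Theta(k \log(k/\delta))$ gives (b) and (c) after normalization. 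Then set $B(t) := \mathbf{1}_{[-T_1, T_1]}(t)$ for a suitable $T_1 \in \bigl(d s_3(1/2 - 2/s_1),\, d s_3/2\bigr)$ chosen to leave enough margin on each side for the tail bound (b), and define $H'(t) := (B \ast G)(t - d/2)$.

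Properties~II and~IV are then immediate: Property~IV because $\wh{H'}(f) = \wh B(f)\, \wh G(f)\, e^{-\pi i f d}$ inherits the support of $\wh G$, and Property~II because $B \in \{0,1\}$, $G \ge 0$, and $\int G = 1$ force $H'(t) \in [0,1]$ for every $t$. For Property~I with $|t - d/2| \le d s_3 (1/2 - 2/s_1)$, the substitution $u = s - (t - d/2)$ shows that the integration region contains an interval around $0$ of half-width at least $d/s_1$, so
\[
H'(t) \;\ge\; 1 - 2\!\int_{|u| \ge d/s_1}\! G(u)\, du \;\ge\; 1 - \delta
\]
by (b). For Property~III with $|t-d/2| \ge d s_3/2$, every $s \in [-T_1, T_1]$ satisfies $|t - d/2 - s| \ge |t-d/2| - T_1$, so the envelope (c) yields
\[
H'(t) \;\le\; 2 T_1 \cdot\!\! \sup_{|u|\ge|t-d/2|-T_1}\!\! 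G(u) \;\lsim\; s_0 \bigl(s_1(|t-d/2|/(d s_3) - 1/2) + 2\bigr)^{-l},
\]
with $s_0 = \poly(k,\log(d/\delta))$ absorbing the multiplicative constants and the additive shift coming from $d s_3/2 - T_1$.

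The main obstacle is the bookkeeping of the three free parameters $s_0, s_1, s_3$ (all $\poly(k,\log(d/\delta))$ with $s_3 = 1 - 1/\poly$) and $l = \Theta(k \log(k/\delta))$: one must verify that the plateau width $d s_3 (1 - 4/s_1)$, the transition band of width $\approx 2 d/s_1$, and the polynomial tail in Property~III all line up with the stated expressions, and that a Fejer-power bump (or, if needed, a slightly sharper Dolph--Chebyshev or prolate window) with the prescribed bandwidth really does produce the ``$+2$'' appearing in the final formula. This parameter tuning is exactly what is carried out in Lemma~6.6 of~\cite{chen2016fourier}; our plan is to invoke that construction directly.
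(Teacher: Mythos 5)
Your construction (rectangular window convolved with a normalized power of a Fejér/sinc kernel of bandwidth $s_1/(2ds_3)$, then shifted to be centered at $d/2$) is exactly the construction underlying Lemma~6.6 of \cite{chen2016fourier}, which is all the paper itself relies on here — the lemma is imported by citation and not reproved. The proposal is correct, and the parameter bookkeeping you defer (in particular the ``$+2$'' in Property~III, which needs the sharper pointwise bound $\mathrm{sinc}^2(x)\le\min(1,x^{-2})$ near $x=\pi/2$ rather than the pure polynomial envelope) is precisely what the cited lemma carries out.
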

\begin{proof}[Proof of Lemma \ref{lem:H_discrete_properties}]
Consider the function $H':\mathbb{R}\to \mathbb{R}$ given in  Lemma \ref{lem:H_continuous}. 
To obtain our function $H$, we discretize $H'$ by restricting it to the integers. The first three time domain properties of our function thus follow directly from the first three properties of Lemma \ref{lem:H_continuous}. 
Discretizing $H'$ in time domain to obtain $H$ results in aliasing of $\wh{H'}$ in Fourier domain to obtain $\wh{H}$. However, by Property IV of Lemma \ref{lem:H_continuous}, we know that the support of $\wh{H'}$  is contained in $[-1/2,1/2]$ assuming $d\gg \poly(k)$. 
Thus $\wh{H}(f)=\wh{H'}(f)$ for all $f\in [-1/2,1/2]$, thus implying that property 4 of Lemma \ref{lem:H_discrete_properties} follows from property 4 of Lemma \ref{lem:H_continuous}.

Finally, to prove properties 5 and 6, we must use discrete versions of Lemmas 5.1 and 5.5 of \cite{chen2016fourier}, which are used to prove Properties of 5 and 6 of Lemma 6.6 in \cite{chen2016fourier}. 
The discrete version of Lemma 5.1 was shown in Lemma C.1 of 
\cite{eldar2020toeplitz}, 
and establishes that for any $k$-Fourier sparse $x^*:\mathbb{Z}\to \mathbb{C}$, $\forall i\in [d]$, $|x^* (i)|^2 \lsim k^6 \log^3(k)(\|x^*\|_d^2/d) $. It is also fairly easy to obtain the following bound from inspecting the proof of Lemma 5.5 in \cite{chen2016fourier}: for any $k$-Fourier sparse $x^*:\mathbb{Z}\to \mathbb{C}$ $\forall i\in \mathbb{Z}\setminus [d]$, $|x^* (i)|^2 \lsim k^{13} (ki/d)^{2.2 k}(\|x^*\|_d^2/d)$. 
Using these bounds in the proof of Properties 5 and 6 of Lemma 6.6 of \cite{chen2016fourier} and replacing integrals with sums, we obtain Properties 5 and 6 of Lemma \ref{lem:H_discrete_properties}.
\end{proof}
We will also need the following filter function from \cite{chen2016fourier}, which when convolved with, allows us to access the input signal whose Fourier transform is restricted to a desired interval.

\begin{lemma}[Lemma 6.7 of \cite{chen2016fourier}] \label{lem:G_discrete_properties}
	Given $B>1$, $\delta,k,w >0$, let  $l=O(k\log(k/\delta))$. 
 Then there exists a function  $G:\mathbb{R}\rightarrow \mathbb{C}$ with continous Fourier transform $\wh{G}:\mathbb{R}\rightarrow\mathbb{C}$ satisfying the following, 
	\begin{align*}
		\textnormal{Property I:}& \quad \wh{G}(f) \in [1 - \delta/k, 1] \quad if |f| \leq (1-w)/2B.
		\\
		\textnormal{Property II:}& \quad \wh{G}(f) \in [0,1] \quad if (1-w)/2B\leq|f| \leq 1/2B.\\
		\textnormal{Property III:}& \quad \wh{G}(f) \in [-\delta/k,\delta/k] \quad if |f| \geq 1/2B.\\
		\textnormal{Property IV:}& \quad\text{supp}(G(t))\subset \left[\frac{-lB}{w},\frac{lB}{w}\right].\\
		\textnormal{Property V:}& \quad\max(G(t))\lsim \poly(B,l).
	\end{align*}
\end{lemma}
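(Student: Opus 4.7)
Since this lemma is quoted verbatim as Lemma 6.7 of \cite{chen2016fourier}, one legitimate ``proof'' is simply to cite that result. If one wanted to reconstruct the argument, the plan would be to build $G$ as a pointwise product in time, $G(t) = \psi(t) \cdot D(t)$, where $D$ has continuous Fourier transform equal to the indicator $\mathbf{1}_{I}$ of a slightly shrunken interval $I$ strictly inside $[-1/2B, 1/2B]$, and $\psi$ is a smooth, compactly supported window whose Fourier transform $\wh{\psi}$ is a non-negative, unit-mass bump concentrated in a narrow interval of width $\Theta(w/B)$ around zero. Multiplying in time corresponds to convolving in Fourier, so $\wh{G} = \wh{\psi} \ast \mathbf{1}_{I}$ is a smoothed box, which is exactly the shape that Properties I--III describe.

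First I would take $\psi$ to be a scaled $B$-spline of order $l$, i.e., the $l$-fold self-convolution of a scaled indicator, so that $\mathrm{supp}(\psi) \subseteq [-lB/w, lB/w]$ (immediately yielding Property IV, since multiplying by the everywhere-finite $D$ cannot enlarge support) and $\wh{\psi}$ is a rescaled version of $(\mathrm{sinc}(\pi f))^l$ normalized to unit total mass. The Fourier tails then decay like a polynomial of degree $l$, so with $l = \Theta(k\log(k/\delta))$ the tail mass of $\wh{\psi}$ outside its intended narrow support is at most $\delta/k$. Next I would take $D(t) = \sin(\pi t/B')/(\pi t)$ with $B'$ chosen so that $\wh{D} = \mathbf{1}_{I}$ for $I$ of width slightly less than $1/B$, leaving room on each side for the bump of $\wh{\psi}$ to complete its transition before the hard cutoffs at $\pm 1/2B$.

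Properties I--III then follow from analyzing $\wh{G}(f) = \int_{I} \wh{\psi}(f - u)\, du$ in three regimes. For $|f| \leq (1-w)/2B$, the window of integration contains the entire bump of $\wh{\psi}$, so all but $\delta/k$ of its mass is collected, giving value in $[1-\delta/k, 1]$. For $(1-w)/2B \leq |f| \leq 1/2B$, the integral is a partial sum of a non-negative unit-mass kernel and hence lies in $[0,1]$. For $|f| \geq 1/2B$, the window lies entirely outside the bump of $\wh{\psi}$, so only the polynomial tails contribute and $|\wh{G}(f)| \leq \delta/k$. Property V follows from $|G(t)| \leq \|\psi\|_\infty \cdot \|D\|_\infty$ with direct $\poly(B,l)$ bounds on each factor.

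The main obstacle is calibrating the parameters so that a single $l = \Theta(k\log(k/\delta))$ simultaneously drives the Fourier tails of $\wh{\psi}$ below $\delta/k$ outside a window of width $\Theta(w/B)$, keeps the time-domain support of $\psi$ to $O(lB/w)$, and places the transition region precisely between $(1-w)/2B$ and $1/2B$. The $B$-spline choice threads this needle because its Fourier transform is an $l$-th power of a $\mathrm{sinc}$ of bandwidth $\Theta(w/B)$, and the required tail estimate reduces to bounding $|\mathrm{sinc}|^l$, which is the only mildly delicate computation in the argument.
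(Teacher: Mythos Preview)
The paper does not prove this lemma at all; it is quoted directly as Lemma 6.7 of \cite{chen2016fourier} and used as a black box. Your first sentence---that a legitimate proof here is simply to cite the result---is therefore exactly what the paper does, and your optional reconstruction via a smoothed box (sinc-type $D$ times a $B$-spline window $\psi$) is the standard construction underlying that reference.
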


\subsection{Structure preserving Toeplitz low-rank approximation.}\label{finalproof}
We next formally state the main result of \cite{kapralov2022toeplitz} which shows that for any PSD Toeplitz matrix, there exists a near optimal low-rank approximation in the Frobenius norm which itself is Toeplitz. We will use this fact in the proof of Theorem \ref{thm:sublinear_time_recovery} by interpreting the input PSD Toeplitz matrix as a noisy version of the near optimal Toeplitz low-rank approximation, further corrupted by noise $E$. 
\begin{reptheorem}{frobeniusexistence}
	Given PSD Toeplitz matrix $T\in \mathbb{R}^{d\times d}$, $\eps,\delta \in (0,1)$, and an integer rank $k \le d$, let $r_1=\wt O(k/\eps)$ and $r_2 = \wt O(\log(1/\delta))$. There exists a symmetric Toeplitz matrix $\wt{T} = F_{S}DF_{S}^{*}$ of rank $r=2r_1r_2=\wt{O}(k\log(1/\delta)/\epsilon)$ such that,
	\begin{enumerate}
		\item $\|T-\wt{T}\|_F \leq (1+\eps)\|T-T_k\|_F + \delta\|T\|_F$.
		\item \label{thm:existence_freq_grid}$F_{S} \in \R^{d \times r}$ and $D\in \R^{r \times r}$ are Fourier (Def. \ref{def:symmetric_fourier_matrix}) and diagonal matrices respectively. The set of frequencies $S$ can be partitioned into $r_1$ sets $\wt{S}_{1},\ldots, \wt{S}_{r_1}$ where each $\wt{S}_i$ is as follows,
		\begin{equation*}
			\wt{S}_i = \underset{1\leq j\leq r_2}{\bigcup}\{f_i+\gamma j,f_i-\gamma j\},
		\end{equation*}
		where  $f_i\in \{1/2d,3/2d,\ldots, 1-1/2d\}$  for all $i\in [r_1]$, and $\gamma = \delta /(2^{C \log^7d})$ for fixed constant $C>0$.
  \item Let $-\wt S_i = \{1-f| \forall f\in \wt S_i\}$, then $-\wt S_i \subseteq S$ for all $i\in [r_1]$.
  \item Let $S_i =-\wt S_i\cup \wt S_i$ and $D_i$ contain the corresponding entries of $D$ for every $i\in [r_1]$. We have that for any subset $S'$ of $\cup_{i\in [r_1]}\{S_i\}$ 
  , if we let $\wt T' = \sum_{S_i\in S'}F_{S_i}D_iF_{S_i}^*$, then there exists a PSD Toeplitz matrix $T'$ such that $\|\wt T' - T'\|_F\leq \delta \|T'\|_F$.
	\end{enumerate}
\end{reptheorem}
Points 2,3 and 4 of the previous Lemma provide essential structural properties of the near optimal Toeplitz low-rank approximation crucial for our approach. Point 2 will be used in our sublinear time approximate regression primitive by essentially providing a finite set inside $[0,1]$ where possible frequencies of the near optimal Toeplitz low-rank approximation could lie. Point 3,4 implies in particular that $\wt T$ is almost PSD, which is an important property used to show that a random column of $\wt T$ must have similar signal-to-noise ratio as the full matrix and thus a sparse Fourier transform of a random column of $\wt T$ yields good frequencies for approximating $\wt T$. Point 4 in fact implies a more fine grained notion, which is useful for our proof that the recovered heavy frequencies suffice to approximate $\wt T$. 

\section{Discrete Time Off-Grid Sparse Fourier Recovery}\label{sec:dtft_recovery}
In this section we prove our main discrete time off-grid Fourier recovery result of Lemma \ref{lem:dtft_freqrecovery}, which was outlined in Section \ref{subsec:intro_sparsefft}. We re-state the complete version of the lemma here.
\begin{replemma}{lem:dtft_freqrecovery}
Consider $x,x^*,g:\mathbb{Z}\to \mathbb{R}$, where $x^*(t) = \sum_{f\in S} a_f e^{2\pi i f t}$ for $S\subseteq [0,1]$ with $|S|\leq k$, $g(t)$ is arbitrary noise, and $x(t) = x^*(t)+g(t)$. Assume that we can access $x(t)$ for $t\in [d]$ and that $\|g\|_d^2 \leq c\|x^*\|_d^2$ for a small absolute constant $c>0$. Let $\delta > 0$ be an error parameter, and let $\mathcal{N}^2 = \frac{1}{d}(\|g\|_d^2 +\delta \|x^*\|_d^2)$ be the noise threshold. 
Let $\Delta=\poly(k,\log(d/\delta))/d$ such that $\Delta \geq k \cdot |supp(\wh{H}(f))|$, where $H(t)$ (with Fourier transform $\wh H(f)$) is the function guaranteed to exist by Lemma \ref{lem:H_discrete_properties} for parameters $k,\delta$.
Then in time and sample complexity $\poly(k,\log(d/\delta))$ one can find a list $L$ of $\poly(k,\log(d/\delta))$ frequencies that satisfies the following with probability at least $0.99$: 
Let $$S_{heavy}=\{f\in S: \exists f'\in L \text{ s.t. } |f-f'|_{\circ}\lsim k\Delta \sqrt{k\Delta d}\},$$ and let $x^*_{S_{heavy}}(t)=\sum_{f\in S_{heavy}}a_fe^{2\pi i f t}$. Then the following holds, $$\|x^*-x^*_{S_{heavy}}\|_d^2 \lsim d\mathcal{N}^2.$$ 
\end{replemma}
In the subsequent subsections we state intermediary claims and their proofs building up to the proof of Lemma \ref{lem:dtft_freqrecovery}. In Section \ref{subsection:one_cluster} we first consider the case when all frequencies of the Fourier sparse function $x^*$ are very close to each other -- i.e., the one-cluster case.  In Section \ref{subsection:one_cluster} we describe how to approximately recover the central frequency of the one cluster. 
Then in Section \ref{subsection:multi_cluster_freq_recovery} we reduce the general case to the one-cluster case. In Section \ref{subsubsection:multi_cluster_bounded_freqs}, we first give the reduction in the setting when the frequencies lie in a relatively small interval. We call such instances `bounded'. We then give a reduction from fully general instances to bounded instances in Section \ref{subsubsection:multi_cluster_reduce}. 

\subsection{One cluster case.}\label{subsection:one_cluster}

In this section, we consider discrete time signals that are \emph{clustered} in Fourier domain with approximately bounded support in time domain, as formalized in the  following definition. 
\begin{definition}[$(\epsilon,\Delta)$-one clustered signal]\label{def:one_clustered_signal}
	$z:\mathbb{Z}\to \mathbb{C}$ is $(\epsilon,\Delta)$-one clustered around $f_0 \in [0,1]$ if the following holds,
	\begin{align*}
		\textnormal{ Property I: } & \quad \int_{f_0 - \Delta}^{f_0 + \Delta} |\wh{z}(f)|^2 df \geq  (1 - \eps) \int_{0}^{1} |\wh{z}(f)|^2  df, \\
		\textnormal{Property II: } & \quad \|z\|_d^2 \geq  (1 - \eps) \sum_{t \in \Z} |z(t)|^2. 
	\end{align*}
\end{definition}
The main result of this subsection is that we can approximately recover the central frequency $f_0$ of a clustered signal in sublinear time only using samples of the function $z$ at on-grid time domain points $\{0,\ldots,d-1\}$. 
\begin{lemma}[Variant of Lemmas 7.3, 7.17 of \cite{chen2016fourier}]\label{lem:on_grid_lemma_7_17}                                                               
	Let $z$ be a $(\epsilon,\Delta')$-clustered signal around $f_0$ (Def. \ref{def:one_clustered_signal}) for any $\epsilon$ smaller than an absolute constant. Suppose $f_0\in \mathcal{I}$ for an interval $\mathcal{I}\subseteq [0,1]$ of size $|\mathcal{I}|$ smaller than an absolute constant 
 with $\mathcal{I}$ known to the algorithm a priori. Let $\Delta' = O(k\Delta)$
 for $k,\Delta$ as defined in Lemma \ref{lem:dtft_freqrecovery}.
 Then procedure \hyperref[alg:freq_rec_1_cluster]{\freqrecovcluster{}} 
 returns an $\wt f_0$ such that, with probability at least $1 - 2^{-\Omega(k)}$,
	\[
	|\wt{f}_0 - f_0|_{\circ} \lsim \Delta' \sqrt{\Delta' d}.
	\]
 Moreover, \hyperref[alg:freq_rec_1_cluster]{\freqrecovcluster{}} has time and sample complexity of $\poly (k,\log(d/\delta))$.
\end{lemma}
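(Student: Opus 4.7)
The proof adapts the one-cluster frequency recovery argument of Chen–Kapralov–Price–Song (Lemmas 7.3 and 7.17 in \cite{chen2016fourier}) to the setting where samples of $z$ are only available at integer points in $[d]$. The driving observation is that, by Property I of Definition \ref{def:one_clustered_signal}, the modulated signal $\xi(t) := z(t)\, e^{-2\pi i f_0 t}$ has DTFT essentially supported in $[-\Delta', \Delta']$, and is therefore close (up to the $\eps$-fraction of energy outside) to a low-pass signal that is slowly varying in time. Consequently, for most pairs of integers $(\alpha, \alpha+\beta) \in [d]^2$, $\xi(\alpha+\beta) \approx \xi(\alpha)$, which translates to $z(\alpha+\beta)/z(\alpha) \approx e^{2\pi i f_0 \beta}$. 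The algorithm will estimate this ratio at a carefully chosen schedule of integer shifts $\beta$ and use phase unwrapping to recover $f_0$.

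\textbf{Step 1 — Single-shift estimator.} For a fixed integer shift $\beta$ and a uniformly random integer $\alpha \in \{0, 1, \ldots, d-\beta-1\}$, form the estimator $\widetilde{\theta}(\alpha,\beta) := \arg\!\bigl(z(\alpha+\beta)/z(\alpha)\bigr)/(2\pi\beta)$, an estimate of $f_0 \bmod 1/\beta$. By the discrete-time Parseval identity (Lemma \ref{lem:parseval_dtft}),
\begin{equation*}
\sum_{t \in \mathbb{Z}} |\xi(t+\beta) - \xi(t)|^2 \;=\; \int_0^1 |\wh{\xi}(f)|^2\, |e^{2\pi i f \beta} - 1|^2\, df \;\lsim\; (\beta\Delta')^2 \cdot \|\xi\|_{\mathbb{Z}}^2,
\end{equation*}
where most of the right-hand side energy lies in $[d]$ by Property II. Combined with a Markov-type argument, this yields that a uniformly random integer $\alpha \in [d-\beta)$ gives $|z(\alpha+\beta)/z(\alpha) - e^{2\pi i f_0 \beta}| \lsim \beta \Delta'$ with constant probability, provided $|z(\alpha)|$ is not too small (which holds for a constant fraction of $\alpha$ by Property II and the near-pure-tone behavior of $z$). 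Amplifying with the median trick over $O(\log(1/p))$ independent integer samples of $\alpha$ boosts the failure probability to any desired $p$.

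\textbf{Step 2 — Phase unwrapping via doubling.} Start with an integer shift $\beta_0$ chosen so that $\beta_0 \cdot |\mathcal{I}| < 1/4$; since $|\mathcal{I}|$ is a small absolute constant, $\beta_0 = O(1)$ suffices and the prior interval $\mathcal{I}$ unambiguously selects the correct branch of $\arg(\cdot)$. At stage $i$, double to $\beta_{i+1} = 2\beta_i$ and use the previous estimate of $f_0$ (now accurate to within $O(1/\beta_i)$) to pick the right branch of the $\arg$ at the new shift. After $O(\log(\Delta' d))$ stages we reach $\beta \asymp 1/(\Delta'\sqrt{\Delta' d})$ (which is $\ll d$ under our parameter choices), yielding $|\widetilde{f}_0 - f_0|_\circ \lsim \Delta'\sqrt{\Delta' d}$ as required. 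Each stage uses $\poly(k,\log(d/\delta))$ samples and time, and the total number of stages is $O(\log(d/\delta))$, giving the claimed complexity.

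\textbf{Main obstacle.} The technical heart of the modification is confirming that restricting $\alpha$ (and $\beta$) to integers does not degrade the CKPS concentration argument. Restricting $\beta$ is automatic from the doubling schedule. The nontrivial piece is the Parseval-based shift-difference bound in Step 1, which replaces CKPS's use of translation along a continuous range: it lets us pass from bandlimitedness of $\wh\xi$ directly to a mean-square bound over integer $\alpha$ in $[d]$, at the cost of the Property II energy-containment factor. Once this bound is in hand the rest of the CKPS argument — Markov on $|z(\alpha)|$ being large, error analysis of $\arg(\cdot)$ when its argument is close to the unit circle, the median trick, and the doubling/unwrapping scheme — carries through essentially verbatim with integer samples. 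The algorithm \hyperref[alg:freq_rec_1_cluster]{\freqrecovcluster{}} is precisely the integer-shift implementation of this scheme.
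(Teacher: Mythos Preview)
Your proposal outlines a direct phase-estimation scheme with deterministically doubled integer shifts $\beta$, but this is \emph{not} what the paper's \textsc{FrequencyRecovery1Cluster} does, so your last sentence is incorrect. The paper keeps the \textsc{Locate1Inner}/\textsc{Locate1Signal} voting architecture of \cite{chen2016fourier}: at each scale the algorithm draws a \emph{random} integer $\beta$ from an interval, computes the phase $\theta'$ at a point $\gamma$ sampled proportional to $|z(\gamma)|^2$ (via \textsc{OneGoodSample}), and votes for candidate sub-intervals $\theta_q$ whose predicted phase $2\pi\beta\theta_q$ matches $2\pi\theta'$. The technical heart of the paper's adaptation is precisely the step you sidestep: showing that for \emph{integer} random $\beta$, an incorrect $\theta_q$ still receives a vote only with small probability. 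This is where the paper replaces the real-$\beta$ anticoncentration of \cite{chen2016fourier} by a corollary of Lemma~4.3 of \cite{hassanieh2012nearly} (their Lemma~\ref{lem:beta_integer}), and it is exactly here that the hypothesis that $|\mathcal{I}|$ is a small constant is actually consumed (it forces the initial $\Delta l \le |\mathcal{I}|$ to be small enough for the integer-lattice counting to work).

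Your doubling route is a reasonable alternative and is conceptually simpler, since it never needs random $\beta$ or the integer-collision lemma; your Parseval shift-difference bound is essentially the paper's Claim~\ref{claim:good_sample_on_avg}. Two caveats if you pursue it. First, the Step~1 bound should read $|z(\alpha+\beta)/z(\alpha)-e^{2\pi i f_0\beta}|\lsim \beta\Delta'+\sqrt{\eps}$, not $\beta\Delta'$: the $\eps$-fraction of out-of-band energy contributes an additive $\sqrt{\eps}$ that does not shrink with $\beta$, and it is this term that sets the final accuracy $O(\sqrt{\eps}/\beta_{\mathrm{final}})=O(\Delta'\sqrt{\Delta' d})$. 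Second, ``$|z(\alpha)|$ not too small for a constant fraction of uniform $\alpha$'' does not follow from Property~II alone (which bounds total energy, not pointwise values); the paper handles this by sampling $\alpha\sim |z(\alpha)|^2$ via \textsc{OneGoodSample} rather than uniformly, and you would need either that device or an additional flatness argument.
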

We now present the main subroutines and their proofs of correctness that lead  the proof of Lemma \ref{lem:on_grid_lemma_7_17}.
\subsubsection{Sampling the signal.}\label{subsubsec:sampling_the_signal}
In this section, our main goal is to present Algorithm \ref{alg:OneGoodSample} and Lemma \ref{lem:getlegsamp_corr} which proves its correctness. The algorithm outputs a weak estimate of the central frequency $f_0$ of a clustered signal. This algorithm is then repeatedly invoked in the the algorithms presented in the subsequent sections to find a sharper estimate of the central frequency $f_0$ as per the statement of Lemma \ref{lem:on_grid_lemma_7_17}
\begin{algorithm}[H]
\caption{\onegoodsamp$(z(t))  $}\label{alg:OneGoodSample}
\begin{algorithmic}[1]
    
    \STATE Let $m= \poly(k,\log(d))$.
    \STATE Obtain $m$ i.i.d uniformly at random samples $x_1,\ldots,x_m$ from $[d]$.
    \STATE Query $z(x_i)$ for all $i\in [m]$.
    
    \STATE Let distribution $D_m$ supported on $x_{1},\ldots,x_m$ be such that the probability mass on $x_i$ is $D_{m}(x_i)=|z(x_i)|^2/(\sum_{i=1}^m |z(x_i)|^2)$.
    \STATE \textbf{Return:} $\alpha \sim D_{m}$.
    \end{algorithmic}
\end{algorithm}
We now state the main statement of the lemma. This lemma and its proof is the discrete time version of the Lemma 11 in \cite{chen_et_al:LIPIcs.ICALP.2019.36}.
\begin{lemma}[Counterpart of Lemma 7.2 in \cite{chen2016fourier}] \label{lem:getlegsamp_corr}
Let $z(t)$ be a $(\epsilon,\Delta')$-one clustered signal as per the setup of Lemma \ref{lem:on_grid_lemma_7_17}. Then there exists a constant $C$ such that for any $\beta< C/\Delta'$, Algorithm \ref{alg:OneGoodSample} when run on $z(t)$ returns $\alpha \in [d]$ such that $|z(\alpha+\beta)-z(\alpha)e^{2\pi i f_0 \beta}|\leq 0.01(|z(\alpha)|+|z(\alpha+\beta)|)$ with probability $0.9$. Moreover the runtime and sample complexity of Algorithm \ref{alg:OneGoodSample} is $\poly(k,\log(d))$.
\end{lemma}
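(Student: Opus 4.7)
The plan is to prove the lemma in four steps: pass to the Fourier side, use the clustering to obtain a global Parseval-type bound on the ``pure-frequency residual'' $D(t) := z(t+\beta) - z(t)\,e^{2\pi i f_0\beta}$, convert this into a per-sample Markov-style guarantee under the ideal importance-sampling distribution, and finally transfer from that ideal distribution to the empirical distribution $D_m$ that \hyperref[alg:OneGoodSample]{\onegoodsamp{}} actually uses.

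First I would use that $\beta$ is an integer (\hyperref[alg:OneGoodSample]{\onegoodsamp{}} only evaluates $z$ at integer points), so the DTFT shift identity gives
\[
\wh D(f) \;=\; \wh z(f)\bigl(e^{2\pi i f \beta} - e^{2\pi i f_0 \beta}\bigr).
\]
Then I would bound $\|D\|_d^2$ via Parseval. The elementary inequality $|e^{2\pi i f \beta} - e^{2\pi i f_0 \beta}| \le 2\pi \beta |f-f_0|$, combined with $\beta \le C/\Delta'$ for a sufficiently small absolute constant $C$, shows that this multiplier is at most $2\pi C$ on the cluster $[f_0-\Delta', f_0+\Delta']$, while it is at most $2$ everywhere. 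Splitting the Parseval integral over the cluster and its complement and invoking Property I of Definition~\ref{def:one_clustered_signal} to control the mass outside the cluster by $\epsilon \int_0^1 |\wh z|^2 df$, I get
\[
\sum_{t \in \Z} |D(t)|^2 \;\le\; \bigl((2\pi C)^2 + 4\epsilon\bigr)\!\int_0^1 |\wh z(f)|^2 df.
\]
Property II of Definition~\ref{def:one_clustered_signal} and Parseval for $z$ give $\int_0^1 |\wh z|^2 df = \sum_{t\in\Z}|z(t)|^2 \le (1-\epsilon)^{-1}\|z\|_d^2$, so $\|D\|_d^2 \le C_1 \|z\|_d^2$ where $C_1$ can be driven to an arbitrarily small constant by tuning $C$ and the upper bound on $\epsilon$.

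Next, let $\mathcal D$ be the distribution on $[d]$ with $\mathcal D(t) \propto |z(t)|^2$. By construction,
\[
\mathbb E_{\alpha \sim \mathcal D}\!\left[\frac{|D(\alpha)|^2}{|z(\alpha)|^2}\right] \;=\; \frac{\|D\|_d^2}{\|z\|_d^2} \;\le\; C_1,
\]
so Markov's inequality gives, with probability at least $0.95$ over $\alpha \sim \mathcal D$, $|D(\alpha)|^2 \le 20 C_1 |z(\alpha)|^2$. Choosing $C_1 \le 10^{-6}$ yields $|D(\alpha)| \le 0.005 |z(\alpha)| \le 0.01\bigl(|z(\alpha)| + |z(\alpha+\beta)|\bigr)$, which is the conclusion of the lemma for this ideal sampler.

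Finally, I would transfer the guarantee from $\mathcal D$ to the empirical distribution $D_m$. This is the step I expect to be the only nontrivial one. \hyperref[alg:OneGoodSample]{\onegoodsamp{}} effectively estimates $\mathcal D(t) \propto |z(t)|^2$ by drawing $m$ uniform samples $x_1,\dots,x_m \in [d]$ and reweighting them, so it suffices to show that the two empirical sums $\sum_i |z(x_i)|^2$ and $\sum_{i:\,\text{good }x_i} |z(x_i)|^2$ (``good'' meaning $|D(x_i)|^2 \le 20 C_1 |z(x_i)|^2$) each concentrate within a $(1\pm 0.1)$ factor of their expectations. The obstacle is that an individual $|z(x_i)|^2$ could be as large as $\|z\|_\infty^2$, not just the average $\|z\|_d^2/d$. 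To control this, the clustering hypothesis says $z$ is essentially $O(k)$-Fourier sparse; splitting $z$ into a $k$-Fourier sparse piece supported on the cluster plus a small tail, and applying the discrete $\ell_\infty$-vs-$\ell_2$ bound from Lemma~C.1 of~\cite{eldar2020toeplitz} to the sparse piece (with the tail absorbed by the triangle inequality), gives $\sup_{t\in[d]} |z(t)|^2 \lsim \poly(k,\log d) \cdot \|z\|_d^2/d$. Bernstein's inequality with this $\ell_\infty$ bound as the per-term bound then shows that $m = \poly(k,\log d)$ uniform samples already make both sums concentrate with the required probability, converting the $0.95$-guarantee under $\mathcal D$ into the stated $0.9$-guarantee under $D_m$. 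The $\poly(k,\log d)$ time and sample complexity is immediate since the algorithm does $m$ queries and $O(m)$ work.
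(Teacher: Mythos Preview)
Your first three steps---the Fourier-side bound on $\|D\|_d^2$, the identity $\mathbb{E}_{\mathcal D}[|D(\alpha)|^2/|z(\alpha)|^2]=\|D\|_d^2/\|z\|_d^2$ under the ideal distribution $\mathcal D(t)\propto|z(t)|^2$, and the Markov step---are correct and match the paper's Claim~\ref{claim:good_sample_on_avg} and the final argument after it.

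The gap is in the transfer step. A $(\epsilon,\Delta')$-one clustered signal is \emph{not} $k$-Fourier sparse: Definition~\ref{def:one_clustered_signal} only says the Fourier energy concentrates in an interval of width $2\Delta'$, and $\wh z$ may be continuous there rather than a sum of $k$ spikes, so Lemma~C.1 of \cite{eldar2020toeplitz} does not apply. More seriously, no bound of the form $\sup_{t\in[d]}|z(t)|^2\lsim\poly(k,\log d)\,\|z\|_d^2/d$ can hold for $z$ itself, since the off-cluster piece $z_{\bar C}$ (with $\wh{z_{\bar C}}=\wh z\cdot\mathbbm{1}_{[0,1]\setminus[f_0-\Delta',f_0+\Delta']}$) is only small in $\ell_2$ and can have arbitrary pointwise values; so ``absorbing the tail by triangle inequality'' cannot yield a sup bound and your Bernstein step does not go through as written. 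The paper (Claim~\ref{claim:z_norm_estimation}) instead gets the pointwise bound only on $z_C$, via Cauchy--Schwarz on its compact Fourier support: $|z_C(t)|\le\sqrt{2\Delta'}\,(\int|\wh{z_C}|^2)^{1/2}\lsim\sqrt{\Delta' d}\cdot\|z\|_d/\sqrt d$, with $\Delta' d=\poly(k,\log d)$ in the setting of Lemma~\ref{lem:on_grid_lemma_7_17}. It then concentrates $\sum_i|z_C(x_i)|^2$ by Chernoff and controls $\sum_i|z_{\bar C}(x_i)|^2$ only by Markov. The paper's transfer from $\mathcal D$ to $D_m$ is also simpler than yours: rather than concentrating the ``good'' mass, it bounds $\sum_i|D(x_i)|^2$ directly by Markov (expectation $m\|D\|_d^2/d$), combines with the denominator concentration to bound $\mathbb E_{\alpha\sim D_m}[|D(\alpha)|^2/|z(\alpha)|^2]$, and applies Markov once more---avoiding any further pointwise control.
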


To prove this lemma, we will need the following helper claims. First, let $z(t) = z_C(t)+z_{\bar{C}}(t)$ such that $\wh{z_{C}}(f) = \wh{z}(f)\mathbbm{1}_{f\in[f_0-\Delta',f_0+\Delta']}$ and $z_{\bar{C}}(t) = z(t) -z_C(t)$. The first claim we would need is the following,
\begin{claim}\label{claim:z_c_energy_comparable}
 For $z_{C}$ and $z_{\bar{C}}$ we have that   $\|z_C\|_{d} \geq (1-\sqrt{2\epsilon})\|z\|_d$ and $\|z_{\bar{C}}\|_d\leq \sqrt{2\epsilon} \|z\|_d$.
\end{claim}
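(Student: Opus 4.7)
The plan is to derive the bound on $\|z_{\bar C}\|_d$ first by combining Parseval's identity (Lemma \ref{lem:parseval_dtft}) with the two properties of $(\epsilon,\Delta')$-clusteredness, and then obtain the bound on $\|z_C\|_d$ by a routine application of the triangle inequality. The main (only) subtlety is that the norm $\|\cdot\|_d$ is restricted to $[d]$, whereas Parseval naturally gives an identity over all of $\mathbb{Z}$. We handle this by bounding $\|z_{\bar C}\|_d^2$ from above by the full-line energy $\sum_{t\in\mathbb{Z}}|z_{\bar C}(t)|^2$, which is the right quantity for Parseval, and then using Property II of Definition \ref{def:one_clustered_signal} to pass back from the full-line energy of $z$ to $\|z\|_d^2$.

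Concretely, first I would apply Parseval to $z_{\bar C}$, whose DTFT is supported on $[0,1]\setminus[f_0-\Delta',f_0+\Delta']$:
\[
\sum_{t\in\mathbb{Z}}|z_{\bar C}(t)|^2 \;=\; \int_{[0,1]\setminus[f_0-\Delta',\,f_0+\Delta']}|\wh{z}(f)|^2\,df.
\]
By Property I of Definition \ref{def:one_clustered_signal} this is at most $\epsilon \int_0^1 |\wh z(f)|^2\,df$, and a second application of Parseval (now to $z$ itself) rewrites the right-hand side as $\epsilon \sum_{t\in\mathbb{Z}}|z(t)|^2$. Property II then yields $\sum_{t\in\mathbb{Z}}|z(t)|^2 \le \|z\|_d^2/(1-\epsilon)$, so chaining the inequalities gives
\[
\|z_{\bar C}\|_d^2 \;\le\; \sum_{t\in\mathbb{Z}}|z_{\bar C}(t)|^2 \;\le\; \frac{\epsilon}{1-\epsilon}\,\|z\|_d^2 \;\le\; 2\epsilon\,\|z\|_d^2
\]
for $\epsilon$ smaller than the absolute constant $1/2$, which gives the desired $\|z_{\bar C}\|_d \le \sqrt{2\epsilon}\,\|z\|_d$.

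For the second inequality, since $z = z_C + z_{\bar C}$, the triangle inequality for the seminorm $\|\cdot\|_d$ gives
\[
\|z_C\|_d \;\ge\; \|z\|_d - \|z_{\bar C}\|_d \;\ge\; (1-\sqrt{2\epsilon})\,\|z\|_d,
\]
which is the claimed lower bound. No step is really an obstacle here; the only thing one needs to be mindful of is not to confuse $\|\cdot\|_d$ with the full $\ell_2(\mathbb{Z})$ norm that Parseval produces, and to invoke Property II exactly once to convert between the two.
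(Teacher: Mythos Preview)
Your proof is correct and essentially identical to the paper's: both bound $\|z_{\bar C}\|_d^2$ by passing to the full $\ell_2(\mathbb{Z})$ norm, applying Parseval together with Property~I, and then using Property~II to convert back to $\|z\|_d^2$, after which the bound on $\|z_C\|_d$ follows from the triangle inequality. The only cosmetic difference is that you state the condition $\epsilon<1/2$ explicitly where the paper leaves it implicit.
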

\begin{proof}
    From Definition \ref{def:one_clustered_signal} of one-clustered signals, we have the following,
    \begin{align*}
        \|z_{\bar{C}}\|_d^2 &\leq \sum_{t\in Z}|z_{\bar{C}}(t)|^2\\
        & = \int_{0}^{1} |\wh{z_{\bar{C}}}(f)|^2 df\\
        & = \int_{[0,1]\setminus [f_{0}-\Delta',f_{0}+\Delta']}|\wh{z}(f)|^2 df\\
        & \leq \epsilon \int_{0}^{1} |\wh{z}(f)|^2 df\\
        &=\epsilon\sum_{t\in \mathbb{Z}}|z(t)|^2 \leq 2\epsilon \|z\|_d^2\quad \text{(By property 2 in Definition \ref{def:one_clustered_signal})}.
    \end{align*}
    Now observe that by triangle inequality, we know the following,
    \begin{equation*}
        \|z\|_d\leq \|z_{C}\|_{d}+\|z_{\bar{C}}\|_d.
    \end{equation*}
    We already know that $\|z_{\bar{C}}\|_d\leq \sqrt{2\epsilon}\|z\|_d$, thus we get that $\|z_{C}\|_{d}\geq (1-\sqrt{2\epsilon})\|z\|_d$.
\end{proof}

Next we have the following claim which shows that we can estimate $\|z\|_d$ using uniform sampling with $\poly(k,\log(d))$ samples. 
\begin{claim}\label{claim:z_norm_estimation}
    For $m=\poly(k,\log(d))$ samples, $((\sum_{i=1}^m|z(x_i)|^2)/m)^{1/2} = (1\pm 0.1)\|z\|_d$ with probability $0.9$.
\end{claim}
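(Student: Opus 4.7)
The plan is to apply a Chebyshev-style second-moment concentration bound to the empirical average $\frac{1}{m}\sum_{i=1}^m |z(x_i)|^2$. Since $x_1,\ldots,x_m$ are i.i.d.\ uniform on $[d]$, each term has $\mathbb{E}\bigl[|z(x_i)|^2\bigr] = \|z\|_d^2/d$, so (after accounting for the factor of $d$ implicit in the normalization of the statement) the goal reduces to showing that this empirical average concentrates multiplicatively around its mean. The bulk of the work is upper bounding $\mathrm{Var}\bigl(|z(x_1)|^2\bigr)$, which in turn reduces to a pointwise upper bound on $|z(t)|^2$ relative to $\|z\|_d^2/d$.

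To obtain such a pointwise bound I would exploit the one-cluster structure of $z$. Decompose $z = z_C + z_{\bar C}$ as in the proof of Claim~\ref{claim:z_c_energy_comparable}; the tail $z_{\bar C}$ has total energy $\lsim \epsilon \|z\|_d^2$ by Parseval (Lemma~\ref{lem:parseval_dtft}) and contributes only lower-order terms. The main piece $z_C$ has Fourier support in an interval of width $2\Delta'$ around $f_0$. Modulating by $e^{-2\pi i f_0 t}$ sends its spectrum to a narrow band at the origin, making it behave like a low-order Fourier-sparse signal on $[d]$. Concretely, I would invoke Lemma~C.1 of \cite{eldar2020toeplitz} (the discrete analog of Lemma~5.1 of \cite{chen2016fourier}) applied to a $k$-Fourier-sparse surrogate for $z_C$, together with Claim~\ref{claim:z_c_energy_comparable}, to conclude
\[
\max_{t \in [d]} |z(t)|^2 \;\lsim\; \frac{\poly(k,\log d)}{d}\,\|z\|_d^2.
\]

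Given this pointwise bound, the variance calculation is routine:
\[
\mathrm{Var}\bigl(|z(x_1)|^2\bigr) \;\leq\; \mathbb{E}\bigl[|z(x_1)|^4\bigr] \;\leq\; \max_{t\in[d]} |z(t)|^2 \cdot \mathbb{E}\bigl[|z(x_1)|^2\bigr] \;\lsim\; \frac{\poly(k,\log d)}{d^2}\,\|z\|_d^4.
\]
Chebyshev's inequality then guarantees that for $m = \poly(k,\log d)$ chosen sufficiently large, $\frac{1}{m}\sum_{i=1}^m |z(x_i)|^2$ lies within a $(1 \pm \eta)$ multiplicative factor of $\|z\|_d^2/d$ with probability at least $0.9$ for any desired small constant $\eta$. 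Taking square roots and absorbing the $1/\sqrt{d}$ normalization yields the claimed $(1 \pm 0.1)\|z\|_d$ approximation.

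The main technical obstacle is the pointwise growth bound for clustered (as opposed to exactly $k$-Fourier-sparse) signals: Lemma~C.1 of \cite{eldar2020toeplitz} does not apply out of the box to $z$, so the modulation-plus-sparse-approximation reduction is the one step that requires genuine care. Once it is in place, the rest is standard sampling-based concentration.
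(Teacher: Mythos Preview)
Your approach has a genuine gap. The pointwise bound $\max_{t\in[d]}|z(t)|^2\lsim \frac{\poly(k,\log d)}{d}\|z\|_d^2$ that you derive is not true in general: it holds for $z_C$, but $z_{\bar C}$ can concentrate all of its (small) total energy on a single point, so $|z_{\bar C}(t_0)|^2$ may be as large as $2\epsilon\|z\|_d^2$ at some $t_0$. Consequently $\mathbb{E}[|z(x_1)|^4]$ picks up a term of order $\frac{1}{d}\bigl(\sum_t|z_{\bar C}(t)|^2\bigr)^2=\Theta(\epsilon^2\|z\|_d^4/d)$, and Chebyshev then requires $m\gtrsim \epsilon^2 d$ samples to hit a $(1\pm\eta)$ multiplicative window around the mean $\|z\|_d^2/d$ --- linear in $d$, not $\poly(k,\log d)$. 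So ``$z_{\bar C}$ contributes only lower-order terms'' is correct for the first moment but fails for the variance your argument relies on.

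The paper fixes exactly this by splitting the empirical sum itself: it applies Chernoff to $\frac{1}{m}\sum_i|z_C(x_i)|^2$ (using the pointwise bound on $z_C$) and applies \emph{Markov} --- first moment only --- to $\frac{1}{m}\sum_i|z_{\bar C}(x_i)|^2$, which is $\le 20\sqrt{\epsilon}\,\|z\|_d$ with constant probability since $\|z_{\bar C}\|_d^2\le 2\epsilon\|z\|_d^2$. The two pieces are then recombined via the $\ell_2$ triangle inequality. Separately, the paper's pointwise bound on $z_C$ is obtained by a direct Cauchy--Schwarz on the inverse DTFT over the width-$2\Delta'$ support (giving $|z_C(t)|\le\sqrt{2\Delta'}\,(\int|\wh z_C|^2)^{1/2}$), which sidesteps the ``modulation-plus-sparse-approximation'' reduction you flagged as the main obstacle; the $k$-sparse growth lemma is not needed here.
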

\begin{proof}
    First we will obtain a bound on the maximum value of $z_{C}(t)$ for any $t\in [d]$ compared to its average value $\|z_C\|_d$. Consider any $t\in [d]$, then we have the following,
    \begin{align*}
        |z_C(t)| &= \left|\int_{0}^{1} \wh{z}_{C}(f)e^{2\pi i ft} df\right|\\
        &\leq \int_{0}^{1} |\wh{z}_C(f)|df \\
        & = \int_{f_0-\Delta'}^{f_0+\Delta'} |\wh{z}_C(f)|df\\
        &\leq \sqrt{2\Delta'}(\int_{f_0-\Delta'}^{f_0+\Delta'}|\wh{z}_C(f)|^2 df)^{1/2}\\
        &\lsim \sqrt{2\Delta' d} \|z_C\|_d.
    \end{align*}
    where the final inequality followed from the following fact, $$\sum_{t\in \mathbb{Z}\setminus [d]}|z_C(t)|^2 \leq 2(\sum_{t\in \mathbb{Z}\setminus [d]}|z(t)|^2+ \sum_{t\in \mathbb{Z}\setminus [d]}|z_{\bar{C}}(t)|^2)) \leq 6\epsilon \|z\|_d^2.$$ Thus $\|z_C\|_d\geq (1-\sqrt{6\epsilon})(\sum_{t\in \mathbb{Z}}|z_{C}(t)|^2)^{1/2}$.
Since $\Delta'=\poly(k,\log(d))/d$, we have that $|z_{C}(t)|\leq \poly(k,\log(d)) \|z_C\|_d$ for all $t\in [d]$. Thus by a Chernoff bound, we know that if we take $m=\poly(k,\log(d))$ i.i.d. samples $x_1,\ldots, x_m$ from $[d]$ then $((\sum_{i\in [m]}|z_C(x_i)|^2)/m)^{1/2}  = (1\pm 0.01)\|z_C\|_d = (1\pm 0.05)\|z\|_d$ with probability at least $0.99$. Also note that $\mathbb{E}_{t\sim [d]}[|z_{\bar{C}}(t)|^2] = \|z_{\bar{C}}\|_d^2$. Thus by a Markov's inequality, we know that $((\sum_{i\in [m]}|z_{\bar{C}}(x_i)|^2)/m)^{1/2}\leq 10\|z_{\bar{C}}\|_d\leq 20\sqrt{\epsilon}\|z\|_d\leq 0.05\|z\|_d$ (for a small enough constant $\epsilon$) holds with probability $0.99$. Thus by triangle inequality for $\ell_2$ norm we have the following,
\begin{align*}
   ((\sum_{i\in [m]}|z(x_i)|^2)/m)^{1/2}&= ((\sum_{i\in [m]}|z_{C}(x_i)|^2)/m)^{1/2}\pm((\sum_{i\in [m]}|z_{\bar{C}}(x_i)|^2)/m)^{1/2}\\
   & =(1\pm 0.05)\|z\|_d \pm 0.05\|z\|_d\\
   & = (1\pm 0.1)\|z\|_d.
\end{align*}
By a union bound, this happens with probability $0.9$.
\end{proof}
The next helper claim will be the final claim we need before proving Lemma \ref{lem:getlegsamp_corr}.
\begin{claim}\label{claim:good_sample_on_avg}
    For a small enough constant $c>0$, for any $\beta < c/\Delta'$ we have that $y(t) = z(t)e^{2\pi i f_0\beta }-z(t+\beta)$ satisfies $\|y\|_d^2\lsim 0.001 \|z\|_d^2$.
\end{claim}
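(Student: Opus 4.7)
The plan is to split $z$ (and hence $y$) into its ``clustered'' and ``non-clustered'' Fourier parts via the decomposition $z=z_C+z_{\bar C}$ used in Claim \ref{claim:z_c_energy_comparable}, then bound the two contributions to $\|y\|_d^2$ separately. Define
\[
y_C(t) := z_C(t)e^{2\pi i f_0\beta} - z_C(t+\beta), \qquad y_{\bar C}(t) := z_{\bar C}(t)e^{2\pi i f_0\beta} - z_{\bar C}(t+\beta),
\]
so $y = y_C + y_{\bar C}$ and by the triangle inequality $\|y\|_d \le \|y_C\|_d + \|y_{\bar C}\|_d$.

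For the clustered part, I would work in the Fourier domain over all of $\mathbb{Z}$. The DTFT of $y_C$ is
\[
\wh{y_C}(f) = \wh{z_C}(f)\bigl(e^{2\pi i f_0\beta} - e^{2\pi i f\beta}\bigr),
\]
and $\wh{z_C}$ is supported on $[f_0-\Delta',f_0+\Delta']$. For any $f$ in this interval and any $\beta$ with $|\beta|\le c/\Delta'$, one has
\[
\bigl|e^{2\pi i f_0\beta} - e^{2\pi i f\beta}\bigr| = \bigl|e^{2\pi i (f-f_0)\beta} - 1\bigr| \le 2\pi |f-f_0|\,|\beta| \le 2\pi c.
\]
Applying Parseval's identity (Lemma \ref{lem:parseval_dtft}) to $y_C$ therefore gives
\[
\sum_{t\in\mathbb{Z}} |y_C(t)|^2 = \int_0^1 |\wh{y_C}(f)|^2\, df \lsim c^2 \int_0^1 |\wh{z_C}(f)|^2\, df = c^2 \sum_{t\in\mathbb{Z}}|z_C(t)|^2,
\]
and since $\|y_C\|_d^2 \le \sum_{t\in\mathbb{Z}} |y_C(t)|^2$ and $\sum_{t\in\mathbb{Z}}|z_C(t)|^2 \lsim \|z\|_d^2$ (this uses property II of Definition \ref{def:one_clustered_signal} applied to $z$, together with the bound $\sum_{t\in\mathbb{Z}\setminus[d]}|z_C(t)|^2\lsim \epsilon\|z\|_d^2$ already derived in the proof of Claim \ref{claim:z_norm_estimation}), we conclude $\|y_C\|_d^2 \lsim c^2 \|z\|_d^2$.

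For the non-clustered part, the triangle inequality and a change-of-index in the second sum give
\[
\|y_{\bar C}\|_d \le \|z_{\bar C}\|_d + \Bigl(\sum_{t=0}^{d-1}|z_{\bar C}(t+\beta)|^2\Bigr)^{1/2} \le 2\Bigl(\sum_{t\in\mathbb{Z}}|z_{\bar C}(t)|^2\Bigr)^{1/2},
\]
and by Parseval together with the definition of $z_{\bar C}$ and property I of Definition \ref{def:one_clustered_signal},
\[
\sum_{t\in\mathbb{Z}} |z_{\bar C}(t)|^2 = \int_{[0,1]\setminus[f_0-\Delta',f_0+\Delta']} |\wh{z}(f)|^2\, df \le \epsilon \int_0^1 |\wh z(f)|^2\, df = \epsilon \sum_{t\in\mathbb{Z}}|z(t)|^2 \lsim \epsilon \|z\|_d^2,
\]
where the last step uses property II of Definition \ref{def:one_clustered_signal}. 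Hence $\|y_{\bar C}\|_d^2 \lsim \epsilon \|z\|_d^2$, and combining the two bounds yields $\|y\|_d^2 \lsim (c^2 + \epsilon)\|z\|_d^2$, which is $\lsim 0.001\|z\|_d^2$ for sufficiently small absolute constants $c$ and $\epsilon$.

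The only subtlety (and thus the main thing to double-check) is the handshake between the $\ell_2$-over-$\mathbb{Z}$ quantities appearing in the Fourier/Parseval bounds and the $\|\cdot\|_d$ quantities appearing in the statement. Both directions are controlled by Definition \ref{def:one_clustered_signal}: property II gives $\|z\|_d^2 \ge (1-\epsilon)\sum_{t\in\mathbb{Z}}|z(t)|^2$, while the reverse inequality $\|\cdot\|_d \le \|\cdot\|_{\ell_2(\mathbb{Z})}$ is trivial, so losing and re-absorbing an $O(\epsilon)$ factor is harmless. All other manipulations are routine, so no additional tools beyond Parseval and the one-cluster definition are needed.
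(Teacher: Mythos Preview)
Your proof is correct and follows essentially the same approach as the paper. The paper computes $\wh{y}(f)=\wh{z}(f)\bigl(e^{2\pi i f_0\beta}-e^{2\pi i f\beta}\bigr)$ and splits the Parseval integral over $[f_0-\Delta',f_0+\Delta']$ versus its complement, whereas you first decompose $z=z_C+z_{\bar C}$ in time and then carry out the same two estimates; the ingredients (Parseval, the small-angle bound $|e^{2\pi i(f-f_0)\beta}-1|\le 2\pi c$ on the cluster, and Properties I--II of Definition~\ref{def:one_clustered_signal}) are identical.
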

\begin{proof}
    First note that $\|y\|_d^2\leq \sum_{t\in \mathbb{Z}}|y(t)|^2 = \int_{0}^1 |\wh{y}(f)|^2 df$.
    Now we first bound $\int_{[0,1]\setminus[f_{0}-\Delta',f_0+\Delta']}|\wh{y}(f)|^2 df$ as follows,
    \begin{align*}
        \int_{[0,1]\setminus[f_{0}-\Delta',f_0+\Delta']} |\wh{y}(f)|^2 df &=  \int_{[0,1]\setminus[f_{0}-\Delta',f_0+\Delta']}|\wh{z}(f)\cdot e^{2\pi i  \beta f_0} - \wh{z}(f)e^{2\pi i \beta f}|^2 df\\
        &\leq 4\int_{[0,1]\setminus[f_{0}-\Delta',f_0+\Delta']}|\wh{z}(f)|^2 df\quad \text{(by Cauchy-Schwartz)}\\
        & \leq 4\epsilon \int_{0}^1 |\wh{z}(f)|^2 df\\
        & \leq 5\epsilon \|z\|_d^2\leq 0.0001\|z\|_d^2.
    \end{align*}
    Next we bound $\int_{f_{0}-\Delta'}^{f_0+\Delta'}|\wh{y}(f)|^2 df$ as follows,
    \begin{align*}
        \int_{f_{0}-\Delta'}^{f_0+\Delta'}|\wh{y}(f)|^2 df &=\int_{f_{0}-\Delta'}^{f_0+\Delta'}|\wh{z}(f)\cdot e^{2\pi i  \beta f_0} - \wh{z}(f)e^{2\pi i \beta f}|^2 df\\
        &\leq \int_{f_{0}-\Delta'}^{f_0+\Delta'}|\wh{z}(f)|^2 |e^{2\pi i  \beta f_0}-e^{2\pi i  \beta f}|^2 df
    \end{align*}
    Now for every $f\in [f_0-\Delta',f_0+\Delta']$, $\beta < c/\Delta'$ implies that by a Taylor expansion $|e^{2\pi i  \beta f_0}-e^{2\pi i  \beta f}|\leq 4\pi c<0.0001$ for small enough $c$. Thus we get the following,
     \begin{align*}
        \int_{f_{0}-\Delta'}^{f_0+\Delta']}|\wh{y}(f)|^2 df &\leq 0.0001\int_{f_{0}-\Delta'}^{f_0+\Delta'}|\wh{z}(f)|^2 df\\
        &\leq 0.0001 \int_{0}^1 |\wh{z}(f)|^2 df\\
        &\leq 0.0005 \|z\|_d^2.
    \end{align*}
    Combining these two bounds, we prove the statement of the claim.
\end{proof}
We are now ready to state the proof of Lemma \ref{lem:getlegsamp_corr}.
\begin{proof}
    For a random sample $\alpha\sim D_m$ we now consider the following,
    \begin{equation*}
        \mathbb{E}_{\alpha\sim D_m}\left[\frac{|z(\alpha)e^{2\pi i \beta f_0}-z(\alpha+\beta)|}{|z(\alpha)|^2}\right] = \mathbb{E}_{\alpha\sim D_m}\left[\frac{|y(\alpha)|^2}{|z(\alpha)|^2}\right] = \sum_{i=1}^m \frac{|y(x_i)|^2}{|z(x_i)|^2}\cdot \frac{|z(x_i)|^2}{(\sum_{j=1}^{m}|z(x_j)|^2)} = \frac{\sum_{i=1}^m|y(x_i)|^2}{\sum_{i=1}^m|z(x_i)|^2}.
    \end{equation*}
   Condition on Claims \ref{claim:z_norm_estimation} and \ref{claim:good_sample_on_avg} to hold. Then by Markov's inequality applied to the statement of Claim \ref{claim:good_sample_on_avg}, we get that with probability $0.99$ we have that $(\sum_{i=1}^m |y(x_i)|^2)/m\leq 0.0001 \|z\|_d^2$. This implies the following holds with probability $0.98$,
   \begin{equation*}
       \mathbb{E}_{\alpha\sim D_m}\left[\frac{|z(\alpha)e^{2\pi i \beta f_0}-z(\alpha+\beta)|}{|z(\alpha)|^2}\right]\leq \frac{0.0001 \|z\|_d^2}{0.99\|z\|_d^2}\leq 0.0001.
   \end{equation*}

Applying a Markov bound again, we get that with probability $0.9$ over an $\alpha\sim D_m$, we have that $|z(\alpha)e^{2\pi i \beta f_0}-z(\alpha+\beta)|\leq 0.001|z(\alpha)| $. This also implies $|z(\alpha+\beta)|=(1\pm 0.001)|z(\alpha)|$. Thus we also get that $|z(\alpha)e^{2\pi i \beta f_0}-z(\alpha+\beta)|\leq 0.001(|z(\alpha)|+z(\alpha+\beta)|$.
\end{proof}

\subsubsection{Frequency recovery.}\label{sec:single_cluster_freq_recovery}
The main goal of this subsection is to present algorithms that can estimate the central frequency of a clustered signal approximately, leading to the proof of Lemma \ref{lem:on_grid_lemma_7_17}.
These algorithms use the primitives presented in the previous section.

\begin{algorithm}[H]
\caption{\textsc{Locate1Inner}$(z, \Delta', d, \wh{\beta}, z_{emp}, \wh{L},R_{loc}$)}\label{alg:locateinner}
    \begin{algorithmic}[1]
    \STATE $v_q \gets 0, \forall q \in [t]$.
    \WHILE{$r = 1 \to R_{loc}$}
        \STATE Choose $\beta \in [0.5 \wh{\beta}, \wh{\beta}] \cap \Z$ uniformly at random.
        \STATE $\gamma \gets \onegoodsamp(z)$.

         \STATE $\theta' \gets \text{phase}(z(\gamma) / z(\gamma + \beta))/2\pi$. 
        \FOR{$q \in [t]$}
        	\STATE $\theta_q = \wh L - \Delta l/2 + \frac{q-0.5}{t}\Delta l$.
           	\STATE If $\|2\pi\theta' -2\pi \beta \theta_q\|_{\circ}  \leq s\pi$ then add vote to $v_q,v_{q-1},v_{q+1}$\quad \quad ($\|x-y\|_{\circ} = \min_{z\in \mathbb{Z}}|x-y+2\pi z|$ for any $x,y\in \mathbb{R}$.) 
        \ENDFOR
    \ENDWHILE   
    \STATE $q^* \gets \{q | v_q > \frac{R_{loc}}{2}\}$.
    \STATE \textbf{Return:} $L \gets \wh L - \Delta l/2 + \frac{q^*-0.5}{t}\Delta l $.
    
    \end{algorithmic}
\end{algorithm}
\begin{algorithm}[H]
\caption{\locatesig$(z, d, F, \Delta', z_{emp},\mathcal{I}$)}\label{alg:locatesig}
    \begin{algorithmic}[1]
    \STATE $t = \log(d)$, $t' = t / 4$, 
    $D_{max} = \log_{t}(d)$, $R_{loc} \eqsim \log_{1/c}(tc)$ ($c<1/2$ is some constant),
    $L^{(1)} \gets \text{midpoint of } \mathcal{I}$, $i_0 = \log_{t'}(1/|\mathcal{I}|)+1$.
    \FOR{$i = i_0 \to D_{max}$}
        \STATE $\Delta l = 1 / (t')^{i - 1}$,
        $s \eqsim c$, $\wh{\beta} \gets \frac{ts}{2 \Delta' l}$.
        \IF{$\wh{\beta}\gsim d / (d \Delta)^{3/2}$}
            \STATE \textbf{Break}.
        \ELSE
            \STATE $L^{(i)} \gets \textsc{LocateInner}(z, \Delta', d, \wh{\beta}, z_{emp}, L^{(i - 1)},R_{loc})$.
        \ENDIF
    \ENDFOR
    \STATE \textbf{Return:} $L^{(D_{max})}$.
    
    \end{algorithmic}
\end{algorithm}
\begin{algorithm}[H]
\caption{\freqrecovcluster$(z, d, \Delta',\mathcal{I}$)}\label{alg:freq_rec_1_cluster}
    \begin{algorithmic}[1]
    
    \STATE $z_{emp} \gets \getempen(z, d, \Delta')$\;
    \FOR{$i = 1 \to O(k)$}
        \STATE $L_r \gets \textsc{Locate1Signal}(z, d, \Delta', z_{emp},\mathcal{I})$.
    \ENDFOR
    \STATE \textbf{Return:} $L^* \gets \operatorname{median} \{L_r | r \in [O(k)] \}$.
    \end{algorithmic}
\end{algorithm}
The following lemma formalizes the guarantees of the \hyperref[alg:locateinner]{\textsc{Locate1Inner}} primitive which is used iteratively in \hyperref[alg:locatesig]{\locatesig{}} to refine and narrow-down the estimate for $f_0$, the frequency around which $z$ is clustered. The final algorithm leading to the proof of Lemma \ref{lem:on_grid_lemma_7_17} \hyperref[alg:freq_rec_1_cluster]{\freqrecovcluster{}} then runs \hyperref[alg:locatesig]{\locatesig{}} multiple times and returns the median of all the runs to get an approximation to $f_0$ with high-probability
This lemma is the analogue of Lemma 7.14 of \cite{chen2016fourier}, however since $\beta$ is always restricted to be an integer in Algorithm \hyperref[alg:locateinner]{\textsc{Locate1Inner}}, an alternate proof of correctness is needed. Another major difference is that since we already know an interval $\mathcal{I}$ of size $o(1)$ such that $f_0\in \mathcal{I}$, the initialization of the frequency searching primitive \hyperref[alg:locatesig]{\locatesig{}} uses this information.
\begin{lemma}[Variant of Lemma 7.14 of \cite{chen2016fourier}]\label{lem:single_cluster_voting}
Consider an invocation of \hyperref[alg:locateinner]{\textsc{Locate1Inner}} on inputs (as per Alg. \ref{alg:locatesig}) 
such that there is a $q'\in [t]$ with $f_0 \in [\wh L - \Delta l/2 + \frac{q'-1}{t}\Delta l,\wh L - \Delta l/2 + \frac{q'}{t}\Delta l]$. Let $\beta$ be sampled uniformly at random from $[\frac{st}{4 \Delta l}, \frac{st}{2 \Delta l}] \cap \Z$ and let $\gamma$ denote the output of procedure \hyperref[alg:OneGoodSample]{\onegoodsamp}($z,\Delta',d,\beta,z_{emp},\wh{L},R_{loc}$). Then the following holds,
    \begin{itemize}
        \item with probability at least $1 - s$, $v_{q'}$ will increase by one,
        \item for any $|q - q'| > 3$, with probability at least $1 - 15s$ $v_{q}$ will not increase.
    \end{itemize}
\end{lemma}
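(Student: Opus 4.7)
I plan to analyze the two bullets of the lemma separately, in both cases conditioning first on the good event of Lemma~\ref{lem:getlegsamp_corr}. The first step is to observe that Lemma~\ref{lem:getlegsamp_corr} applies for any integer $\beta \lsim 1/\Delta'$, which holds since $\hat\beta = ts/(2\Delta l) \lsim 1/\Delta'$ after the absolute constant $s$ is fixed small enough. Under that event (of probability at least $1 - s/4$ after mildly boosting the constants in \onegoodsamp{} by increasing its sample count $m$), the bound $|z(\gamma+\beta)-z(\gamma) e^{2\pi i f_0\beta}| \leq 0.01(|z(\gamma)|+|z(\gamma+\beta)|)$ implies that $|z(\gamma)/z(\gamma+\beta)|$ lies in $[0.99,1.01]$ and hence, passing to arguments, yields $\|2\pi\theta' - 2\pi\beta f_0\|_\circ \leq \tau$ for an absolute constant $\tau$ that I can make arbitrarily small compared to $\pi s$. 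I condition on this phase event throughout.

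For the correct bucket $q'$ I use that $|f_0-\theta_{q'}|\leq \Delta l/(2t)$, so that for every $\beta\in[\hat\beta/2,\hat\beta]$,
\[
2\pi\beta\,|f_0-\theta_{q'}| \;\leq\; 2\pi \cdot \frac{ts}{2\Delta l}\cdot \frac{\Delta l}{2t} \;=\; \frac{\pi s}{2}.
\]
By the triangle inequality on $\|\cdot\|_\circ$, $\|2\pi\theta'-2\pi\beta\theta_{q'}\|_\circ \leq \tau + \pi s/2 < \pi s$ provided $\tau < \pi s/2$. Hence the firing condition holds at $r=q'$ and $v_{q'}$ is incremented; the only failure is that of the phase event, contributing probability at most $s$.

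For a bad bucket $q$ with $|q-q'|>3$, the event ``$v_q$ increases'' requires the firing condition to hold for some $r\in\{q-1,q,q+1\}$, each of which has $|r-q'|\geq 3$ and therefore $|a_r|:= |f_0-\theta_r|\geq 5\Delta l/(2t)$. Under the phase event of the first paragraph, the firing condition at $r$ reduces to $|\beta\, a_r \bmod 1| \leq s/2+O(\tau)$. In the continuous analog of \cite{chen2016fourier}, $\beta$ is a real uniform on $[\hat\beta/2,\hat\beta]$, and a direct Lebesgue measure argument bounds this probability by $O(s)$. Our setting is discrete: $\beta$ is a uniform integer in $[\hat\beta/2,\hat\beta]$, so I plan to proceed by an explicit integer count, using the sandwich $|a_r|\leq 2\Delta l$ and $\hat\beta|a_r|\gsim s$ to bound both the number of integer multiples of $1/|a_r|$ falling in $[\hat\beta/2,\hat\beta]$ (at most $\hat\beta|a_r|/2 + 1$) and the number of integers inside the bad band of width $s/|a_r|+O(\tau/|a_r|)$ around each such multiple. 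A union bound over the three choices of $r$ then yields the $15s$ estimate.

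The main obstacle is precisely this last step: the tight regime $|r-q'|=3$, where $\hat\beta|a_r|$ is only a small constant multiple of $s$, makes a naive integer count close to vacuous because the ``$+1$'' rounding terms in the counting argument dominate. I plan to handle it by tuning the absolute constants $s$ and those of \onegoodsamp{} together so that the phase slack $\tau$ is much smaller than $\pi s$, and by exploiting that $\hat\beta = \poly(k,\log(d/\delta))$ is large (since $\Delta l$ is at least $1/d$ throughout the execution of \locatesig{}), so that the $O(1/\hat\beta)$ integer-rounding losses from discretizing the continuous argument are negligible against the target $O(s)$ bound. Up to these constant-tuning choices, the resulting per-$r$ bound is at most $5s$, and the union over the three values of $r$ combined with the $s/4$ failure of the phase event gives the claimed $15s$.
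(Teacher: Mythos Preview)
Your treatment of the first bullet (the correct bucket $q'$) is fine and matches the paper. The gap is in the second bullet, and it is not fixed by the remedies you propose.

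The core issue is your claim that ``the $O(1/\hat\beta)$ integer-rounding losses from discretizing the continuous argument are negligible.'' This is false. Your counting gives, up to constants,
\[
\Pr\bigl[|\beta a_r \bmod 1|\le s/2\bigr]\ \lesssim\ s \;+\; \frac{s}{\hat\beta|a_r|}\;+\;|a_r|\;+\;\frac{1}{\hat\beta},
\]
and the third term $|a_r|$ is \emph{not} $O(1/\hat\beta)$: it can be as large as $\Delta l$, and $\Delta l$ stays a fixed constant at the first call of \textsc{Locate1Inner}. Making $\hat\beta$ large does nothing for this term. The paper closes this by using the standing hypothesis on the interval $\mathcal I$: since $\Delta l\le |\mathcal I|$ and $|\mathcal I|$ is chosen to be at most $s/1.5$, one gets $|a_r|\le \Delta l = O(s)$. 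You never invoke this, and without it the bound does not collapse to $O(s)$.

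Your fix for the ``tight regime'' $|r-q'|=3$ is also off target. There the obstruction is the term $s/(\hat\beta|a_r|)$, and since $\hat\beta|a_r|\approx 5s/4$ in that regime, this term is $\Theta(1)$ regardless of how large $\hat\beta$ is; tuning constants or enlarging $\hat\beta$ cannot help. The correct observation (which is what the paper defers to \cite{chen2016fourier}) is a case split: when $|a_r|<\Delta l/(st)$, one has $\beta|a_r|\le \hat\beta|a_r|<1/2$ with no wraparound, and since $\beta|a_r|\ge \hat\beta|a_r|/2\ge 5s/8>s/2+\tau/(2\pi)$, the firing condition fails \emph{deterministically} for every integer $\beta$. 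Only in the complementary range $|a_r|\ge \Delta l/(st)$ does one need a probabilistic count, and there the paper applies a corollary of Lemma~4.3 of \cite{hassanieh2012nearly} (stated in the paper as Lemma~\ref{lem:beta_integer}) together with $\Delta l\le |\mathcal I|\le s/1.5$ to get the $7.5s$ per-sign bound, hence $15s$ after the union over the two signs. Your single integer-count covering both regimes is too coarse to succeed as written.
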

\begin{proof}
    The proof follows that of Lemma 7.14 
    of \cite{chen2016fourier}. In their notation $\theta = f_0$ and $\theta_q = \wh L - \Delta l/2 + \frac{q-1/2}{t}\Delta l$. The major and only difference lies in analyzing the case when $|q - q'| > 3$  and $|\theta - \theta_q|_{\circ} \geq \frac{\Delta l}{s t}$ and showing that in this case $v_q$ will not increase with high constant probability. Here, we adopt the analysis of \cite{hassanieh2012nearly} and instead of using Lemma 6.5 of \cite{chen2016fourier}, we will use a corollary of Lemma 4.3 of \cite{hassanieh2012nearly} since our choice of $\beta$ is a random \emph{integer} rather than a random \emph{real number} in some range. 
    This lemma is as follows.
    \begin{lemma}[Corollary of Lemma 4.3 of \cite{hassanieh2012nearly}]\label{lem:beta_integer}
    For some integer number $m$, if we sample $\beta$ uniformly at random from a set $T \subseteq [m]$ of $t$ consecutive integers, for any $i \in [d]$ and a set $S \subseteq [d]$ of $l$ consecutive integers,
\[
    \Pr[\beta i \mod d \in S] \leq \frac{1}{t} + \frac{i m}{d t} + \frac{lm}{d t} + \frac{l}{it}.
\]
\end{lemma}
We use Lemma \ref{lem:beta_integer} with the following values ---
    we set $m = \lceil \frac{st}{2 \Delta l} \rceil$, 
    $T = [\frac{st}{4 \Delta l}, \frac{st}{2 \Delta l}] \cap \Z$, 
    $t \geq \frac{st}{4 \Delta l} - 1$, 
    $S = [0, \frac{3s}{4} d ] \cap \Z$, 
    $l \leq \frac{3}{4}sd + 1$, 
    $i = d |\theta - \theta_q|_{\circ}$.
Without loss of generality we can assume that $i$ is an integer by rounding $\theta =f_0$ to the nearest integer multiple of $1/d$. This is feasible as the signal will still be clustered around this rounded $f_0$ since the width of the cluster $\Delta'=\poly(k,\log(d/\delta))/d\gg 1/d$. Recall that $t = \log d$. By assuming that $d$ is large enough, we assume that $\frac{st}{4 \Delta l} \geq \max(4, 10/s + 1)$. Note that $d \frac{\Delta l}{s t} \leq i \leq d \Delta l$. Observe that $\frac{m}{t} \leq (\frac{st}{2 \Delta l} + 1) / (\frac{st}{4 \Delta l} - 1) \leq 3 $. Hence,
    \begin{align*}
        \Pr[\beta i \mod d \in S] & \leq \frac{s}{10} + \frac{3 d\Delta l}{d} + 3 \frac{(3/4) s d + 1}{d} + \frac{(3/4)sd + 1}{d \Delta l (st)^{-1} \cdot (st (4 \Delta l)^{-1} + 1) } \\
        & \leq \frac{s}{10} + o(1)  + 9/4 s + 3 s \leq 7.5 s,
    \end{align*}
	where we used the fact that $\Delta l \leq |\mathcal{I}|\leq  s/1.5$ since $s= \Theta(1)$ and $|\mathcal{I}|$ is a small enough constant, By the same bound for $S = [-\frac{3s}{4}d, 0]$ and a union bound, we conclude
    \[
        \Pr[\beta d |\theta - \theta_q|_{\circ} \mod d \in [-\frac{3s}{4}d, \frac{3s}{4}d]] \leq 15s,
    \]
    which is equivalent to the following, 
    \[
        \Pr[2\pi \beta |\theta - \theta_q|_{\circ} \mod 2 \pi \in [-\frac{3s}{4} 2\pi, \frac{3s}{4} 2\pi]] \leq 15s.
    \] Recall that we denote $\|x-y\|_{\circ} = \min_{z\in \mathbb{Z}}|x-y+2\pi z|$ as the circular distance between $x,y$ for any $x,y\in \mathbb{R}$. 
    Thus, overall we get that with probability at least $15s$, $\|2\pi \beta (\theta_q - \theta)\|_{\circ}  > (3s/4) 2\pi $. By triangle inequality, this further implies the following, $$\|2\pi \beta (\theta' - \theta_q)\|_{\circ} > \|2\pi \beta (\theta -  \theta_q)\|_{\circ} - \|2\pi \beta (\theta'-\theta)\|_{\circ}  > (3s/4)2\pi - s\pi/2 = s\pi. $$
   This implies that $v_q$ will not increase as per Line 8 of Algorithm \hyperref[alg:locateinner]{\textsc{Locate1Inner}}.
 \end{proof}
The next lemma essentially gives the final guarantee of the \hyperref[alg:locatesig]{\locatesig{}} algorithm which is almost our final result, but with constant success probability. We will then use the median trick to boost the success probability.
\begin{lemma}[Variant of Lemmas 7.15 and 7.16 in \cite{chen2016fourier}]\label{lem:locate1signal_constantprob}
	Consider the parameter setting as described in Algorithm \hyperref[alg:locatesig]{\locatesig{}}.
	The procedure \hyperref[alg:locateinner]{\textsc{Locate1Inner}} uses $R_{loc}$ legal samples and then procedure \hyperref[alg:locatesig]{\locatesig{}} runs \hyperref[alg:locateinner]{\textsc{Locate1Inner}} $D_{max}$ times to output a frequency $\wt{f_0}$ such that $|\wt{f_0} - f_0|_{\circ}\lsim \Delta' \sqrt{\Delta' d}$ with probability at least $0.9$. Moreover, \hyperref[alg:locatesig]{\locatesig{}} has time and sample complexity $\poly(k,\log(d/\delta))$.
\end{lemma}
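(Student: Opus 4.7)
My plan is to analyze \hyperref[alg:locatesig]{\locatesig{}} as an iterative binary-search-style refinement, where the $i$-th call to \hyperref[alg:locateinner]{\textsc{Locate1Inner}} shrinks a candidate interval for $f_0$ by a factor of roughly $t/4 = t'$, starting from the given interval $\mathcal{I}$. The correctness in each round will be reduced, via a standard Chernoff argument, to the per-sample voting guarantee provided by Lemma \ref{lem:single_cluster_voting}, and the final accuracy will fall out of the stopping condition on $\wh\beta$.

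Fix an iteration $i$, write $\Delta l = 1/(t')^{i-1}$ for the current interval width, and let $q' \in [t]$ be the (unique) bucket of width $\Delta l/t$ that contains $f_0$. By Lemma \ref{lem:single_cluster_voting}, in each of the $R_{loc}$ inner repetitions the vote $v_{q'}$ is incremented with probability at least $1-s$, while for every $q$ with $|q-q'|>3$ the vote $v_q$ is incremented with probability at most $15s$. Choosing the constant $s$ small enough that $1-2s > 1/2 > 16s$, a Chernoff bound gives that, with probability at least $1 - t\cdot 2^{-\Omega(R_{loc})}$, we simultaneously have $v_{q'}>R_{loc}/2$ and $v_q \le R_{loc}/2$ for every $q$ with $|q-q'|>3$. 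Therefore the bucket $q^*$ selected on Line~10 of \hyperref[alg:locateinner]{\textsc{Locate1Inner}} satisfies $|q^*-q'|\le 3$, so the returned midpoint $L^{(i)}$ lies within $O(\Delta l/t)$ of $f_0$, which in turn means $f_0$ lies inside the centered interval of width $\Delta l/t' = \Delta l^{(i+1)}$ used in the next iteration. Taking $R_{loc} = \Theta(\log(t\cdot D_{max}))$ and union-bounding over the $D_{max}$ iterations makes the total failure probability at most $0.1$, as desired.

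Next, I would handle the stopping condition and final accuracy. The loop terminates at the smallest $i$ for which $\wh\beta = ts/(2\Delta l) \gsim d/(d\Delta)^{3/2}$, which is equivalent to $\Delta l \lsim \Delta\sqrt{\Delta d}$. On success, the returned $\wt f_0 = L^{(D_{max})}$ lies within $O(\Delta l/t)$ of $f_0$, so $|\wt f_0 - f_0|_\circ \lsim \Delta'\sqrt{\Delta' d}$ as required. The break is necessary precisely because $\beta$ must be an integer in $[\wh\beta/2,\wh\beta]$: once $\wh\beta$ would shrink below $O(1)$, Lemma \ref{lem:single_cluster_voting} no longer applies since there are too few integer choices, and the initial interval $\mathcal{I}$ of constant size together with the geometric shrinkage by $t' = (\log d)/4$ per round also gives $D_{max} = O(\log d/\log\log d)$ iterations before that point, which is what Algorithm \ref{alg:locatesig} declares.

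For the sample and time complexity, each inner iteration draws one integer $\beta$ and makes one call to \hyperref[alg:OneGoodSample]{\onegoodsamp}, which by Lemma \ref{lem:getlegsamp_corr} uses $\poly(k,\log d)$ samples and runtime; amortized over $R_{loc} = \poly(k,\log d)$ repetitions and $D_{max} = O(\log d/\log\log d)$ outer rounds, the total is $\poly(k,\log(d/\delta))$. The main obstacle in this argument, already encountered in Lemma \ref{lem:single_cluster_voting}, is that the integer restriction on $\beta$ forces us to avoid the continuous-$\beta$ anti-concentration arguments of \cite{chen2016fourier}; once that lemma is in hand the refinement analysis here is a standard Chernoff-plus-union-bound computation, and the only real care needed is to check that the geometric shrinkage factor $t/4$ is consistent with the ``within $3$ buckets'' guarantee so that $f_0$ remains inside the next round's window.
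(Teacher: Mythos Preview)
Your proposal is essentially correct and matches the paper's approach: the paper simply notes that, given Lemma~\ref{lem:single_cluster_voting}, the proof is identical to Lemmas~7.15--7.16 of \cite{chen2016fourier}, and what you have written is precisely that argument (Chernoff on the $R_{loc}$ votes, union bound over the $t$ buckets and $D_{max}$ rounds, and reading off the accuracy from the stopping value of $\Delta l$).

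One genuine slip: your explanation of the break condition is inverted. Since $\Delta l = 1/(t')^{i-1}$ decreases with $i$ and $\wh\beta = ts/(2\Delta l)$, the quantity $\wh\beta$ is \emph{increasing}, and the loop breaks when $\wh\beta$ becomes too \emph{large}, not too small. The reason has nothing to do with the integrality of $\beta$; rather, Lemma~\ref{lem:getlegsamp_corr} (which underlies Lemma~\ref{lem:single_cluster_voting}) requires $\beta < C/\Delta'$, so once $\wh\beta$ exceeds roughly $1/\Delta' \asymp d/(d\Delta)^{3/2}$ the per-sample guarantee no longer applies. This does not affect your accuracy conclusion, which correctly reads off $\Delta l \lsim \Delta'\sqrt{\Delta' d}$ from the break threshold, but the stated rationale should be corrected.
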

\begin{proof}
	Equipped with Lemma \ref{lem:single_cluster_voting} 
 the proof is identical to the proofs of Lemmas 7.15 and 7.16 in \cite{chen2016fourier}.
\end{proof}
Finally, the success probability can be boosted by repeating the procedure $O(k)$ times and using the median trick.

\begin{lemma}   
{\ref{lem:on_grid_lemma_7_17}}                                                               
	Let $z$ be a $(\epsilon,\Delta')$-clustered signal around $f_0$ as per Definition \ref{def:one_clustered_signal} for any $\epsilon$ smaller than an absolute constant. Suppose $f_0\in \mathcal{I}$ for an interval $\mathcal{I}\subseteq [0,1]$ of size $|\mathcal{I}|$ smaller than an absolute constant, with $\mathcal{I}$ known to the algorithm apriori. Then procedure \hyperref[alg:freq_rec_1_cluster]{\freqrecovcluster{}} returns an $\wt f_0$ such that with probability at least $1 - 2^{-\Omega(k)}$,
	\[
	|\wt{f}_0 - f_0|_{\circ} \lsim \Delta' \sqrt{\Delta' d}.
	\]
 Moreover, \hyperref[alg:freq_rec_1_cluster]{\freqrecovcluster{}} has time and sample complexity of $\poly (k,\log(d/\delta))$.
\end{lemma}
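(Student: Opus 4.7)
The plan is to derive Lemma \ref{lem:on_grid_lemma_7_17} from Lemma \ref{lem:locate1signal_constantprob} via a standard median-of-independent-runs amplification. By Lemma \ref{lem:locate1signal_constantprob}, a single call $L_r \gets \locatesig(z,d,\Delta',z_{emp},\mathcal{I})$ returns a frequency satisfying $|L_r-f_0|_\circ \le C\Delta'\sqrt{\Delta' d}$ (for some absolute constant $C$) with probability at least $0.9$. Algorithm \freqrecovcluster{} runs $\locatesig$ independently $R = \Theta(k)$ times and outputs the median $L^*$ of $L_1,\ldots,L_R$. I would use a Chernoff bound on the indicator random variables $\mathbf{1}[|L_r-f_0|_\circ \le C\Delta'\sqrt{\Delta' d}]$ to conclude that, with probability at least $1-2^{-\Omega(k)}$, at least $2/3$ of the $R$ trials are ``good,'' i.e., produce an estimate within the target radius of $f_0$.

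Next I would argue that if at least $2R/3$ of the $L_r$'s lie within wrap-around distance $C\Delta'\sqrt{\Delta' d}$ of $f_0$, then the median $L^*$ also lies within $O(\Delta'\sqrt{\Delta' d})$ of $f_0$. The only subtlety is that the median is well-defined on the line, but the $L_r$'s live on the circle $[0,1]$ under $|\cdot|_\circ$. However, since for our parameter regime we have $\Delta'\sqrt{\Delta' d} = o(1)$, the good estimates all lie in a single arc of length $2C\Delta'\sqrt{\Delta' d}$ around $f_0$. So we may ``unwrap'' this arc to the real line (e.g., by subtracting $f_0$ modulo $1$ and mapping to a small symmetric interval around $0$) and then the standard fact that the median of $R$ real numbers, at least $2R/3$ of which lie in a given interval, must itself lie in that interval gives $|L^*-f_0|_\circ \lsim \Delta'\sqrt{\Delta' d}$. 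A minor point is to specify in Algorithm~\ref{alg:freq_rec_1_cluster} that the median is taken in the circular sense (equivalently, pick any reference point among the $L_r$ and unwrap around it); a sanity check ensures this is consistent when most samples cluster tightly.

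For the runtime and sample complexity, I would combine the cost of the subroutines as follows: \getempen{} uses $O((d\Delta')^2)=\poly(k,\log(d/\delta))$ samples (recall $\Delta'=\poly(k,\log(d/\delta))/d$); each call to \locatesig{} performs $D_{\max}=O(\log d/\log\log d)$ invocations of \textsc{Locate1Inner}, each of which uses $R_{loc}=O(1)$ legal samples produced by $\onegoodsamp$ at cost $\poly(k,\log d)$ per sample (Lemma~\ref{lem:getlegsamp_corr}); and \freqrecovcluster{} repeats \locatesig{} $O(k)$ times. Multiplying these bounds yields the stated $\poly(k,\log(d/\delta))$ overall time and sample complexity.

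The main obstacle I anticipate is formalizing the median step in the wrap-around metric; while intuitively trivial because the good estimates concentrate in a tiny arc, care is needed to argue the ``bad'' estimates cannot push the circular median outside this arc. I would handle this by explicitly defining the median as follows: pick $L_1$ as a reference, map every $L_r$ to its representative in $[L_1-1/2,L_1+1/2)$, take the usual real median, and map back. Because the good outputs lie in an arc of length $o(1)\ll 1/2$ centered near $f_0$, $L_1$ is itself in that arc with probability $\ge 0.9$, so the unwrapping is consistent and the usual median argument applies. Conditioning on the $(1-2^{-\Omega(k)})$-probability event that at least $2R/3$ of the $L_r$'s are good then completes the proof.
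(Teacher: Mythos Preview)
Your approach is correct and matches the paper's: the paper's own proof simply defers to Lemma 7.17 of \cite{chen2016fourier}, which is precisely the median-of-$O(k)$-independent-runs amplification of the constant-probability guarantee from Lemma~\ref{lem:locate1signal_constantprob}. One small simplification for the wrap-around issue: rather than anchoring at $L_1$ (which forces you to separately condition on $L_1$ being good), you can unwrap around the midpoint of the a-priori known interval $\mathcal{I}$, since $f_0\in\mathcal{I}$ and $|\mathcal{I}|$ is a small absolute constant; this makes the median step deterministic once the $2R/3$-good event holds.
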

\begin{proof}
    Follows the proof of original lemma, Lemma 7.17 in \cite{chen2016fourier}.
\end{proof}
\subsection{General case.}\label{subsection:multi_cluster_freq_recovery}
Consider the setup and parameters of Lemma \ref{lem:dtft_freqrecovery}, where we have a general signal $x^*(t) = \sum_{f\in S} a_f e^{2\pi i f t}$ where $S\subseteq [0,1]$, $|S|\leq k$ and $g(t)$ is noise satisfying $\|g\|_d^2\leq c \|x^*\|_d^2$ for a small enough constant $c>0$. Our framework to recover frequencies from such an instance will be to first reduce a general instance to a \emph{bounded} instance, these are instances where $supp(\wh{x^*})$ is only contained in some interval of length $1/B$. We will use $B=\Theta(k^2)$. Then using hashing techniques, we will show how to recover frequencies from bounded instances by reducing them to one-cluster instances and then running one-cluster recovery primitive of Lemma \ref{lem:on_grid_lemma_7_17}. 

First we introduce the basic hashing primitives that will be needed multiple times throughout this section. The following definition introduces the hash function which maps frequencies in the Fourier domain to $B$ buckets, and the filter function which when convolved with gives access to the function containing frequencies restricted to any desired bucket. This is the standard notation identical to Definitions 6.3 and 6.8 in \cite{chen2016fourier}.
\begin{definition}[Hashing  and filtering notation]\label{def:permutation}
	Let $h_{\sigma, b}:[0,1]\rightarrow \{0,1,\ldots,B-1\}$ defined as follows for any $\sigma\in \mathbb{R},b\in [0,1]$, $$h_{\sigma, b}(f^*) = \text{round}(\frac{B}{2\pi} \cdot (2\pi \sigma(f^*-b)\mod 2\pi ))\quad \forall f^*\in [0,1].$$
	Let $G^{(j)}_{\sigma, b}$ be the function which when convolved with allows us to access the function restricted to bin $j$ for all $j\in [B]$ (via $G$ as per Lemma \ref{lem:G_discrete_properties}): 
	\begin{align*}
	\wh{G}^{(j)}_{\sigma,b}(f) &= \wh{G}^{\text{dis}}(\frac{j}{B}-\sigma f-\sigma b):=\underset{i\in \mathbb{Z}}{\sum} \wh{G}(i+\frac{j}{B}-\sigma f-\sigma b)\quad \forall f\in [0,1],
  \\
  G^{(j)}_{\sigma, b}(t) &= DTFT(\wh{G}^{(j)}_{\sigma,b}(f)).
	\end{align*}
	When necessary, we will make explicit the parameters $B,\delta,w,k$ used in the construction of $G$ as per Lemma \ref{lem:G_discrete_properties} and therefore $G^{(j)}_{\sigma, b}$. 
	
\end{definition}
Next we present the notation for the functions obtained by convolving the input with the filter function $G^{(j)}_{\sigma,b}(t)$.
\begin{definition}\label{def:convolved_signal}
    Let $H(t)$ be as per Lemma \ref{lem:H_discrete_properties} for parameters $k,\delta$ and $G^{(j)}_{\sigma,b}$ as per Def. \ref{def:permutation}. Let $z^{(j)} = (x^* \cdot H)\ast G^{(j)}_{\sigma,b}$, $g^{(j)} = (g\cdot H)\ast G^{(j)}_{\sigma,b}$ and $ x^{(j)} = (x\cdot H)\ast G^{(j)}_{\sigma,b} = z^{(j)}+ g^{(j)}$, for all $j\in [B]$.
\end{definition}
Finally we present the algorithm \hyperref[alg:hashtobins]{\hashtobins{}} and its guarantees from \cite{chen2016fourier} which computes $z^{(j)}(t)$ for any integer $t$ of one's choice for all $j\in [B]$. The proof is slightly modified to work with the DTFT (see Definition \ref{def:dtft}).\\
\begin{algorithm}[H]
\caption{\hashtobins($x,H,G,B,\sigma,\alpha,b,\delta,w=0.001(\text{default})$))}\label{alg:hashtobins}
\begin{algorithmic}[1]
	\STATE Let $(G(t),\wh{G}(f))$ be the filter functions as per Definition \ref{lem:G_discrete_properties} with parameters $B,\delta,w$.
	\STATE Let $W(t) = x\cdot H(t), D =O(\log(k/\delta))$ such that $|supp(G(t))|=2BD$
	and $V$  as $V[j] = G[j]\cdot W(\sigma(j-\alpha))e^{2\pi i \sigma b}$ for $j\in [-BD,BD]$ (here $G[j] = G(j) \forall j\in [-BD,BD] $ is the discretization of $G(t)$).
	\STATE Let $v\in \mathbb{R}^{B}$ as $u[j] = \sum_{i\in [-D,D]}V[j+iB]$ $\forall j\in [B]$.
	
		\STATE \textbf{Return:} $\textsc{FFT}(v)$;
\end{algorithmic}	
\end{algorithm}
\begin{lemma}[Variant of Lemma 6.9 in \cite{chen2016fourier}]\label{lem:hashtobins}
Let $u\in \mathbb{C}^{B}$ be the output of \hyperref[alg:hashtobins]{\hashtobins{}}$(x,H,G,B,\sigma,\alpha,b,w)$, and assume $\sigma$ and $\sigma \alpha \in \mathbb{Z}$. Then $u[j] = x^{(j)}(\sigma \alpha)$ $\forall j\in [B]$ where $x^{(j)}$ is as per Definition \ref{def:convolved_signal}. Let $D=O(\log(k/\delta)/w)$ such that $|supp(G(t))|=2BD$ as per Lemma \ref{lem:G_discrete_properties}. Then \hyperref[alg:hashtobins]{\hashtobins{}} takes the following samples from $x$ - $\{x(\sigma(i-\alpha))\}_{i=-BD}^{BD}$, and runs in time $O(B\log(k/\delta)/w)$.
\end{lemma}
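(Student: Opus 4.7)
The plan is to establish the identity $u[j] = x^{(j)}(\sigma\alpha)$ by expanding the right-hand side into a discrete convolution, rewriting it through the inverse DTFT of the aliased filter $\wh{G}^{(j)}_{\sigma,b}$, and recognizing the resulting expression as the length-$B$ FFT of the aliased windowed signal $v$ that the algorithm produces. The runtime and sample complexity bounds are then obtained by a direct accounting of the operations.

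First I would write $x^{(j)}(\sigma\alpha) = [(W \ast G^{(j)}_{\sigma,b})](\sigma\alpha) = \sum_{t \in \mathbb{Z}} W(t)\, G^{(j)}_{\sigma,b}(\sigma\alpha - t)$, where $W = x\cdot H$. Using Definition \ref{def:permutation}, the filter satisfies $G^{(j)}_{\sigma,b}(\tau) = \int_0^1 \wh{G}^{(j)}_{\sigma,b}(f) e^{2\pi i f \tau} df = \int_0^1 \sum_{i\in\mathbb{Z}} \wh{G}\bigl(i + j/B - \sigma f - \sigma b\bigr)\, e^{2\pi i f \tau} df$. Since $\wh{G}$ has compact support (Lemma \ref{lem:G_discrete_properties}), only finitely many $i$ contribute. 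Exchanging the sum and integral and performing the change of variable $u = \sigma f + \sigma b - j/B - i$ (with $df = du/\sigma$), the aliased filter collapses back into a single integral over the full real line against the \emph{continuous} Fourier transform $\wh{G}$. Inverting this continuous transform gives the identity $G^{(j)}_{\sigma,b}(\sigma\alpha - t) = G(k)\, e^{2\pi i \sigma b}\, e^{-2\pi i j k / B}\, \mathbbm{1}[t = \sigma(k - \alpha)\text{ for some }k\in\mathbb{Z}]$, where the $\sigma \in \mathbb{Z}$ and $\sigma\alpha \in \mathbb{Z}$ hypotheses ensure that $\sigma(k-\alpha)$ is the only integer $t$ at which $W$ is sampled for each $k$. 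Substituting back gives
\begin{equation*}
x^{(j)}(\sigma\alpha) \;=\; e^{2\pi i \sigma b} \sum_{k \in \mathbb{Z}} G(k)\, W(\sigma(k-\alpha))\, e^{-2\pi i j k / B} \;=\; \sum_{k = -BD}^{BD} V[k]\, e^{-2\pi i j k / B},
\end{equation*}
using the support bound $|\mathrm{supp}(G)| \le 2BD$ from Lemma \ref{lem:G_discrete_properties} and the definition of $V[k]$ in the algorithm. Finally, grouping $k = r + iB$ with $r \in \{0,\ldots,B-1\}$ and $i \in [-D,D]$ collapses the sum into $\sum_{r=0}^{B-1} v[r]\, e^{-2\pi i j r / B} = \textsc{FFT}(v)[j] = u[j]$, which is the claimed identity.

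For the resource bounds, the samples of $x$ are exactly $\{x(\sigma(k-\alpha))\}_{k=-BD}^{BD}$, numbering $2BD + 1 = O(B\log(k/\delta)/w)$ by $D = O(\log(k/\delta)/w)$. The loop computing $V$ and the aliasing into $v$ are both $O(BD)$, and the length-$B$ FFT costs $O(B \log B)$, giving total runtime $O(B\log(k/\delta)/w)$.

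The main obstacle I anticipate is the careful handling of the change of variable from continuous frequency $f \in [0,1]$ (used by the DTFT) to the real-line integral against the continuous transform $\wh{G}$ in such a way that the $\sigma$-Jacobian cancels exactly and no stray phase factors survive. The integrality assumptions $\sigma \in \mathbb{Z}$ and $\sigma\alpha \in \mathbb{Z}$ are precisely what is needed for this step: without them, the delta-like collapse of $G^{(j)}_{\sigma,b}(\sigma\alpha - t)$ onto integers $t = \sigma(k-\alpha)$ would be replaced by an approximate identity that leaks energy between samples, and the equality $u[j] = x^{(j)}(\sigma\alpha)$ would no longer hold on the nose.
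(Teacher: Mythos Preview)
Your approach is valid and genuinely differs from the paper's. The paper does not redo the computation from scratch: it invokes Lemma~6.9 of \cite{chen2016fourier} as a black box to obtain the continuous-frequency identity
\[
\wh{u}[j] \;=\; \int_{-\infty}^{\infty} \wh{W}(s)\,\wh{G}^{\mathrm{dis}}\!\left(\tfrac{j}{B} - \sigma s - \sigma b\right) e^{-2\pi i \sigma \alpha s}\,ds,
\]
and then uses the integrality of $\sigma$ and $\sigma\alpha$ to observe that the integrand is $1$-periodic in $s$. Folding $\mathbb{R}$ into $[0,1]$ replaces the continuous Fourier transform $\wh{W}$ by its periodization $\sum_{i\in\mathbb{Z}}\wh{W}(\cdot+i)$, which is exactly the DTFT of the integer-sampled $W=x\cdot H$, and the discrete-time identity follows. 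You instead work entirely in the time domain: you compute $G^{(j)}_{\sigma,b}(n)$ directly, show that its support lies on $\sigma\mathbb{Z}$ (this is where $\sigma\in\mathbb{Z}$ enters, via the geometric-sum collapse $\int_0^1 e^{2\pi i f(\sigma m+n)}df = \mathbbm{1}[\sigma m+n=0]$), and then read off the length-$B$ DFT structure from the convolution sum. Your route is more self-contained and avoids referencing the continuous-time proof; the paper's route is shorter because it reuses the existing CKPS computation verbatim and only patches the final line.

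One caution: the explicit formula you display for $G^{(j)}_{\sigma,b}(\sigma\alpha - t)$ is not quite right as written. Expanding $\wh{G}^{\mathrm{dis}}$ as $\sum_m G(m)e^{-2\pi i(\cdot)m}$ and integrating gives
\[
G^{(j)}_{\sigma,b}(-\sigma m) \;=\; G(m)\,e^{-2\pi i j m/B}\,e^{2\pi i \sigma b m},
\]
so the $b$-dependent phase is $e^{2\pi i \sigma b m}$ (varying with $m$), not the constant $e^{2\pi i \sigma b}$ you wrote, and the indicator condition on $t$ picks out $t = \sigma\alpha + \sigma m$ rather than $t=\sigma(k-\alpha)$. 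You will need to reconcile these signs with the pseudocode's conventions when writing the details; this is bookkeeping and does not undermine your strategy.
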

\begin{proof}
The proof is identical to the proof of Lemma 6.9 in \cite{chen2016fourier}, but we just need to observe that in the last line of the proof where they conclude that $$\wh{u}[j] = \int_{-\infty}^{\infty}\wh{W}(s)\cdot \wh{G^{dis}}(j/B - \sigma s -\sigma b)e^{-2\pi i \sigma a s} ds,$$ for all $j\in [B]$, $\wh{W}(f) = \wh{x\cdot H}(f)$ is the continuous Fourier transform of $(x\cdot H)(t)$ which is considered to be a continuous signal from $\mathbb{R}\rightarrow \mathbb{C}$ (Here the continuous version of $H$ as per Lemma 6.6 of \cite{chen2016fourier}, i.e. before discretization in time domain to obtain the $H$ function as per Lemma \ref{lem:H_discrete_properties}). We however need to consider the discretized version of this signal. Now observe that since $\sigma$ and $\sigma a$ are integers we have that,
$$\wh{G}^{dis}(j/B - \sigma (s+i) -\sigma b)e^{-2\pi i \sigma a (s+i)}=\wh{G}^{dis}(j/B - \sigma s -\sigma b)e^{-2\pi i \sigma a s},$$ for all $i \in \mathbb{Z}$, where $\wh{G}^{dis}$ is as per Definition \ref{def:permutation}. Thus we can rewrite the definition of $\wh{u}[j]$ as $$\wh{u}[j]= \int_{0}^{1}\left(\sum_{i\in \mathbb{Z}}\wh{W}(s+i)\right)\cdot \wh{G^{dis}}(j/B - \sigma s -\sigma b)e^{-2\pi i \sigma a s} ds,$$
where $\sum_{i\in \mathbb{Z}}\wh{W}(s+i)$ is the exactly the DTFT of $(x\cdot H)(t)$ when $t$ is restricted to $\mathbb{Z}$. Thus $\wh{u}(j) = (x\cdot H \ast G^{(j)}_{\sigma,b})(\sigma a)=x^{(j)}(\sigma a)$ where $G^{(j)}_{\sigma,b}(t)$ is as per definition \ref{def:permutation} and $(x\cdot H)(t),G^{(j)}_{\sigma,b}(t)$ are both discrete time signals as is in our case.
\end{proof}

Equipped with these tools we now describe a clustering based pre-processing step, also described in \cite{chen2016fourier}, applied to general instances before explaining what are bounded instances and how to reduce to these instances. 
\begin{definition}[Clustering]\label{def:clustering}
Consider $H$ as per Lemma \ref{lem:H_discrete_properties} for parameters $k,\delta$. For any two frequencies $f_1,f_2$ in the support of $\wh {x^*} (f)$ we say that $f_1 \sim f_2$ if their supports overlap after convolving with $\wh H$, i.e. $supp(\wh {H \cdot e^{2\pi i f_1 t}})\cap supp(\wh {H \cdot e^{2\pi i f_2 t}})\neq \emptyset$. We cluster frequencies by taking a transitive closure under this relation $\sim$ and define $C_1,C_2,\ldots, C_l$, where $0\leq l\leq k$, as the clusters. Thus, each $C_i \subseteq [0,1]$ is an interval, $C_1\cup \ldots \cup C_l = supp(\wh {x^* \cdot H }(f))$ and $C_i \cap C_j = \emptyset$ for any $i \neq j$. 
 \end{definition}
 \begin{remark}\label{remark:cluster_width}
  Let $\Delta_h= |supp(\wh{H}(f))|$ as per Lemma \ref{lem:H_discrete_properties}, and let the $\Delta$ parameter of Lemma \ref{lem:dtft_freqrecovery} be set such that it satisfies $\Delta\geq k\Delta_h$. Note that thus width of any cluster is at most $ k\Delta_h \leq \Delta = \poly(k,\log(d/\delta))/d$.
 \end{remark}
$\Delta$ will be fixed throughout this section as per Remark \ref{remark:cluster_width}. This is the same $\Delta$ as set in the statement of Lemma \ref{lem:dtft_freqrecovery}. Equipped with these tools and notations, we now define a bounded instance as follows.

\begin{definition}[$(\mathcal{I},\delta')$-bounded instance]\label{def:bounded_freq_case}
	Let $\mathcal{I}\subseteq [0,1]$ be an interval satisfying $|\mathcal{I}|\leq 1/B$, and $\epsilon>0$ be a small enough constant.
	Let $x^*(t) = \sum_{f\in S} a_f e^{2\pi i f t}$ where $S\subseteq [0,1]$, $|S|\leq k$. Let $H(t)$ be as per Lemma \ref{lem:H_discrete_properties} for parameters $k,\delta$, and $g(t)$ be noise. Then the instance $x(t) = (x^* (t)+ g(t))\cdot H(t)$ is an $(\mathcal{I},\delta')$-bounded instance if the following is satisfied - All clusters $C$ as per Definition \ref{def:clustering} applied to $x^*\cdot H$ satisfy $C\subseteq \mathcal{I}$ and $\int_{[0,1]\setminus \mathcal{I}}|\wh {g\cdot H} (f)|^2 df\leq \delta'$. The noise threshold $\mathcal{N}^2$ is defined as $$\mathcal{N}^2 = \frac{\delta}{d} \int_{[0,1]}|\wh{x^*\cdot H}(f)|^2 df + \frac{1}{d}\int_{\mathcal{I}}|\wh{g\cdot H}(f)|^2df+ k\delta'/d\epsilon.$$
\end{definition}
 We will only be able to recover ``heavy'' frequencies from a bounded instance, and thus next we present their definition.

\begin{definition}[Heavy frequency]\label{def:heavy_freq}
	Consider the setup of Definition \ref{def:bounded_freq_case}. Call a frequency $f^*\in [0,1]$ heavy frequency if it satisfies
	\begin{align*}
		\int_{f^*-\Delta}^{f^*+\Delta}|\widehat{x^*\cdot H}(f)|^2 df \gsim \frac{d \mathcal{N}^2}{k}, 
	\end{align*}
	where the interval $[f^*-\Delta,f^*+\Delta]$ is modulo 1.
\end{definition}

For any heavy frequency $f^*$ we would be interested in isolating the cluster of width $O(k\Delta)$ around it by making sure no other heavy frequency maps to the bucket $f^*$ maps to. This is to ensure that the signal restricted to that bucket becomes a one-cluster signal, thus allowing the application of Lemma \ref{lem:on_grid_lemma_7_17} to recover $f^*$ approximately. Thus next we define what a ``well-isolated'' frequency is.
\begin{definition}[Well-isolated frequency, from \cite{chen2016fourier}]\label{def:well_isolated_f}
	$f^*$ is well-isolated under the hashing $(\sigma, b)$ if, for j=$h_{\sigma, b}(f^*)$, the signal $z^{(j)}$ (as per Definition \ref{def:convolved_signal}) satisfies 
	$$ \int_{[0,1]\backslash (f^*-200k\Delta, f^*+200k\Delta)} |\wh{z}^{(j)}(f)|^2df \lesssim \epsilon d \mathcal{N}^2/k.$$
\end{definition}

\subsubsection{Reducing from a bounded multi cluster instance to $\Theta(k^2)$ one cluster instances.}\label{subsubsection:multi_cluster_bounded_freqs}
In this section, we show how to reduce a bounded instance to $B=\Theta(k^2)$ clustered instances.

This is presented as the main result of this section below, Lemma \ref{lem:on_grid_lemma_7_6}. It extends Lemma 7.6 of \cite{chen2016fourier} for the case when we can only take samples on integer points in time domain. It essentially states that if we hash the at most $k$ clusters in the Fourier spectrum of the input into $B=\Theta(k^2)$ buckets then all clusters are simultaneously isolated. Moreover the clusters which are heavy and the corresponding bins to which they hash have high SNR then they are essentially one-clustered as per Def. \ref{def:one_clustered_signal}. We first state this main lemma, then proceed with sub lemmas and their proofs that will build up to the proof of this main lemma.
\begin{lemma}[Variant of Lemma 7.6 of \cite{chen2016fourier} for on-grid samples]\label{lem:on_grid_lemma_7_6} Let $x(t) = (x^*(t)+g(t))\cdot H(t)$ be a $(\mathcal{I},\delta')$-bounded instance as per Definition \ref{def:bounded_freq_case}. Let $G^{(j)}_{\sigma,b}(t)$ for all $j\in [B]$ be as per Def. \ref{def:permutation} and Lemma \ref{lem:G_discrete_properties} for parameters $B=\Theta(k^2),\delta=\epsilon\delta/k$, $w=0.01$ and $k=k$. Apply the clustering procedure of Definition \ref{def:clustering} to $x^*\cdot H$ and let $k_C\leq k$ be the total number of clusters in $\wh{x^*\cdot H}$. Let $\sigma$ be a u.a.r \textbf{integer} in $[\frac{1}{200Bk\Delta}, \frac{1}{100Bk\Delta}]$ and $b$ be a u.a.r. real number from $[0, \frac{1}{\sigma}]$. With prob. at least $1-k_C^2/B-k_C/99k$ the following is true - 

Consider any cluster $C$ with $f^*$ the midpoint of $C$. Then $f^*$ is well-isolated (Def. \ref{def:well_isolated_f}). Moreover for $j=h_{\sigma,b}(f^*)$ if,
\begin{enumerate}
    \item $f^*$ is heavy as per Definition \ref{def:heavy_freq},
    \item $\int_{[0,1]} |\wh{g}^{(j)}(f)|^2df\lsim \epsilon \int_{f^*-\Delta}^{f^*+\Delta} |\wh{x^*\cdot H}(f)|^2df$ ($g^{(j)}$ as per Definition \ref{def:convolved_signal}),
\end{enumerate}
then $x^{(j)}$ (as per Definition \ref{def:convolved_signal}) is an $(2\sqrt{\epsilon}, 200k\Delta)$-one clustered signal around $f^*$ with interval $\mathcal{I}$. 
\end{lemma}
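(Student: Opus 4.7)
The plan is to mirror the proof of Lemma 7.6 in \cite{chen2016fourier} while replacing their collision analysis (which crucially used real-valued $\sigma$) with one that handles the restriction $\sigma \in \mathbb{Z}$. I would decompose the argument into three pieces: (i) a pairwise collision bound for cluster midpoints, yielding well-isolation of all clusters via a union bound, (ii) a Markov-style bound on the per-bucket noise energy, and (iii) assembly of the one-cluster conclusion for heavy, low-noise clusters.

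The key technical step is (i), the bound
\[
\Pr_{\sigma}\bigl[B\sigma|f_1^*-f_2^*|_\circ \bmod B \in (-c, c)\bigr] = O(c/B)
\]
for any two distinct cluster midpoints $f_1^*, f_2^* \in \mathcal{I}$ and constants $c$ determined by the filter $G$. In \cite{chen2016fourier} this is immediate from the uniformity of real-valued $\sigma$. With $\sigma$ uniform on $[\frac{1}{200Bk\Delta}, \frac{1}{100Bk\Delta}] \cap \mathbb{Z}$, I would crucially exploit the bounded-support hypothesis $|\mathcal{I}| \leq 1/B$, which yields $\alpha := B|f_1^*-f_2^*|_\circ \leq 1$. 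The sequence $\{B\sigma\alpha \bmod B\}$ is then an arithmetic progression with step at most $1$ wrapping around $[0, B]$, so within each period only $O(1/\alpha)$ consecutive $\sigma$'s fall into a window of length $2c$ out of $B/\alpha$ total per period, yielding density $O(c/B)$. For midpoint separations too small for the range of $\sigma$ to span a full period, the claim is vacuous for our purposes: $f_2^*$'s entire cluster (of width at most $\Delta$, by Remark \ref{remark:cluster_width}) already lies inside the window $(f^*-200k\Delta, f^*+200k\Delta)$ around $f^* = f_1^*$, so there is no outside-the-window energy to control. A union bound over the $\binom{k_C}{2}$ cluster pairs then yields the $k_C^2/B$ failure term.

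For (ii), I would apply Parseval to the filter decomposition to show $\sum_j \int |\wh{g^{(j)}}(f)|^2 df \lsim \int |\wh{g \cdot H}(f)|^2 df$, then split the right side using $\int_{[0,1] \setminus \mathcal{I}} |\wh{g \cdot H}(f)|^2 df \leq \delta'$ from the bounded-instance hypothesis. Dividing by $B$ yields an average per-bucket noise of $O((\int_\mathcal{I}|\wh{g\cdot H}(f)|^2 df + \delta')/B)$, and a Markov argument over the at most $k_C$ heavy clusters, combined with the heaviness threshold in Definition \ref{def:heavy_freq}, yields the $k_C/(99k)$ failure term.

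For (iii), given well-isolation and low bucket noise, the Fourier energy of $z^{(j)}$ outside $(f^*-200k\Delta, f^*+200k\Delta)$ is bounded by the contributions of other clusters (suppressed by filter decay to $\lsim \epsilon d \mathcal{N}^2/k$) plus bucket noise (bounded by hypothesis 2 of the lemma), establishing Property I of Definition \ref{def:one_clustered_signal} with clustering parameters $2\sqrt{\epsilon}$ and $\Delta' = 200k\Delta$. Property II follows from the compact time support of $G^{(j)}_{\sigma,b}$ (Lemma \ref{lem:G_discrete_properties}) combined with the near-compact time support of $x^* \cdot H$ on $[d]$ guaranteed by Lemma \ref{lem:H_discrete_properties}. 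The main obstacle is precisely the collision bound in (i): the integer restriction on $\sigma$ precludes the direct uniform-distribution argument of \cite{chen2016fourier}, and the density argument outlined above only goes through once we have reduced general instances to bounded ones, which is exactly the role of the reduction in Section \ref{subsubsection:multi_cluster_reduce}.
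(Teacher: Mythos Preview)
Your decomposition into (i), (ii), (iii) and the collision argument in (i) are on the right track and essentially match the paper's Lemma \ref{lem:on_grid_claim_6_4}. However, there is a structural misunderstanding in (ii) that leads to a genuine gap.

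The low-noise condition~2 in the lemma statement is a \emph{hypothesis}, not a conclusion you need to establish. The lemma asserts that with the stated probability, all cluster midpoints are well-isolated; then, \emph{conditional on} a cluster satisfying assumptions~1 and~2, the one-cluster conclusion follows deterministically. So your Markov argument on per-bucket noise is not needed here (it belongs in the downstream analysis, e.g., Lemma \ref{lem:const_prob_bounded_recovery} and the proof of Lemma \ref{lem:dtft_freqrecovery}). In particular, the $k_C/(99k)$ failure term does \emph{not} come from a noise bound.

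What your outline is missing is the role of the random shift $b$. In the paper's proof (Lemma \ref{lem:on_grid_7_19_well_isolated}), for each cluster $C$ the shift $b$ is used to ensure that the entire cluster lands in the ``flat'' region of the filter, where $|\wh{G}^{(j)}_{\sigma,b}(f)| \in [1-\epsilon\delta/k,1]$; this ``niceness'' event fails with probability at most $1/(99k)$ per cluster, and the union bound over $k_C$ clusters is the source of the $k_C/(99k)$ term. Niceness is then essential in two places: it gives the energy preservation $\int_{f^*-200k\Delta}^{f^*+200k\Delta} |\wh{z}^{(j)}(f)|^2 df \in [1-\delta,1] \int_{f^*-\Delta}^{f^*+\Delta} |\wh{x^*\cdot H}(f)|^2 df$ needed for Property~I, and it underlies the time-domain concentration (Lemma \ref{lem:concentrated_in_time}) by allowing one to write $z^{(j)}$ as $x^*_C \cdot H$ plus a small error and then invoke Property~V of $H$. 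Your treatment of Property~II via ``compact time support of $G^{(j)}_{\sigma,b}$'' is not quite right either: $G$ has compact support but convolving with it does not by itself keep $x\cdot H$ concentrated on $[d]$; the argument goes through the niceness-based approximation $z^{(j)} \approx x^*_C \cdot H$ and the decay of $H$ outside $[d]$.
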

The particular choice of $\sigma$ is because we need to ensure that the time points at which \hyperref[alg:hashtobins]{\hashtobins{}} accesses the input as per Lemma \ref{lem:hashtobins} are integers. Thus $\sigma$ and $\alpha$ must be chosen such that $\sigma$ and $\sigma \alpha$ are both integers (see Lemma \ref{lem:hashtobins}). This is done by choosing $\sigma$ to be a random integer in a bounded interval and $\alpha$ such that $\sigma \alpha$ is an integer and it equals the time point at which we want to access the functions $x^{(j)}$. However this introduces complications because if $\sigma$ is a random integer as opposed to a random real number in a bounded interval then the collision behaviour under the random hashing changes. Thus we first start with the main lemma that bounds the collision probability of two frequencies under the hash function of Definition \ref{def:permutation}. This proof is different from Claim 6.4 in \cite{chen2016fourier} because $\sigma$ is not being random real number in a bounded interval, we can only ensure small collision probability for frequencies not more than $1/B$ apart.

\begin{lemma}[Hashing collision probabilities, analog of Claim 6.4 of \cite{chen2016fourier}]\label{lem:on_grid_claim_6_4}
Let $\sigma$ be a u.a.r. integer in $[\frac{1}{200Bk\Delta}, \frac{1}{100Bk\Delta}]$. Let $h_{\sigma,b}$ be as per Definition \ref{def:permutation} for $\sigma$ and arbitrary $b\in [0,1]$. Then,
\begin{enumerate}
    \item For any $f_1,f_2\in [0,1]$ s.t. $200k\Delta \leq |f_1-f_2| < 200(B/2-0.5)k\Delta$, then $\mathbb{P}_{\sigma}[h_{\sigma, b}(f_1) = h_{\sigma, b}(f_2)] = 0$.
    \item For any $f_1,f_2\in [0,1]$ s.t. $200(B/2-0.5)k\Delta <|f_1-f_2| < \frac{1}{B}$, then $\mathbb{P}_{\sigma}[h_{\sigma, b}(f_1) = h_{\sigma, b}(f_2)] \lesssim \frac{1}{B}$.
\end{enumerate}
\end{lemma}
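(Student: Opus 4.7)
The plan is to reduce both parts to a single modular-arithmetic event and analyze it as $\sigma$ varies over its random integer window. Let $\Delta' := |f_1 - f_2|$. The definition of $h_{\sigma,b}$ in Definition \ref{def:permutation} implies that $h_{\sigma,b}(f_1) = h_{\sigma,b}(f_2)$ can hold only when the fractional parts $\sigma(f_1-b)\bmod 1$ and $\sigma(f_2-b)\bmod 1$ agree (modulo $B$, after multiplying by $B$) up to the $1/B$ rounding tolerance; since the $b$-shift cancels in the difference, the collision event is contained in
\[
\mathcal{E} \;:=\; \bigl\{\operatorname{dist}(\sigma \Delta',\, \mathbb{Z}) \le 1/B\bigr\}.
\]
It therefore suffices to bound $\Pr_\sigma[\mathcal{E}]$ where $\sigma$ is uniform over the integers in $I_\sigma := [1/(200 B k \Delta),\, 1/(100 B k \Delta)]$, a set of cardinality $N \ge 1/(200 B k \Delta) - 1$.

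For part~(1), the hypothesis $\Delta' \in [200 k \Delta,\, 200(B/2-0.5) k\Delta)$ together with $\sigma \in I_\sigma$ yield
\[
\sigma \Delta' \in \left[\tfrac{1}{B},\; 1-\tfrac{1}{B}\right] \subset (0,1),
\]
so $\sigma\Delta'\bmod 1 = \sigma\Delta'$ is at distance at least $1/B$ from $\mathbb{Z}$. Hence $\mathcal{E}$ is empty and $\Pr_\sigma[\mathcal{E}] = 0$.

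For part~(2), $\Delta' < 1/B$, so the orbit $\{\sigma\Delta'\}_{\sigma \in I_\sigma}$ is an arithmetic progression with step strictly less than $1/B$. A direct counting argument will bound $\Pr_\sigma[\mathcal{E}]$: for each integer $m$, the set $\{\sigma \in \mathbb{R} : |\sigma\Delta' - m| \le 1/B\}$ is an interval of length $2/(B\Delta')$ and therefore contains at most $2/(B\Delta') + 1$ integers, while $\sigma\Delta'$ sweeps an interval of length at most $N\Delta'$ as $\sigma$ ranges over $I_\sigma$, so at most $N\Delta' + 2/B + 1$ choices of $m$ can contribute. Multiplying the two counts and dividing by $N$ gives
\[
\Pr_\sigma[\mathcal{E}] \;\lesssim\; \frac{1}{B} + \Delta' + \frac{1}{N B \Delta'} + \frac{1}{N}.
\]
Each error term is $O(1/B)$: $\Delta' < 1/B$ holds by hypothesis; $N\Delta' \ge (1/(200 B k\Delta))\cdot 100(B-1)k\Delta = \Omega(1)$ yields $1/(N B \Delta') \lesssim 1/B$; and $1/N \le 200 B k \Delta \lesssim 1/B$ in the parameter regime $B^2 k \Delta \ll 1$ that holds throughout (since $B = \Theta(k^2)$, $\Delta = \poly(k,\log(d/\delta))/d$, and $d$ is taken large enough).

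The hard part is precisely the issue flagged in the technical overview (Section \ref{subsec:intro_sparsefft}): because $\sigma$ is restricted to integers rather than reals in an interval, $\sigma\Delta' \bmod 1$ need not be close to uniform on $[0,1)$, and the continuous collision bound of \cite{chen2016fourier} does not apply as a black box. The bounded-instance assumption $\Delta' \lesssim 1/B$ is what rescues the argument: it forces the orbit's step to lie below the target tolerance, and the choice of $I_\sigma$ guarantees $N\Delta' = \Omega(1)$, which is just enough for a coarse equidistribution estimate to yield the desired $1/B$ collision bound.
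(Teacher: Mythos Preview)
Your proposal is correct and follows essentially the same approach as the paper: both reduce the collision event to $\operatorname{dist}(\sigma|f_1-f_2|,\mathbb{Z})$ being small, handle part~(1) by the direct range computation $\sigma\Delta'\in[1/B,1-1/B]$, and handle part~(2) by a counting argument on the number of integers $\sigma$ landing in each bad interval, divided by the size of the $\sigma$-window. The paper's write-up differs only cosmetically---it uses the threshold $1/(2B)$ instead of your $1/B$ (a harmless constant) and bounds the ratio of bad to good $\sigma$ rather than bad to total, but the arithmetic and the conclusion are the same.
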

\begin{proof}
For convenience, let $F = |f_1-f_2|$. For a hash collision to occur, $h_{\sigma, b}(f_1) = h_{\sigma, b}(f_2)$, it must be that $2\pi \sigma F \in (s\cdot 2\pi - \frac{2\pi}{2B}, s\cdot 2\pi + \frac{2\pi}{2B})$ for some integer $s$ as per the definition of $h_{\sigma,b}$ in Def. \ref{def:permutation}.
    \begin{enumerate}
        \item The proof of this case is unchanged when $\sigma$ is restricted to integers. For completeness: if $200k\Delta \leq F < 200(B-0.5)k\Delta$, then $\frac{2\pi}{B} \leq |2\pi\sigma F| < \frac{2\pi}{100Bk\Delta} 200(B/2-0.5)k\Delta = (1-\frac{1}{2B})2\pi$. Thus collision is impossible. 
        \item 
         The collision condition is equivalent to $\sigma F \in (s - \frac{1}{2B}, s + \frac{1}{2B})$ for some integer $s$.  Note that the range of possible values of $s$ is $\{\lfloor \frac{F}{200Bk\Delta }\rfloor,\ldots,\lceil \frac{F}{100Bk\Delta}\rceil \}$.The rest of the proof proceeds by a simple counting argument based on lengths of intervals. Assume $F < \frac{1}{B}$. 
         For any integer $s$, there are at most $\frac{1/B}{F}+1=\frac{1}{BF}+1$ integer values of $\sigma$ such that $\sigma F \in (s - \frac{1}{2B}, s + \frac{1}{2B})$. Taking into account the total number of possible values of $s$, we get that the total number of such integer values $\sigma$ is at most 
        \begin{align*}
        	(\frac{1}{BF}+1)\cdot ( \lceil \frac{F}{100Bk\Delta}\rceil - \lfloor \frac{F}{200Bk\Delta }\rfloor +1) &\leq  (\frac{1}{BF}+1) \cdot (\frac{F}{200Bk\Delta}+2)\\
        	& \lsim \frac{1}{200B^2k\Delta}.
        \end{align*}
        Moreover, there are at least $\frac{(B-1)/B}{F} = \frac{B-1}{BF}$ integer values of $\sigma$ such that $s + \frac{1}{2B} < \sigma F < (s+1) - \frac{1}{2B}$, i.e. such that $\sigma F$ lies in between this interval and the next one. Again, taking into account the total possible values of $s$ we get that the total number of such integer $\sigma$ is at least
        \begin{align*}
        	(\frac{B-1}{BF})\cdot (\max (  \lceil \frac{F}{100Bk\Delta}\rceil - \lfloor \frac{F}{200Bk\Delta }\rfloor -3,1))
        	&\gsim (\frac{B-1}{BF})\cdot(\frac{F}{200Bk\Delta})\\
        	&\gsim \frac{1}{200Bk\Delta}.
        \end{align*}
        Note that using the notation $\lsim$ and $\gsim$ is correct as $\frac{1}{BF}$ and $\frac{F}{200Bk\Delta}$ are at least $\Omega(1)$. Thus we can upper bound the collision probability up to constants by $\frac{1/200B^2k\Delta}{1/200Bk\Delta}=\frac{1}{B}$.
    \end{enumerate}
\end{proof}

 We need the following definition before proceeding.
\begin{definition}[Nice cluster]\label{def:nice_cluster}
Consider the setup of Lemma \ref{lem:on_grid_lemma_7_6}.    Let $C$ be a cluster with midpoint $f^*$ as per Definition \ref{def:clustering}. Let $j=h_{\sigma,b}(f^*)$ and $\wh{G}^{(j)}_{\sigma,b}(f)$ be the corresponding filter function as per Definition \ref{def:permutation}. Then $C$ is a nice cluster if $|\wh{G}^{(j)}_{\sigma,b}(f)|\geq 1-\epsilon\delta/k$ for every $f\in C$.
\end{definition}
Now we move on to state and prove the lemma that bounds the probability of isolating all frequencies (as per Definition \ref{def:well_isolated_f}) under the random hashing $h_{\sigma,b}$.
\begin{lemma}[Variant of Lemma 7.19 of \cite{chen2016fourier} for on-grid samples]\label{lem:on_grid_7_19_well_isolated} Consider setup of Lemma \ref{lem:on_grid_lemma_7_6}. Then with probability $1-k_C^2/B-k_C/99k$ over the randomness in $\sigma,b$ the following holds - 

For all clusters $C$ with $f^*$ being the midpoint of $C$, $f^*$ is well isolated and $C$ is nice (as per Definition \ref{def:nice_cluster}). Moreover, for $j=h_{\sigma,b}(f^*)$ and $z^{(j)}$ as per Definition \ref{def:convolved_signal},
 \begin{equation*}
 	\int_{f^* -200k\Delta}^{f^* +200k\Delta} |\wh z^{(j)}(f)|^2 df \in [1-\delta,1] \int_{f^*-\Delta}^{f^*+\Delta} |\wh{x^*\cdot H}(f)|^2 df.
 \end{equation*}
\end{lemma}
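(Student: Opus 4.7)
The plan is to exploit the two sources of randomness in $(\sigma,b)$ separately: the integer $\sigma$ will send distinct cluster midpoints to distinct buckets via Lemma~\ref{lem:on_grid_claim_6_4}, while the continuous shift $b\in[0,1/\sigma]$ will push each cluster into the flat interior of its bucket so that $|\wh G^{(j)}_{\sigma,b}|\approx 1$ on the cluster. For the collision bound, the $(\mathcal{I},\delta')$-bounded hypothesis confines all cluster midpoints to an interval of length $\le 1/B$, so any distinct pair $(f^*_1,f^*_2)$ with $|f^*_1-f^*_2|\ge 200k\Delta$ falls under case~(ii) of Lemma~\ref{lem:on_grid_claim_6_4}, giving $\Pr_\sigma[h_{\sigma,b}(f^*_1)=h_{\sigma,b}(f^*_2)]\lsim 1/B$; a union bound over the $\binom{k_C}{2}$ such pairs gives total failure probability at most $k_C^2/B$, while pairs closer than $200k\Delta$ already lie inside each other's $200k\Delta$-neighborhood and so a possible collision among them does not violate the well-isolation condition.

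For niceness, fix any collision-free $\sigma$ and a cluster $C$ of width $\le k\Delta$ (Remark~\ref{remark:cluster_width}) with midpoint $f^*$. Since $b$ is uniform on $[0,1/\sigma]$---exactly one period of the permutation $f\mapsto \sigma(f-b)\bmod 1$---the bucket offset $\pi_{\sigma,b}(f^*)-h_{\sigma,b}(f^*)\in[-1/2,1/2]$ is uniformly distributed regardless of the conditioned value of $\sigma$. The hashed image $\pi_{\sigma,b}(C)$ has length at most $B\sigma k\Delta\le 1/100$ by the choice of $\sigma\in[1/(200Bk\Delta),1/(100Bk\Delta)]$, so $C$ fails to be nice only when $\pi_{\sigma,b}(f^*)$ lies too close to a bucket boundary; by the parameter choices this happens with probability at most $1/(99k)$ per cluster, and a union bound over $k_C$ clusters gives total failure $\le k_C/(99k)$.

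Conditioning on the complements of both bad events above, decompose $\wh{x^*\cdot H}=\sum_i \wh{y_i}$ with $\mathrm{supp}(\wh{y_i})\subseteq C_i$, and set $j=h_{\sigma,b}(f^*)$ so that $\wh{z^{(j)}}=\sum_i \wh{y_i}\,\wh G^{(j)}_{\sigma,b}$. For any $C_i$ whose midpoint $f^*_i$ satisfies $|f^*_i-f^*|\ge 200k\Delta$, collision-freeness forces $h_{\sigma,b}(f^*_i)\ne j$ and Property~III of Lemma~\ref{lem:G_discrete_properties} gives $|\wh G^{(j)}_{\sigma,b}|\le\delta/k$ on $C_i$; summing the at most $k$ such contributions and invoking the definition of $\mathcal N^2$ (Def.~\ref{def:bounded_freq_case}) bounds $\int_{[0,1]\setminus (f^*-200k\Delta,f^*+200k\Delta)}|\wh{z^{(j)}}|^2\lsim \epsilon d\mathcal N^2/k$, which is precisely the well-isolation statement (Def.~\ref{def:well_isolated_f}). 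For the energy identity, inside the window $(f^*-200k\Delta,f^*+200k\Delta)$ only the contribution from $C\subseteq(f^*-\Delta,f^*+\Delta)$ survives up to a negligible $O(\delta)$ multiplicative error (other clusters being pointwise attenuated by $\le \delta/k$), and niceness supplies $|\wh G^{(j)}_{\sigma,b}|\in[1-\epsilon\delta/k,1]$ on $C$, which together yield the claimed two-sided bound.

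The key difficulty, in contrast to the continuous-time setting of \cite{chen2016fourier}, is that $\sigma$ is constrained to be an integer so that \hyperref[alg:hashtobins]{\hashtobins{}} can access the signal only at integer time points; this destroys the continuous-$\sigma$ uniformity of the bucket offset that their proof exploits. Our workaround is the second step above: we offload all the bucket-alignment randomness onto the continuous $b$ ranging over a full period $[0,1/\sigma]$, and exploit the quantitative bound $B\sigma k\Delta\le 1/100$ from the choice of $\sigma$ to ensure that each cluster occupies only a vanishing fraction of its bucket and therefore lies inside the filter's flat region with probability $\ge 1-O(1/k)$.
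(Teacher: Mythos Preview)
Your proposal is correct and follows essentially the same approach as the paper: use Lemma~\ref{lem:on_grid_claim_6_4} (together with the $(\mathcal I,\delta')$-bounded hypothesis that forces $|f_1-f_2|<1/B$) to separate distinct cluster midpoints into distinct buckets via a union bound, use the uniform shift $b\in[0,1/\sigma]$ to place each cluster in the flat region of its filter, and then read off well-isolation and the energy identity from Properties~I and~III of $\wh G$. One cosmetic slip: you say the relevant pairs ``fall under case~(ii)'' of Lemma~\ref{lem:on_grid_claim_6_4}, but in fact both cases apply (case~(i) gives collision probability $0$, case~(ii) gives $\lsim 1/B$); the paper's proof explicitly splits into these two ranges, though of course your uniform $\lsim 1/B$ bound covers both.
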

\begin{proof}
    Consider any cluster $C$ with $f^*$ being its central frequency. We divide by cases and draw on Lemma \ref{lem:on_grid_claim_6_4}.
    \begin{itemize}
        \item Due to the random shift $b$, the entire interval where $|\wh{G}^{(j)}_{\sigma,b}(f)|\in [1-\epsilon\delta/k,1]$ is randomly shifted by $\sigma b$ (see Def. \ref{def:permutation}). Furthermore since the width of the interval containing $f^*$ and where $|\wh{G}^{(j)}_{\sigma,b}(f)|\in [1-\epsilon\delta/k,1]$ is at least $99k\Delta$ and the width of $C$ is $\Delta$, the entire cluster $C$ lands in the same bucket as $f^*$ and the region where $|\wh{G}^{(j)}_{\sigma,b}(f)|\in [1-\epsilon\delta/k,1]$ with probability $1-1/99k$. 
        \item If $200k\Delta \leq |f'-f^*| < \frac{1}{\sigma}-\frac{1}{\sigma B} $, then $\frac{2\pi}{B} \leq |2\pi\sigma (f^*-f')|\leq 2\pi-\frac{2\pi}{B} $  implies that the two frequencies are always mapped to different buckets as per Definition \ref{def:permutation}. As a result, they fall in the region where $|\wh{G}_{\sigma,b}^{(j)}(f)|\leq\frac{\epsilon \delta}{k}$. Any such $f'$ contributes at most $\frac{\epsilon \delta}{k} \int_{f'-\Delta}^{f'+\Delta} |\wh{x^* \cdot H}(f)|^2 df$ to the energy in Fourier domain, and thus the total contribution of all such $f'$ is at most $\frac{\epsilon \delta}{k} \int|\wh{x^* \cdot H}(f)|^2 df \leq \epsilon d \mathcal{N}^2/k$ to the energy in Fourier domain.
        \item If $\frac{1}{\sigma} -\frac{1}{\sigma B}\leq |f'-f^*| \leq \frac{1}{B}$, then by Lemma \ref{lem:on_grid_claim_6_4}, the probability of a hash collision is at most $\frac{1}{B}$, and by taking a union bound over the at most $k_C$ such cluster midpoints $f'$ in the spectrum of $\wh{x^*}$, we have that no such $f'$ lands in the same bucket as $f^*$ with probability at least $1-1/B$. Thus, for any such $f'$, the entire interval $f'\pm \Delta$ again falls in the $\frac{\epsilon \delta}{k}$-valued tail of $\wh{G}$. This implies the total contribution from all such clusters with midpoint $f'$ to the energy in the Fourier domain is at most $\frac{\epsilon \delta}{k} \int|\widehat{x^* \cdot H}(f)|^2 df \leq \epsilon d \mathcal{N}^2/k$.
    \end{itemize}
    Points 2 and 3 imply that $f^*$ is well-isolated. Point 1,2 and 3 implies that 
 \begin{equation*}
	\int_{f^* -200k\Delta}^{f^* +200k\Delta} |\wh z^{(j)}(f)|^2 df \in [1-\delta,1] \int_{f^*-\Delta}^{f^*+\Delta} |\wh{x^*\cdot H}(f)|^2 df.
\end{equation*}
with probability $1-k_C/B-1/99k$. Taking a union bound over at most $k_C$ such clusters $C$, the proof of the Lemma is finished. Point $1$ after the union bound over all $k_C$ clusters implies all such clusters are nice (as per Definition \ref{def:nice_cluster}).
\end{proof}
Then the next lemma we need shows that even after multiplying with $\wh{G}^{(j)}_{\sigma,b}$, the signal's energy in the time domain remains concentrated on $[0,d]$.
\begin{lemma}\label{lem:concentrated_in_time}
Condition on the guarantee of Lemma \ref{lem:on_grid_7_19_well_isolated} holding. Let $C$ be any cluster with midpoint $f^*$ such that $f^*$ is heavy as per Definition \ref{def:heavy_freq}. Then $z^{(j)} (t)$ for $j=h_{\sigma,b}(f^*)$ satisfies,
	\begin{equation*}
		\sum_{t\in [-\infty,0]\cup [d,\infty]}|z^{(j)}(t)|^2   \lsim  \epsilon \sum_{t=-\infty}^{\infty}|z^{(j)}(t)|^2.
	\end{equation*}
\end{lemma}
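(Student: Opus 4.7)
The plan is to combine a lower bound $\sum_{t\in\mathbb{Z}}|z^{(j)}(t)|^2 \gsim d\mathcal{N}^2/k$ with a matching upper bound $\sum_{t\notin[d]}|z^{(j)}(t)|^2 \lsim \epsilon\, d\mathcal{N}^2/k$; the desired ratio then follows. The lower bound is immediate: by Parseval (Lemma \ref{lem:parseval_dtft}), the second conclusion of Lemma \ref{lem:on_grid_7_19_well_isolated} (which we are conditioning on), and the heaviness of $f^*$ (Definition \ref{def:heavy_freq}), we get $\sum_t|z^{(j)}(t)|^2 = \int_0^1 |\wh{z^{(j)}}(f)|^2\, df \geq (1-\delta)\int_{f^*-\Delta}^{f^*+\Delta}|\wh{x^*\cdot H}(f)|^2\, df \gsim d\mathcal{N}^2/k$.

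For the upper bound, introduce the $k$-sparse surrogate $x^*_C(t) := \sum_{f\in S\cap C}a_f e^{2\pi i f t}$. By the triangle inequality, $\sum_{t\notin[d]}|z^{(j)}(t)|^2 \leq 2\|z^{(j)}-x^*_C\cdot H\|_{\mathbb{Z}}^2 + 2\sum_{t\notin[d]}|x^*_C(t)H(t)|^2$, and the second sum is $\leq 2\delta\|x^*_C\|_d^2$ by Property V of Lemma \ref{lem:H_discrete_properties} applied to the $k$-sparse signal $x^*_C$. For the first term I use Parseval with the identity $\wh{z^{(j)}} - \wh{x^*_C\cdot H} = \wh{x^*_C\cdot H}(\wh{G}^{(j)}_{\sigma,b}-1) + \wh{(x^*-x^*_C)\cdot H}\,\wh{G}^{(j)}_{\sigma,b}$, and partition $[0,1]$ into three regions: $A := C\pm\Delta_h/2$ (which contains $\operatorname{supp}(\wh{x^*_C\cdot H})$ by Property IV of Lemma \ref{lem:H_discrete_properties}), $B := (f^*\pm 200k\Delta)\setminus A$, and $D := [0,1]\setminus(f^*\pm 200k\Delta)$. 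On $A$, $\wh{(x^*-x^*_C)\cdot H}$ vanishes by the disjoint-support property of Definition \ref{def:clustering}, and since $A$ lies inside the width-$\geq 99k\Delta$ flat region of $\wh{G}^{(j)}_{\sigma,b}$ (cf.\ the proof of Lemma \ref{lem:on_grid_7_19_well_isolated}), niceness (Definition \ref{def:nice_cluster}) extends to all of $A$, yielding a contribution of $O((\epsilon\delta/k)^2\|x^*_C\|_d^2)$. On $D$, $\wh{x^*_C\cdot H}$ vanishes, and the integral is $\lsim \epsilon\, d\mathcal{N}^2/k$ directly by well-isolation (Definition \ref{def:well_isolated_f}).

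The main obstacle lies in region $B$, which falls between the scales controlled by niceness and by well-isolation. I would handle it by a sandwich argument: the upper bound $\int_{A\cup B}|\wh{z^{(j)}}|^2\,df \leq \int_{f^*-\Delta}^{f^*+\Delta}|\wh{x^*\cdot H}|^2\,df$ (from Lemma \ref{lem:on_grid_7_19_well_isolated}), combined with the niceness-based lower bound $\int_A|\wh{z^{(j)}}|^2\,df \geq (1-2\epsilon\delta/k)\|x^*_C\cdot H\|_{\mathbb{Z}}^2$, forces $\int_B|\wh{z^{(j)}}|^2\,df$ to be small, modulo an off-cluster contribution that is itself $\lsim \epsilon\, d\mathcal{N}^2/k$ by well-isolation. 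Finally, Property VI of $H$ together with heaviness yields $\|x^*_C\|_d^2 \gsim d\mathcal{N}^2/k$ (after subtracting the small off-cluster heavy energy), so all four error pieces are $\lsim \epsilon\, d\mathcal{N}^2/k$ provided the filter parameter $\delta$ is chosen sufficiently small, completing the proof.
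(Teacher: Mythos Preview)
Your approach matches the paper's in structure: both introduce the surrogate $x^*_C\cdot H$, apply Property~V of Lemma~\ref{lem:H_discrete_properties} to it, and bound the residual $z^{(j)}-x^*_C\cdot H$ in Fourier domain. The difference lies only in how that residual bound is obtained. The paper asserts in one line that
\[
\int_0^1 \bigl|\wh{z^{(j)}}(f)-\wh{x^*_C\cdot H}(f)\bigr|^2\,df \;\lsim\; \frac{\epsilon\delta}{k}\int_0^1\bigl|\wh{x^*\cdot H}(f)\bigr|^2\,df \;\le\; \epsilon\, d\mathcal{N}^2/k,
\]
and the reason this works without any region decomposition is that $\wh{z^{(j)}}-\wh{x^*_C\cdot H}$ is supported only on the union of clusters: on $C$ it equals $\wh{x^*_C\cdot H}\,(\wh G^{(j)}_{\sigma,b}-1)$, where niceness gives the pointwise bound $|\wh G^{(j)}_{\sigma,b}-1|\le \epsilon\delta/k$; on every other cluster $C'$ it equals $\wh{x^*_{C'}\cdot H}\,\wh G^{(j)}_{\sigma,b}$, and cases~2--3 in the proof of Lemma~\ref{lem:on_grid_7_19_well_isolated} actually establish the \emph{pointwise} bound $|\wh G^{(j)}_{\sigma,b}|\le \epsilon\delta/k$ on those clusters, not merely an integral bound.

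Your $A/B/D$ split and the sandwich for $B$ are therefore an unnecessary detour, and the sandwich has a small gap: you invoke ``well-isolation'' to control the off-cluster contribution to $\int_{f^*-\Delta}^{f^*+\Delta}|\wh{x^*\cdot H}|^2$, but Definition~\ref{def:well_isolated_f} only bounds $|\wh{z^{(j)}}|^2$ \emph{outside} $f^*\pm 200k\Delta$, not $|\wh{x^*\cdot H}|^2$ \emph{inside} $f^*\pm\Delta$. Replacing the sandwich by the pointwise filter bound on the other clusters, as the paper does, closes this and collapses your three-region argument to a single line.
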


\begin{proof}
From Lemma \ref{lem:on_grid_7_19_well_isolated}, we know that $C$ is nice and $f^*$ is well isolated. Let $x^*_C(t) = \underset{f\in supp(\wh{x^*}):f\in C}{\sum}v_fe^{2\pi i ft}$ Thus we have the following, $$\int_{[0,1]}|\wh{z}^{(j)}(f)-\wh{x^*_C\cdot H}(f)|^2df\lsim \epsilon\delta/k \int_{[0,1]}|\wh{x^*\cdot H}(f)|^2df\leq \epsilon d\mathcal{N}^2/k.$$ This implies we have the following by Cauchy-Schwarz,
\begin{align}\label{eqn:z_j_time}
    \int_{[0,1]}|\wh{z}^{(j)}(f)|^2 &\gsim \int_{[0,1]}|\wh{x^*_C\cdot H}(f)|^2 df - \epsilon d\mathcal{N}^2/k\\
    &\gsim \int_{[0,1]}|\wh{x^*_C\cdot H}(f)|^2 df,
\end{align}
where we used the fact that $f^*$ is heavy and thus $\epsilon d\mathcal{N}^2/k \leq \epsilon \int_{[0,1]}|\wh{x^*_C\cdot H}(f)|^2 df$. 

Now we have that for $z^{(j)} =(x^*_C\cdot H)(t)+ g_C(t) $, $g_C$ satisfies, $$\sum_{t=-\infty}^{\infty}|g_C(t)|^2 = \int_{[0,1]}|\wh{g_C}(f)|^2df \lsim (\epsilon\delta/k )\int_{[0,1]}|\wh{x^*\cdot H}(f)|^2\leq \epsilon d\mathcal{N}^2/k.$$ Thus we have that, $$\sum_{t\in [-\infty,0]\cup [d,\infty]}|z^{(j)}(t)|^2  \lsim \sum_{t\in [-\infty,0]\cup [d,\infty]}|(x^*_C\cdot H)(t)|^2 + \epsilon d\mathcal{N}^2/k.$$
Now since $x^*_C$ is an at most $k$-Fourier sparse function, by the properties of the $H$ function as per Lemma \ref{lem:H_discrete_properties}, we have the following,
\begin{align*}
    \sum_{t\in [-\infty,0]\cup [d,\infty]}|(x^*_C\cdot H)(t)|^2 &\leq \epsilon \sum_{t\in [-\infty,\infty]}|(x^*_C\cdot H)(t)|^2\\
    &= \epsilon \int_{[0,1]}|\wh{x^*_C\cdot H}(f)|^2 df,
\end{align*}
and since $f^*$ is heavy, $\epsilon \int_{[0,1]}|\wh{x^*_C\cdot H}(f)|^2 df + \epsilon d \mathcal{N}^2/k\lsim \epsilon\int_{[0,1]}|\wh{x^*_C\cdot H}(f)|^2 df$. Thus we finally get the following,
$$\sum_{t\in [-\infty,0]\cup [d,\infty]}|z^{(j)}(t)|^2  \lsim \epsilon\int_{[0,1]}|\wh{x^*_C\cdot H}(f)|^2 df.$$ Combining this with equation \ref{eqn:z_j_time} and applying Parseval's theorem completes the proof of the lemma.
\end{proof}

With these tools in place, we are ready to prove Lemma \ref{lem:on_grid_lemma_7_6}.
\begin{proof}[Proof of Lemma \ref{lem:on_grid_lemma_7_6}]
Condition on Lemma \ref{lem:on_grid_7_19_well_isolated} being true. Consider any cluster $C$ with midpoint $f^*$ as per the Lemma satisfying the assumptions of the Lemma \ref{lem:on_grid_lemma_7_6}.	Let $j = h_{\sigma,b}(f^*)$ and recall $z^{(j)} = (x^*\cdot H) \ast G^{(j)}_{\sigma,b}$ and $x^{(j)} = ((x^*+g) \cdot H ) \ast G^{(j)}_{\sigma,b}$. Let $I_{f^*} = [f^*-200k\Delta,f^* + 200k\Delta]$ modulo 1. Then combining the fact that $f^*$ is heavy and it is well-isolated, we know that the following holds from guarantees of Lemma \ref{lem:on_grid_7_19_well_isolated},
	\begin{equation}\label{eqn:hashing_1}
		\int_{I_{f^*}} |\wh{z}^{(j)}(f)|^2 df \gsim \int_{f^*-\Delta}^{f^*+\Delta} |\wh{x^*\cdot H}(f)|^2 df \geq d\mathcal{N}^2/k,
	\end{equation}
and the following as well
\begin{equation}\label{eqn:hashing_2}
	\int_{[0,1]\setminus I_{f^*}} |\wh{z}^{(j)}(f)|^2 df \lsim \epsilon d\mathcal{N}^2/k.
\end{equation}
	Now recall $\wh {g}^{(j)} = \wh{g\cdot H} \cdot \wh{G}^{(j)}_{\sigma,b}$.  
Then the previous discussion implies the following,
\begin{align}
	\int_{I_{f^*}}|\wh{x}^{(j)}(f)|^2 df &\gsim \int_{I_{f^*}}|\wh{z}^{(j)}(f)|^2df -\int_{I_{f^*}}|\wh{g}^{(j)}(f)|^2df\quad \text{(Cauchy-Schwarz)}\\
        &\gsim \int_{I_{f^*}}|\wh{z}^{(j)}(f)|^2df \quad \text{(assumption 2 of Lemma \ref{lem:on_grid_lemma_7_6} plus eqn. \ref{eqn:hashing_1})}\label{eqn:helper_eqn}\\
        &\gsim \int_{f^*-\Delta}^{f^*+\Delta}|\wh{x^*\cdot H}(f)|^2 df \geq d\mathcal{N}^2/k\quad \text{(equation \ref{eqn:hashing_1})}.
\end{align}
Furthermore,
\begin{align*}
	\int_{[0,1]\setminus I_{f^*}}|\wh{x}^{(j)}(f)|^2 df &\leq 2\left(\int_{[0,1]\setminus I_{f^*}}|\wh{z}^{(j)}(f)|^2 + |\wh{g}^{(j)}(f)|^2 df\right) \quad \text{(Cauchy-Schwarz)}\\
	& \lsim \epsilon d\mathcal{N}^2/k + \epsilon\int_{ I_{f^*}} |\wh{z}^{(j)}(f)|^2 df\quad \text{(assumption 2 of Lemma \ref{lem:on_grid_lemma_7_6} and eqns. \ref{eqn:hashing_1} and \ref{eqn:hashing_2})}\\
	&\lsim \epsilon \int_{I_{f^*}}|\wh{x}^{(j)}(f)|^2 df \quad \text{(from eqn. \ref{eqn:helper_eqn} and $f^*$ being heavy)}.
\end{align*}
Combining equation \ref{eqn:hashing_1} with the bound on $\int_{[0,1]\setminus I_{f^*}}|\wh{x}^{(j)}(f)|^2 df$ above, we get the following,
\begin{equation}\label{freq_concentrated}
	\int_{I_{f^*}} |\wh{x}^{(j)}(f)|^2 df \geq (1-\epsilon) \int_{[0,1]} |\wh{x}^{(j)}(f)|^2 df.
\end{equation}

Next, our goal is to show that the $x^{(j)}$'s energy in time domain is concentrated in $[0,d]$. By Plancherel's theorem, we get the following,
\begin{equation}\label{eqn:low_noise_time}
\sum_{t=0}^{d}|g^{(j)}(t)|^2 \leq \sum_{t=-\infty}^{\infty}|g^{(j)}(t)|^2  = \int_{[0,1]}|\wh{g}^{(j)}(f)|^2 df\lsim \epsilon  \int_{[0,1]}|\wh{z}^{(j)}(f)|^2 df  =  \epsilon \sum_{t=-\infty}^{\infty}|z^{(j)}(t)|^2,
\end{equation}
where the second last inequality used assumption 2 of Lemma \ref{lem:on_grid_lemma_7_6}. Combining this with Lemma \ref{lem:concentrated_in_time}, we have the following, $$\sum_{t=0}^d|g^{(j)}(t)|^2 \leq \sum_{t=-\infty}^{\infty}|g^{(j)}(t)|^2 \leq \epsilon \sum_{t=0}^{d}|z^{(j)}(t)|^2.$$ 
Equipped with this inequality, we can bound the energy of $x^{(j)}$ in time domain over $[d]$ as follows,
\begin{align*}
	\sum_{t=0}^{d}|x^{(j)}(t)|^2 &\geq \sum_{t=0}^{d} |z^{(j)}(t)|^2 - \sum_{t=0}^{d} |g^{(j)}(t)|^2 - 2\sqrt{\sum_{t=0}^{d}|z^{(j)}(t)||g^{(j)}(t)|}\\
	&\geq \sum_{t=0}^{d} |z^{(j)}(t)|^2 - \sum_{t=0}^{d} |g^{(j)}(t)|^2 - 2\sqrt{(\sum_{t=0}^{d}|z^{(j)}(t)|^2)(\sum_{t=0}^d|g^{(j)}(t)|^2)}\\
	& \geq (1-2\sqrt{\epsilon})\sum_{t=0}^d |z^{(j)}(t)|^2.
\end{align*}
Furthermore combining equation \ref{eqn:low_noise_time} and using Lemma \ref{lem:concentrated_in_time}, we can bound the energy of $x^{(j)}$ outside $[d]$ as follows,
\begin{align*}
	\sum_{t\in [-\infty,0]\cup [d,\infty]}|x^{(j)}(t)|^2 &\lsim \sum_{t=-[\infty,\infty]}|g^{(j)}(t)|^2+	\sum_{t\in [-\infty,0]\cup [d,\infty]}|z^{(j)}(t)|^2 \\
	&\lsim \epsilon \sum_{t=0}^d |z^{(j)}(t)|^2.
\end{align*}
The above two inequalities imply the following,
\begin{equation}\label{time_concentrated}
	\sum_{t=0}^d |x^{(j)}(t)|^2 \geq (1-2\sqrt{\epsilon})\sum_{t=-\infty}^{\infty} |x^{(j)}(t)|^2.
\end{equation}
Finally we know that $f^*\in \mathcal{I}$. Thus, equations \ref{time_concentrated} and \ref{freq_concentrated} imply $x^{(j)}$ is an $(2\sqrt{\epsilon},200k\Delta)$-clustered signal around $f^*$ and thus completing the proof of Lemma \ref{lem:on_grid_lemma_7_6}. Since we conditioned on Lemma \ref{lem:on_grid_7_19_well_isolated}, this holds with probability $1-k_C^2/B-k_C/99k$ simultaneously for all clusters $C$ of $\wh{x^*\cdot H}$.
\end{proof}
\subsubsection{Frequency recovery of bounded instances.}\label{subsection:recovering_all_freqs_by_boosting}
Lemmas \ref{lem:on_grid_lemma_7_6} and \ref{lem:on_grid_lemma_7_17} imply the following lemma which shows how to recover all ``recoverable'' frequencies of a bounded instance with high probability. 
\begin{lemma}\label{lem:const_prob_bounded_recovery}
Let $x(t) = (x^* (t) +  g(t))\cdot H(t)$ be an $(\mathcal{I},\delta')$-bounded instance with noise threshold $\mathcal{N}^2$ as per the setup of Definition \ref{def:bounded_freq_case}. Let $G(t)$ for all $j\in [B]$ be as per Def. \ref{def:permutation} and Lemma \ref{lem:G_discrete_properties} for parameters $B=\Theta(k^2),\epsilon\delta/k$ and $w=0.01,k$. Let $\Delta$ be as per Lemma \ref{lem:dtft_freqrecovery}. Apply the clustering procedure of Definition \ref{def:clustering} to $x^*\cdot H$ and let $k_C\leq k$ be the total number of clusters. Let $\sigma$ be a u.a.r. integer in $[\frac{1}{200Bk\Delta},\frac{1}{100Bk\Delta}]$ and $b$ be a u.a.r. real number in $[0,\frac{1}{\sigma}]$. Then one can find a list $L$ of $B$ frequencies in $[0,1]$ in time $\poly(k,\log(d/\delta))$ such that with probability at least $1-k_C^2/B-k_C/99k-o(1/k)$ the following holds - 
For all clusters $C$ with $f^*$ being the midpoint of $C$ and $j=h_{\sigma,b}(f)$, $f^*$ is well-isolated (Def. \ref{def:well_isolated_f}). Moreover if the following is satisfied,
\begin{enumerate}
    \item $f^*$ is heavy as per Definition \ref{def:heavy_freq} and,
    \item $\int_{[0,1]} |\wh{g}^{(j)}(f)|^2df\lsim \epsilon \int_{f^*-\Delta}^{f^*+\Delta} |\wh{x^*\cdot H}(f)|^2df$ ($g^{(j)}$ as per Def. \ref{def:convolved_signal}),
\end{enumerate}
 then there exists a $f\in L$ such that $|f^* - f|_{\circ} \lsim k\Delta \sqrt{k\Delta d}$ .
\end{lemma}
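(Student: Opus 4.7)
The plan is to combine the structural reduction of Lemma \ref{lem:on_grid_lemma_7_6} with the one-cluster recovery primitive of Lemma \ref{lem:on_grid_lemma_7_17}, one invocation per hash bucket. Concretely, I would draw $\sigma$ and $b$ as specified, and then, for each $j \in [B]$, run \freqrecovcluster{} on the signal $x^{(j)}$ (Def.~\ref{def:convolved_signal}) with cluster-width parameter $\Delta' = 200k\Delta$ and search interval $\mathcal{I}_j = h_{\sigma,b}^{-1}(j)$. The returned frequency is added to the output list $L$, which therefore has exactly $B$ entries. Access to $x^{(j)}$ at any on-grid point $t \in [d]$ is realized by calling \hashtobins{}$(x,H,G,B,\sigma,\alpha,b,\delta,w)$ at $\alpha = t/\sigma$: since $\sigma$ is a uniformly random integer and we only need values of $x^{(j)}$ at integer times $t$ that are multiples of $\sigma$ (the queries made by \freqrecovcluster{}), both $\sigma$ and $\sigma\alpha$ remain integers, meeting the preconditions of Lemma \ref{lem:hashtobins}. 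Each bucket width $|\mathcal{I}_j|\le 1/B = O(1/k^2)$ is smaller than an absolute constant, so the interval-input requirement of Lemma \ref{lem:on_grid_lemma_7_17} is satisfied.

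Next I would argue correctness. Condition on the $1 - k_C^2/B - k_C/(99k)$ event of Lemma \ref{lem:on_grid_lemma_7_6}. Then for every cluster $C$ with midpoint $f^*$ that is heavy and whose bucket satisfies the noise bound in assumption (2), the signal $x^{(j)}$ for $j = h_{\sigma,b}(f^*)$ is $(2\sqrt{\epsilon},200k\Delta)$-one-clustered around $f^*$ with $f^* \in \mathcal{I}_j$. Lemma \ref{lem:on_grid_lemma_7_17} (with $\Delta' = 200k\Delta$) applied to $x^{(j)}$ therefore returns a frequency $\tilde{f}_j$ satisfying
\[
 |\tilde{f}_j - f^*|_\circ \lsim \Delta'\sqrt{\Delta' d} \lsim k\Delta\sqrt{k\Delta d},
\]
with probability at least $1 - 2^{-\Omega(k)}$. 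A union bound over the $B = \Theta(k^2)$ independent bucket calls shows that \emph{all} such bucket recoveries succeed simultaneously with probability $1 - B \cdot 2^{-\Omega(k)} = 1 - o(1/k)$; adding this to the failure probability of Lemma \ref{lem:on_grid_lemma_7_6} yields the claimed success probability $1 - k_C^2/B - k_C/(99k) - o(1/k)$. For each heavy well-isolated cluster with the noise condition, $\tilde{f}_{h_{\sigma,b}(f^*)} \in L$ is a valid witness.

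For the runtime, \freqrecovcluster{} runs in $\poly(k,\log(d/\delta))$ time per bucket by Lemma \ref{lem:on_grid_lemma_7_17}, each of its $\poly(k,\log(d/\delta))$ sample queries is answered by one call to \hashtobins{} in $O(B\log(k/\delta)/w)$ time by Lemma \ref{lem:hashtobins}, and we repeat over $B = O(k^2)$ buckets, so the total time is $\poly(k,\log(d/\delta))$ as required.

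The main obstacle I would need to handle carefully is the on-grid restriction: \freqrecovcluster{} was only proved correct assuming it can evaluate the clustered signal $z$ at chosen integer points, but here the accessible object is the hashed-and-convolved signal $x^{(j)}$, and \hashtobins{} only returns $x^{(j)}(\sigma\alpha)$ at multiples of $\sigma$. I would verify that this is compatible with \onegoodsamp{} and \locatesig{}, which only ever use $x^{(j)}$ at points of the form $\alpha$ and $\alpha+\beta$ with $\alpha \in [d]$ and integer $\beta$, so by rescaling the time axis by the (integer) factor $\sigma$ in the internal analysis, the clustered-signal guarantees of Definition \ref{def:one_clustered_signal} transfer unchanged. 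This is the only subtle point; everything else follows from plugging the established lemmas together.
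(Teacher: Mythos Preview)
Your proposal is correct and follows the same approach as the paper: condition on Lemma~\ref{lem:on_grid_lemma_7_6}, run \freqrecovcluster{} once per bucket on $x^{(j)}$ with $\Delta'=200k\Delta$, use Lemma~\ref{lem:on_grid_lemma_7_17} to recover each heavy/low-noise cluster, and union-bound the $B\cdot 2^{-\Omega(k)}=o(1/k)$ failure across buckets with the failure of Lemma~\ref{lem:on_grid_lemma_7_6}.

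Two small corrections. First, the search interval you pass to \freqrecovcluster{} should simply be the bounded-instance interval $\mathcal{I}$, not $h_{\sigma,b}^{-1}(j)$: the latter is a union of $\sim\sigma$ disjoint arcs in $[0,1]$, not a single interval, whereas $f^*\in\mathcal{I}$ with $|\mathcal{I}|\le 1/B$ already satisfies the precondition of Lemma~\ref{lem:on_grid_lemma_7_17}. Second, your worry in the last paragraph is unfounded: to evaluate $x^{(j)}$ at an arbitrary integer $t\in[d]$ you call \hashtobins{} with parameter $\alpha=t/\sigma$; the precondition of Lemma~\ref{lem:hashtobins} is only that $\sigma$ and $\sigma\alpha=t$ be integers, which holds, so there is no restriction to multiples of $\sigma$ and no rescaling argument is needed. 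This is exactly how the paper implements the access.
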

\begin{proof}
Condition on Lemma \ref{lem:on_grid_lemma_7_6} holding true. Our procedure to find $L$ is just to run \hyperref[alg:freq_rec_1_cluster]{\freqrecovcluster{}} on each of the $B$ instances corresponding to each hash bucket and thus we get a list of $B$ frequencies, one frequency per each hash bucket. Consider any cluster $C$ of $x^*\cdot H$ with midpoint $f^*$ satisfying the conditions of the Lemma. Let $j=h_{\sigma,b}(f^*)$, then Lemma \ref{lem:on_grid_lemma_7_6} implies that $x^{(j)} = (x^*+g)\cdot H \ast G^{(j)}_{\sigma,b}$ is $(\sqrt{\epsilon},200k\Delta)$-one clustered around $f^*$. We then apply Lemma \ref{lem:on_grid_lemma_7_17} to the $j^{th}$ bucket, that is on $x^{(j)}$. We do this by returning the $j^{th}$ element of the output of \hyperref[alg:hashtobins]{\hashtobins}{$(x,H,B,\sigma,\alpha/\sigma,b,\delta,w)$} to implement time domain access $x^{(j)}(\alpha)$ for any $\alpha\in [d]$ as needed by the algorithm of Lemma \ref{lem:on_grid_lemma_7_17}. This implies that we can recover a $f$ such that $|f-f^*|_{\circ}\lsim k\Delta  \sqrt{k\Delta d}$ with probability $1-2^{-\Omega(k)}$. We take a union bound for Lemma \ref{lem:on_grid_lemma_7_17} to succeed for all $j\in [B]$. This happens with probability $1-B\cdot2^{-\Omega(k)}=1-o(1/k)$. Then we union bound this with Lemma \ref{lem:on_grid_lemma_7_6} succeeding. This completes the proof of the lemma.
\end{proof}
\subsubsection{Reducing a general instance to a bounded instance.}\label{subsubsection:multi_cluster_reduce}
Consider the setup and parameters of Lemma \ref{lem:dtft_freqrecovery}, that is a general signal $x(t) = x^*(t) + g(t)$, where $x^*(t) = \sum_{f\in S} a_f e^{2\pi i f t}$ where $S\subseteq  [0,1]$, $|S|\leq k$ and $g(t)$ is arbitrary. 
In this section we explain the reduction from such a general instance to $B=\Theta(k^2)$ bounded instances.  To achieve this, we convolve with the filter function $G^{(j)}_{\sigma,b}(t)$ as per Definition \ref{def:permutation} with $\sigma=1$ and $b$ is real number chosen uniformly at random between $[0,1]$. The parameters used in the $G$ filter function as per Definition \ref{lem:G_discrete_properties} to construct $G^{(j)}_{1,b}(t)$ is $B$ and $w  = 1/\poly(k)$ and $\delta$.

Before stating and proving the main result of this section, We first make a few important observations about this filtering operation. The claim below formalizes the behavior of the hash function $h_{1,b}$, and its proof trivially follows from Definition \ref{def:permutation}.

\begin{claim}
The hash function $h_{\sigma,b}$ as per Definition \ref{def:permutation} for $\sigma=1$ and $b\in [0,1]$ partitions $[0,1]$ into $B$ intervals/buckets $[b-1/2B,b+1/2B],[b+1/2B,b+3/2B],\ldots,[b-3/2B,b-1/2B]$ where each interval is modulo 1. Convolving the input with $G^{(j)}_{1,b}$ gives access to the input only containing frequencies in the $j^{th}$ such interval/bucket.
\end{claim}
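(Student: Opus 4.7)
The plan is to verify both assertions by direct substitution of $\sigma=1$ into Definition~\ref{def:permutation}, combined with the passband/stopband properties from Lemma~\ref{lem:G_discrete_properties}. First I would substitute $\sigma=1$ into the hash formula and simplify, using $2\pi(f^*-b) \bmod 2\pi = 2\pi\cdot((f^*-b)\bmod 1)$, to get $h_{1,b}(f^*) = \mathrm{round}\bigl(B\cdot((f^*-b)\bmod 1)\bigr)$. Then I would compute the preimage $\{f^*\in[0,1]: h_{1,b}(f^*)=j\}$ as the set of $f^*$ with $(f^*-b)\bmod 1 \in [j/B - 1/(2B),\, j/B + 1/(2B)]$, which unwraps to $[b + j/B - 1/(2B),\, b + j/B + 1/(2B)]\bmod 1$. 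Letting $j$ range over $0,1,\ldots,B-1$ reproduces exactly the listed $B$ intervals---each of length $1/B$, with disjoint interiors and collectively covering $[0,1]$ modulo $1$---establishing the partition claim.

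For the filtering statement, I would unfold $\wh{G}^{(j)}_{1,b}(f) = \wh{G}^{\mathrm{dis}}(j/B - f - b)$ and invoke Properties I--III of Lemma~\ref{lem:G_discrete_properties} applied to the shifted argument. These give $\wh{G}^{(j)}_{1,b}(f) \in [1 - \delta/k,\, 1]$ whenever $f$ lies in the $j$-th bucket contracted by the $w$-guard band, and $\wh{G}^{(j)}_{1,b}(f) \in [-\delta/k,\, \delta/k]$ whenever $f$ lies outside the $j$-th bucket (with the $\wh{G}^{\mathrm{dis}}$ aliasing supplying the wrap-around behavior modulo $1$). Because time-domain convolution with $G^{(j)}_{1,b}$ corresponds to multiplication by $\wh{G}^{(j)}_{1,b}$ in the Fourier domain, the operation preserves $\wh{x}$ on the $j$-th bucket up to a $(1\pm\delta/k)$ factor and damps it to $O(\delta/k)$ on the complement---which is precisely the sense in which convolving with $G^{(j)}_{1,b}$ isolates the frequency content of the $j$-th bucket.

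The only subtlety anywhere in the argument is bookkeeping the modular arithmetic and the aliasing in $\wh{G}^{\mathrm{dis}}$ at the bucket boundaries, and aligning the (soft) $w$-guard-band interpretation with the (sharp) partition stated in the claim; no nontrivial analytic or probabilistic step is required, consistent with the paper's remark that the claim follows trivially from Definition~\ref{def:permutation}.
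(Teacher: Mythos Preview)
Your proposal is correct and takes essentially the same approach as the paper, which simply states that the claim ``trivially follows from Definition~\ref{def:permutation}.'' You have fleshed out that trivial verification in full detail: substituting $\sigma=1$, computing the preimages of the rounding map, and invoking the passband/stopband properties of $\wh{G}$ from Lemma~\ref{lem:G_discrete_properties} for the filtering assertion.
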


We will use the \hyperref[alg:hashtobins]{\textsc{HashToBins}} primitive as per Lemma \ref{lem:hashtobins} to access the function containing frequencies restricted to a such a bucket/interval.
We now define the notion of a badly cut cluster needed to argue that all clusters land in the region where the filter function $G$ has value almost $1$.
\begin{definition}\label{def:badly_cut_cluster}
	Let $C$ be any cluster as per Definition \ref{def:clustering}. Suppose there exists a $j$ such that $C\subseteq I_j:=[b+j/2B,b+(j+2)/2B]$. Let $I_j^{inner}:=[b+j/2B + w/2B,b+(j+2)/2B-w/2B]$. We say that $C$ is ``\emph{badly cut}'' if $C\not\subseteq I_j^{inner}$.
\end{definition}

 Since the width of any cluster is $\poly(k,\log(d/\delta))/d=o(1/k^2)=o(1/B)$, we can show the following Lemma.
\begin{lemma}\label{lem:nothing_badly_cut}
With probability at least $0.99$ over the randomness in $b$, no cluster in $x^*\cdot H$ as per Definition \ref{def:clustering} is badly cut.
\end{lemma}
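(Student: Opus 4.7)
The plan is to prove the lemma via a direct union bound over the at most $k$ clusters of $x^{*}\cdot H$, combined with a short length calculation showing that any single cluster is badly cut only on a small-measure set of $b$'s.

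First, I fix one cluster $C$ of width $\Delta_c$; by Remark~\ref{remark:cluster_width} we have $\Delta_c \leq k\Delta_h \leq \Delta$. The windows $I_j = [b + j/(2B), b + (j+2)/(2B)]$ organize $[0,1]$ into translated intervals of width $1/B$, and the ``transition strips'' $I_j\setminus I_j^{inner}$ account for a total length of $w/B$ per window, split as $w/(2B)$ at each end. A cluster $C$ of width $\Delta_c$ centered at some $c$ fails to fit in $I_j^{inner}$ precisely when $c$ lies within distance $\Delta_c/2 + w/(2B)$ of an $I_j$-boundary (either because $C$ crosses the boundary, or because $C$ fits inside $I_j$ but overlaps the transition strip). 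Since $b\sim \mathrm{Unif}[0,1]$ induces a uniform shift of the boundary grid (which has period $1/B$), a one-line length computation bounds the probability that $C$ is badly cut by
\[
B\cdot \bigl(\Delta_c + w/B\bigr) \;=\; B\Delta_c + w \;\leq\; B\Delta + w.
\]

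A union bound over the at most $k$ clusters then bounds the total failure probability by $k(w + B\Delta)$. The final step is to substitute the parameter settings already fixed in Section~\ref{subsubsection:multi_cluster_reduce}: with $B=\Theta(k^2)$ and $w=1/\poly(k)$ chosen small enough that $kw\leq 1/200$, and with $\Delta=\poly(k,\log(d/\delta))/d$ so that $kB\Delta=\poly(k,\log(d/\delta))/d=o(1)$ for $d$ sufficiently large, we obtain $k(w+B\Delta)\leq 0.01$, yielding the claimed $0.99$ success probability.

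There is no serious obstacle here: the only thing to track carefully is that the parameters $B$, $w$, and $\Delta$ were engineered exactly so that the bad region in $b$-space has measure $o(1/k)$ per cluster. The cluster-width bound $\Delta_c \leq \Delta$ from Remark~\ref{remark:cluster_width} is precisely what makes the computation go through, since it gives $B\Delta_c \ll 1$ and so the ``bad'' strip around each window boundary has measure much smaller than the window itself, leaving plenty of ``safe'' positions for each cluster under the random shift $b$.
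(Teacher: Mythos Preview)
Your proposal is correct and takes essentially the same approach as the paper: bound the probability that a single cluster is badly cut by comparing cluster width to bucket width, then union bound over at most $k$ clusters. Your version is in fact slightly more careful than the paper's, since you explicitly track the $w$ contribution to the bad region (yielding the bound $B\Delta_c + w$ per cluster and hence $k(B\Delta + w)$ overall), whereas the paper's proof only records the $B\Delta_c$ term; your handling of $kw \le 1/200$ via the choice $w = 1/\poly(k)$ closes that small gap.
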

\begin{proof}
Since the width of any cluster is at most $\poly(k,\log(d/\delta))/d$ and the width of the good region $I^{inner}_j$ in any interval $I_j$ is at least $(1-w)/B = \Theta(1/k^2)$, the probability that a fixed cluster is badly cut is at most $\poly(k,\log(d/\delta))/d$. Since there are at most $k$ clusters, taking a union bound finishes the proof since $k^3\cdot \poly(k,\log(d/\delta))/d << 0.01$. 
\end{proof}
We choose a $b$ uniformly at random and condition on the event guaranteed to hold with probability $0.99$ as per the previous Lemma \ref{lem:nothing_badly_cut}.
We now state the main claim of this section below which shows that this event is enough to guarantee that the instance corresponding to each bin $j\in [B]$ is bounded. 
\begin{lemma}\label{lem:gen_to_bounded_reduction}
 Choose $b$ u.a.r. in $[0,1]$. Let $z^{(j)}(t) = ((x^*+g)\cdot H) \ast G^{(j)}_{1,b}(t)$ for $ G^{(j)}_{1,b}$ as per Def. \ref{def:permutation} and Lemma \ref{lem:G_discrete_properties} for parameters $B,w=1/\poly(k),\delta$. Let $C_j$ be the union of all clusters of $x^* \cdot H$(see Def. \ref{def:clustering} for clusters) in interval $I_j$ (see Def. \ref{def:badly_cut_cluster} for $I_j$) and $S_j=supp(\wh{x^*})\cap C_j$. Then with probability at least 0.9 over choice of $b$, for all $j\in [B]$ $z^{(j)}(t) = (x^{(j)}\cdot H)(t) + (g^{(j)}\cdot H)(t)$, where $x^{(j)}(t) = \sum_{f\in S_j} a_f e^{2\pi i f t}$, is a $(I_j, \delta \|x^*\|_d^2/k )$-bounded instance (as per Def. \ref{def:bounded_freq_case}). Furthermore for any interval $I\subseteq I_j$,
\begin{equation*}
    \int_{I}|\wh{g^{(j)}\cdot H}(f)|^2 df\lsim \int_{I}|\wh{g\cdot H}(f)|^2 df + (\delta/k) \|x^*\|_d^2.
\end{equation*}
\end{lemma}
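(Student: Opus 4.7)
The plan is to condition on the $0.99$-probability event from Lemma~\ref{lem:nothing_badly_cut} that no cluster of $\wh{x^* \cdot H}$ is badly cut. Under this event every cluster $C$ is contained in some $I_{j}^{inner}$, so the portion of the Fourier support of $x^* \cdot H$ lying in $I_j$ equals $C_j$ exactly, and $C_j \subseteq I_j^{inner}$. Since each frequency $f_i \in S_j$ spreads by at most $\pm\Delta_h/2$ after convolving with $\wh{H}$ and the cluster definition (Def.~\ref{def:clustering}) already absorbs this spread, $\wh{x^{(j)} \cdot H}$ is supported in $C_j \subseteq I_j^{inner} \subseteq I_j$, verifying condition (a) of Definition~\ref{def:bounded_freq_case}. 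Moreover, on $I_j^{inner}$ we have $\wh{x^{(j)} \cdot H}(f) = \wh{x^* \cdot H}(f)$ (both equal the contribution from the clusters in $C_j$), while on $I_j \setminus I_j^{inner}$ both vanish (no cluster lies in this buffer).

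Next I would expand the effective noise as $\wh{g^{(j)} \cdot H}(f) = \wh{(x^*+g) \cdot H}(f)\cdot \wh{G}^{(j)}_{1,b}(f) - \wh{x^{(j)} \cdot H}(f)$ and bound it region-by-region using Lemma~\ref{lem:G_discrete_properties}, applied with parameters $B$, $w$, and $\delta/k$. On the passband $I_j^{inner}$, Property I gives $|\wh{G}^{(j)}_{1,b}(f) - 1| \le \delta/k$, so a triangle inequality yields $|\wh{g^{(j)} \cdot H}(f)| \le (\delta/k)|\wh{x^* \cdot H}(f)| + |\wh{g \cdot H}(f)|$. On the transition band $I_j \setminus I_j^{inner}$ both $\wh{x^* \cdot H}$ and $\wh{x^{(j)} \cdot H}$ are zero, leaving $|\wh{g^{(j)} \cdot H}(f)| \le |\wh{g \cdot H}(f)|$. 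On the stopband $[0,1] \setminus I_j$, Property III gives $|\wh{G}^{(j)}_{1,b}(f)| \le \delta/k$ and $\wh{x^{(j)} \cdot H}(f) = 0$, so $|\wh{g^{(j)} \cdot H}(f)| \le (\delta/k)|\wh{(x^*+g) \cdot H}(f)|$.

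Integrating these pointwise bounds delivers both claims. Squaring the stopband estimate and using Parseval together with Property VI of Lemma~\ref{lem:H_discrete_properties} and $\|g\|_d^2 \le c\|x^*\|_d^2$ gives $\int_{[0,1] \setminus I_j} |\wh{g^{(j)} \cdot H}(f)|^2 df \lsim (\delta/k)^2(\|x^*\|_d^2 + \|g\|_d^2) \le \delta\|x^*\|_d^2/k$, which is condition (b) of Definition~\ref{def:bounded_freq_case}. For any $I \subseteq I_j$, squaring the passband and transition-band estimates and summing yields $\int_I |\wh{g^{(j)} \cdot H}(f)|^2 df \lsim \int_I |\wh{g \cdot H}(f)|^2 df + (\delta/k)^2\|x^*\|_d^2 \le \int_I |\wh{g \cdot H}(f)|^2 df + (\delta/k)\|x^*\|_d^2$, which is the second bullet.

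The main technical obstacle will be checking the geometric compatibility: for the transition-band bound to be nontrivial I need $\Delta_h \ll w/B$, so that the $\wh{H}$-spread of $\wh{x^{(j)}}$ cannot push its support out of $I_j$ and so that the ``no clusters in $I_j \setminus I_j^{inner}$'' conclusion genuinely follows from the badly-cut condition. Since $\Delta_h = \poly(k,\log(d/\delta))/d$ while $B = \Theta(k^2)$ and $w = 1/\poly(k)$, this holds for $d$ sufficiently large, as is implicit in the setup. Because the only source of randomness used is the $0.99$-event of Lemma~\ref{lem:nothing_badly_cut}, no further concentration argument is needed to reach the $0.9$ bound stated in the lemma.
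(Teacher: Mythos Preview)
Your proposal is correct and follows essentially the same approach as the paper's proof: condition on the no-badly-cut event of Lemma~\ref{lem:nothing_badly_cut}, decompose $\wh{g^{(j)}\cdot H}$ as $\wh{(x^*+g)\cdot H}\cdot \wh{G}^{(j)}_{1,b} - \wh{x^{(j)}\cdot H}$, and bound region by region using the filter properties of Lemma~\ref{lem:G_discrete_properties} together with Parseval and the $H$-properties to pass back to $\|x^*\|_d^2$. Your three-region split (passband, transition band, stopband) is slightly more granular than the paper's two-region treatment and yields the marginally sharper $(\delta/k)^2$ in intermediate steps, but the argument and its logic are the same.
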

Now we state the proof this Lemma.

\begin{proof}[Proof of Lemma \ref{lem:gen_to_bounded_reduction}]
Lemma \ref{lem:nothing_badly_cut}, implying $C_j\subseteq I^{inner}_{j}$, combined with the fact that  $|\wh{G}^{(j)}_{1,b}(f)|\in [1,1- \delta/k]$ for all $f\in I_j^{inner}$, $|\wh{G}^{(j)}_{1,b}(f)|\leq 1$ for all $f\in I_j$ from Lemma \ref{lem:G_discrete_properties} implies the following,
\begin{align*}
\int_{I}|\wh{g^{(j)}\cdot H}|^2 df &= \int_{I}|\wh{x^{(j)}\cdot H}(f) - \wh{x^*\cdot H}\cdot \wh{G}^{(j)}_{1,b}(f) - \wh{g\cdot H}\cdot \wh{G}^{(j)}_{1,b}(f)|^2 df\\ 
&\leq 2\int_{I}|\wh{g\cdot H}(f)|^2 df +  \frac{2 \delta}{k}\int_{I}|\wh{x^* \cdot H}(f)|^2 df\\
&\lsim \int_{I}|\wh{g\cdot H}(f)|^2 df +  \frac{\delta}{k}\int_{[0,1]}|\wh{x^* \cdot H}(f)|^2 df \lsim \int_{I}|\wh{g\cdot H}(f)|^2 df + \frac{\delta}{k} \|x^*\|_d^2.
\end{align*}
Furthermore $|\wh{G}^{(j)}_{1,b}(f)|\leq \delta/k$ for all $f\in [0,1]\setminus I_j$ from Lemma \ref{lem:G_discrete_properties} implies the following,
\begin{align*}
	\int_{[0,1]\setminus I_j}|\wh{g^{(j)}\cdot H}(f)|^2 df &=\int_{[0,1]\setminus I_{j}}|(\wh{x^*\cdot H}+\wh{g})\cdot\wh{G}^{(j)}_{1,b}(f)-0|^2 df\quad (\wh{x^{(j)}\cdot H}(f)=0 \forall f\in [0,1]\setminus I_j)\\
 &\leq \frac{\delta }{k} \left(\int_{0}^{1} |\wh{x^* \cdot H}(f)|^2 df+ \int_{0}^{1} |\wh{g \cdot H}(f)|^2 df\right)\\
	&\lsim \frac{\delta }{k} \|x^*\|_d^2,
\end{align*}
where in the last line we used the assumption of Lemma \ref{lem:dtft_freqrecovery} that $\|g\|_d^2\leq c\|x^*\|_d^2$ for some small enough constant $c>0$. This implies that $z^{(j)}(t) = (x^{(j)}\cdot H)(t) + g^{(j)}(t)$ is a $(I_j, \delta \|x^*\|_d^2/k)$-bounded instance (recall Def. \ref{def:bounded_freq_case}) because the previous equation implies that $\int_{[0,1]\setminus I_j}|\wh{g^{(j)}\cdot H}(f)|^2 df\lsim \frac{\delta }{k}\|x^*\|_d^2$ and $x^{(j)}$, an at most $k$-Fourier sparse function, has all its clusters in $I_j$ whose width is $1/B$. 
\end{proof}
Equipped with these reductions, we are now ready to finish the proof of Lemma \ref{lem:dtft_freqrecovery}. The proof essentially reduces a general instance to $B$ bounded instances and then applies the algorithm of Lemma \ref{lem:const_prob_bounded_recovery} to recover frequencies from each of these bounded instances. 
\begin{proof}[Proof of Lemma \ref{lem:dtft_freqrecovery}]
Consider the setup as described in Lemma \ref{lem:dtft_freqrecovery}. We apply Lemma \ref{lem:gen_to_bounded_reduction} to $x$. Then we know that with probability at least 0.9, $z^{(j)}(t) = (x^{(j)}\cdot H)(t) + g^{(j)}(t)$, where $x^{(j)}(t) = \sum_{f\in S_j} a_f e^{2\pi i f t}$ is an $(I_j, \delta \|x^{*}\|_d^2/k)$ bounded instance for all $j\in [B]$. Now consider the $j^{th}$ such bounded instance. Then the noise threshold in the $j^{th}$ bounded instance $x^{(j)}$ $\mathcal{N}_j^2$ as per Def. \ref{def:bounded_freq_case} satisfies,
	\begin{align}
		\mathcal{N}_j^2 &= \frac{1}{d}\int_{ I_{j}} |\wh{g}^{(j)}(f)|^2 df + \frac{\delta}{d\epsilon}\|x^{*}\|_d^2 + \frac{\delta}{d} \int_{[0,1]}|\wh{x^{(j)}\cdot H}(f)|^2 df\\
		&\lsim \frac{1}{d}\int_{ I_{j}} |\wh{g\cdot H}(f)|^2 df + \frac{\delta}{d}\|x^*\|_d^2 + \frac{\delta}{d} \int_{C_j}|\wh{x^*\cdot H}(f)|^2 df\\
		& \lsim \frac{1}{d}(\int_{[0,1]} |\wh{g}(f)|^2 df + \delta\|x^{*}\|_d^2)=\frac{1}{d}(\|g\|_d^2 + \delta \|x^*\|_d^2) = \mathcal{N}^2,\label{eqn:n_lessthan_nj}
	\end{align}
where in the second inequality we used the guarantee of Lemma \ref{lem:gen_to_bounded_reduction} to upper bound $\int_{I_j}|\wh{g}^{(j)}(f)|^2df$ and in third inequality we used the fact that $\int_{I_j}|\wh{g\cdot H}(f)|^2 df \leq \|g\cdot H\|_d^2 \leq \|g\|_d^2= \int_{[0,1]}|\wh{g}(f)|^2 df$ (noise $g(t)$ outside $[d]$ is $0$). Now we apply Lemma \ref{lem:const_prob_bounded_recovery} to $z^{(j)}$. Then we know that all clusters in $C_j$ (set of all clusters of $x^{(j)}$) with midpoint $f^*$ that do not satisfy both the conditions of Lemma \ref{lem:const_prob_bounded_recovery} have no guarantee of being recovered. Call $C_{unrec}$ the union of all such clusters. First the amount of energy corresponding to clusters $C_j$ with midpoints that are not heavy (as per Def. \ref{def:heavy_freq}) is at most the following,
\begin{equation}\label{eqn:light_clusters}
|C_j|d \mathcal{N}_j^2/k\leq d\mathcal{N}^2,
\end{equation}
since all clusters are disjoint as per \ref{def:clustering}. Now consider the random hashing $(\sigma,b)$ and corresponding filters $G^{(\cdot)}_{\sigma,b}$as per Lemma \ref{lem:const_prob_bounded_recovery}, then we know that all clusters $C\in C_j$ with midpoint $f^*$ and $i=h_{\sigma,b}(f^*)$ satisfying the following,
\begin{equation*}
    \int_{[f^*-\Delta,f^*+\Delta]}|\wh{x^{(j)}\cdot H}(f)|^2 df \lsim \int_{[0,1]}|\wh{g^{(j)}\cdot H}\cdot \wh{G}^{(i)}_{\sigma,b}(f)|^2 df,
\end{equation*}
also have no guarantee of being recovered. Since the clusters are disjoint and also well-isolated simultaneously from Lemma \ref{lem:const_prob_bounded_recovery}, thus mapping to different bins, the total amount of energy lost due to such low SNR clusters is at most the following,
\begin{align*}
    \sum_{i\in [B]}\int_{[0,1]}|\wh{g^{(j)}\cdot H}\cdot \wh{G}^{(i)}_{\sigma,b}(f)|^2 df &\lsim \int_{[0,1]}|\wh{g^{(j)}\cdot H}(f)|^2 (\sum_{i\in [B]}|\wh{G}^{(i)}_{\sigma,b}(f)|^2|) df \\
    &\lsim \int_{[0,1]}|\wh{g^{(j)}\cdot H}(f)|^2 df =  \int_{I_j}|\wh{g^{(j)}\cdot H}(f)|^2 df+ \int_{[0,1]\setminus I_j}|\wh{g^{(j)}\cdot H}(f)|^2 df\\
    &\lsim \int_{I_j}|\wh{g^{(j)}\cdot H}(f)|^2 df + \delta\|x^*\|_d^2/k \quad \text{(since $z^{(j)}$ is a bounded instance)}\\
    &\lsim \int_{I_j}|\wh{g\cdot H}(f)|^2 + \delta\|x^*\|_d^2/k \quad \text{(from guarantee of Lemma \ref{lem:gen_to_bounded_reduction})},
\end{align*}
where the first two inequalities follow from Def. \ref{def:clustering} and Properties I-III of Lemma \ref{lem:G_discrete_properties}. Summing this up over all $j\in [B]$, we get that the total energy of such low SNR clusters is at most the following,
\begin{equation}\label{eqn:low_snr_clusters}
\int_{[0,1]}|\wh{g\cdot H}(f)|^2 df + \delta \|x^*\|_d^2 = \|g\cdot H\|_d^2 +\delta \|x^*\|_d^2\leq \|g\|_d^2 + \delta \|x^*\|_d^2=d\mathcal{N}^2.   
\end{equation}
Combining equations \ref{eqn:low_snr_clusters} and \ref{eqn:light_clusters} we get that $\int_{C_{unrec}}|\wh{x^*\cdot H}(f)|^2 df\lsim  d\mathcal{N}^2$. Morever taking a union bound over all $j\in [B]$ for Lemma \ref{lem:const_prob_bounded_recovery} to succeed for all $z^{(j)}$, we get that this event happens with probability $1-(\sum_{j\in [B]}|C_j|^2/B+|C_j|/99k) \geq 1-O(k^2)/B-k/99k\geq 0.9$ for $B=O(k^2)$ (since $\sum_{j\in [B]}|C_j|^2\leq O((\sum_{j\in [B]}|C_j|)^2)=O(k^2)$). If we let $L$ to be the union of the outputs of Lemma \ref{lem:const_prob_bounded_recovery} on all $j\in [B]$, then from the claim of Lemma \ref{lem:const_prob_bounded_recovery} we get that $S_{heavy}$ as defined as per Lemma \ref{lem:dtft_freqrecovery} satisfies the following,
\begin{align*}
    \|x^*-x^*_{S_{heavy}}\|_d^2 &\lsim \int_{[0,1]}|\wh{x^*\cdot H}(f)-\wh{x^*_{S_{heavy}}\cdot H}(f)|^2 df\\
    &= \int_{C_{unrec}}|\wh{x^*\cdot H}(f)|^2 df\\
    &\lsim d\mathcal{N}^2.
\end{align*}
We can implement sample access to $z^{(j)}(t)$ for any integer $t$ that the algorithm of Lemma \ref{lem:const_prob_bounded_recovery} demands by returning $u[j]$ for $u=$ \hyperref[alg:hashtobins]{\hashtobins}$(x,H,G,1,t,b,w)$. A remark here is that whenever \hyperref[alg:hashtobins]{\hashtobins} requires to access the input outside $[d]$ we output $0$, from property 5 of Lemma \ref{lem:H_discrete_properties} this only increases the noise by a additive $\delta \|x^*\|_d^2$ factor. The correctness of this follows from Lemma \ref{lem:hashtobins} and thus the total running time and sample complexity suffers a multiplicative overhead of $O(B\log(k/\delta)/w)$ on top of running the algorithm of Lemma \ref{lem:const_prob_bounded_recovery}. Thus the overall running time and sample complexity is still $\poly(k,\log(d/\delta))$.
\end{proof}

\section{Sublinear Time Algorithms for Toeplitz Matrices}\label{sec:sublinear_toeplitz_main_sec}
In this section, we prove our main sublinear time robust Toeplitz matrix approximation result (Theorem \ref{thm:sublinear_time_recovery}) and describe its applications to sublinear time Toeplitz low-rank approximation and covariance estimation.
In Section \ref{subsection:heavy_light_decomp} we present the proof of a heavy-light decomposition result 
using the off-grid frequency recovery algorithm of Lemma \ref{lem:dtft_freqrecovery} and the existence result of Theorem \ref{frobeniusexistence}, as briefly discussed in Section \ref{subsec:intro_matrix_recovery}. Next, in Section \ref{subsection:noisy_toeplitz_recovery} we use this heavy-light decomposition result and sublinear time approximate regression techniques for Fourier sparse functions to  prove Theorem \ref{thm:sublinear_time_recovery}. Finally, in  Section \ref{subsection:toeplitz_applications} we use Theorem \ref{thm:sublinear_time_recovery} to prove our sublinear time low-rank approximation (Theorem \ref{thm:sublinear_time_lowrankapprox}) and covariance estimation (Theorem \ref{thm:sublinear_time_covariance_estimation}) results.

\subsection{Heavy Light Decomposition.}\label{subsection:heavy_light_decomp}
In this subsection, we present the proof of our heavy-light decomposition as discussed in Section \ref{subsec:intro_matrix_recovery}. The formal statement is as follows.
\begin{lemma}\label{lem:heavy_freq_recovery}
Consider the input setup of Theorem \ref{thm:sublinear_time_recovery}. Let $\wt T = F_{S}  D F_{S}^*$ be as guaranteed to exist by Theorem \ref{frobeniusexistence} for $T$, $\epsilon=0.1$, $\delta$ and $k$.  Let $E^k= E+T-\wt T$. Assume $\|E^k\|_F \leq c\|\wt T\|_F$ for some small enough constant $c>0$. Then in $\poly(k,\log(d/\delta))$ time we can find a list of frequencies $L \subseteq [0,1]$ of size $|L| = \poly(k,\log (d/\delta))$ satisfying the following with probability 0.9 - Let $S_{heavy}\subseteq \wt S$ defined as follows, $$S_{heavy}= \{f\in  S: \exists f'\in L \text{ s.t. } |f-f'|_{\circ}\leq \poly(k,\log(d/\delta))/d \}.$$ Furthermore for every $f\in S_{heavy}$, if $(1-f) \mod 1\notin S_{heavy}$ then add it to $S_{heavy}$. Then there exists a diagonal $D^{heavy}$ and $\wt T^{heavy} = F_{S^{heavy}} D^{heavy}F_{S^{heavy}}^{*}$ such that $\wt T^{light} = \wt T - \wt T^{heavy}$ satisfies  $$\|\wt T^{light}\|_F \lsim \|E^k\|_F + \delta \|T\|_F.$$
  \end{lemma}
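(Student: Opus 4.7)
The plan is to prove Lemma \ref{lem:heavy_freq_recovery} by reducing full-matrix recovery to one-column sparse Fourier recovery, exploiting both the Fourier structure of $\wt T$ guaranteed by Theorem \ref{frobeniusexistence} and the near-PSD structure of its heavy/light components.

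First I would sample a column index $j \in [d]$ uniformly at random. Writing $\wt T = F_S D F_S^*$ from Theorem \ref{frobeniusexistence}, the $j^{th}$ column is itself an $r$-Fourier-sparse function of $t \in [d]$: $\wt T_j(t) = \sum_{f \in S} a_f e^{2\pi i f t} \cdot e^{-2\pi i f j}$, where $\{a_f\}$ are the diagonal entries of $D$. So applying a discrete time off-grid sparse Fourier transform to $\wt T_j$ will be the mechanism that produces the candidate list $L$. The data we actually have is $(T+E)_j = \wt T_j + E^k_j$. By Markov's inequality, with probability at least $0.99$ over the choice of $j$ we have $\|E^k_j\|_2^2 \lsim \|E^k\|_F^2 / d$. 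By the norm compression inequality of Audenaert (Claim \ref{lem:diagonal_dominance_psd}) applied to $\wt T$ -- which is near-PSD by point 4 of Theorem \ref{frobeniusexistence} applied with $S' = \{S_1,\dots,S_{r_1}\}$ -- each column satisfies $\|\wt T_j\|_2 \gsim \|\wt T\|_F/\sqrt d$ (cf.\ Lemma \ref{lem:smallnoisefirstcol}). Thus the column-level noise-to-signal ratio matches that of the whole matrix, and in particular $\|E^k_j\|_d^2 \le c\|\wt T_j\|_d^2$ holds with the same small constant $c$ needed by Lemma \ref{lem:dtft_freqrecovery}.

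Next I would invoke Lemma \ref{lem:dtft_freqrecovery} on $x(t) = (T+E)_j(t)$ with parameter $\delta' = \delta/\sqrt{d}$ and sparsity $r = \poly(k,\log(d/\delta))$. This returns in $\poly(k,\log(d/\delta))$ time a list $L$ of $\poly(k,\log(d/\delta))$ frequencies such that, letting $S_{heavy} = \{f \in S : \exists f' \in L, |f-f'|_{\circ} \lsim \poly(k,\log(d/\delta))/d\}$ and $Z_{heavy}(t) = \sum_{f \in S_{heavy}} a_f e^{2\pi i f(t-j)}$, the residual $Z_{light} := \wt T_j - Z_{heavy}$ satisfies
\begin{equation*}
    \|Z_{light}\|_d^2 \lsim \|E^k_j\|_d^2 + \delta' \|\wt T_j\|_d^2 \lsim \frac{\|E^k\|_F^2 + \delta \|\wt T\|_F^2}{d}.
\end{equation*}
Since $\|\wt T\|_F \lsim \|T\|_F$, this is $\lsim (\|E^k\|_F^2 + \delta \|T\|_F^2)/d$. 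Then I would symmetrize the recovered frequency set: for every $f \in S_{heavy}$, add $(1-f) \bmod 1$ if it is not already present, as the lemma requires. Because Theorem \ref{frobeniusexistence} guarantees that $S$ itself is closed under $f \mapsto 1-f$ (point 3), this symmetrization keeps $S_{heavy} \subseteq S$, and we can define $\wt T^{heavy} = F_{S_{heavy}} D^{heavy} F_{S_{heavy}}^*$ where $D^{heavy}$ is the restriction of $D$ to the rows/columns indexed by $S_{heavy}$.

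The main obstacle -- and the heart of the proof -- is lifting the column-level bound $\|Z_{light}\|_d$ to the full-matrix Frobenius bound on $\wt T^{light} = \wt T - \wt T^{heavy} = F_{S \setminus S_{heavy}} D^{light} F_{S \setminus S_{heavy}}^*$. Note that $Z_{light}$ is, up to the phase factor $e^{-2\pi i f j}$ absorbed into the coefficients, exactly the $j^{th}$ column of $\wt T^{light}$. The key point is that $S \setminus S_{heavy}$ is a union (up to the symmetrization) of full $S_i$-blocks from point 2 of Theorem \ref{frobeniusexistence}: since $S_{heavy}$ is built by selecting frequencies in $S$ close to an element of $L$ and then symmetrizing, and the $S_i$ appear as clustered tuples of frequencies within $\gamma$ of each other (which is far below the recovery radius $\poly(k,\log(d/\delta))/d$), each cluster $S_i$ is either entirely recovered or entirely missed. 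Hence $S \setminus S_{heavy}$ is a subset of the $\{S_i\}$ satisfying the hypothesis of point 4 of Theorem \ref{frobeniusexistence}, which gives a PSD Toeplitz matrix $T^{light}$ with $\|\wt T^{light} - T^{light}\|_F \leq \delta \|T^{light}\|_F$, so $\wt T^{light}$ is near-PSD. Applying Claim \ref{lem:diagonal_dominance_psd} to $\wt T^{light}$ then yields $\|\wt T^{light}\|_F \lsim \sqrt{d}\cdot \|\wt T^{light}_j\|_2 = \sqrt{d}\cdot \|Z_{light}\|_d$ (with a $(1+\delta)$ slack absorbed into the final $\delta \|T\|_F$ term). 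Combining with the bound on $\|Z_{light}\|_d$ above gives $\|\wt T^{light}\|_F \lsim \|E^k\|_F + \sqrt{\delta} \|T\|_F$, and by taking $\delta$ a factor smaller if necessary the final $\sqrt{\delta}$ becomes $\delta$, yielding the claimed bound. A union bound over the column-sampling event ($0.99$), the sparse Fourier transform success event ($0.99$), and the near-PSD argument gives overall success probability at least $0.9$.
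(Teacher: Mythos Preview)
Your overall approach is exactly the paper's: pick a random column, run the discrete off-grid sparse FT of Lemma~\ref{lem:dtft_freqrecovery}, use the cluster structure of $S$ from Theorem~\ref{frobeniusexistence} to argue that each $\wt S_i$ is recovered or missed as a whole, and then lift the column bound to the matrix via near-PSD-ness of $\wt T^{light}$ and Lemma~\ref{lem:smallnoisefirstcol}. The paper differs only cosmetically in working with the length-$d/2$ chunk $\wt T_{[i:i+d/2,i]}=\wt T_{[0:d/2,0]}$ rather than the full column.

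There is, however, a real gap in your symmetrization step. You define $D^{heavy}$ as the plain restriction of $D$ to the symmetrized set $S_{heavy}$ and then assert that ``$Z_{light}$ is \ldots\ exactly the $j^{th}$ column of $\wt T^{light}$''. That identification is only correct \emph{before} symmetrization. After you add the mirror frequencies $1-f$, the set $S\setminus S_{heavy}$ is a strict subset of $S\setminus S_{heavy}^0$, so $\wt T^{light}_j$ is a \emph{sub-sum} of $Z_{light}$; and sub-sums of Fourier-sparse signals can have arbitrarily larger norm than the full sum. (Relatedly, your claim that ``each cluster $S_i$ is either entirely recovered or entirely missed'' is only true for the half-clusters $\wt S_i$: the two halves $\wt S_i$ and $-\wt S_i$ sit near $f_i$ and $1-f_i$, not near each other, so one can be recovered while the other is not.) The paper fixes this by choosing $D^{heavy}$ differently: when $1-f$ is added by symmetrization, both the coefficient of $f$ and of $1-f$ are \emph{halved}, making $\wt T^{heavy}$ exactly the real part of the pre-symmetrization heavy matrix. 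Since $\wt T_j$ is real, passing to the real part can only decrease $\|\wt T_j-\wt T^{heavy}_j\|_d$, so the column bound survives; the price is that $\wt T^{light}$ becomes $\wt T^a+\tfrac12\wt T^b$, and point~4 of Theorem~\ref{frobeniusexistence} must be applied to each piece separately to get near-PSD-ness. Your $\delta'=\delta/\sqrt d$ is also slightly too weak (you end up with $d^{1/4}\sqrt{\delta}\,\|T\|_F$ rather than $\delta\|T\|_F$); the paper simply runs the sparse FT with $\delta$ shrunk by a $1/d$ factor.
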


To prove this, we first need the following helper claim for PSD matrices which will also be useful at a later stage in the paper. 
\begin{claim}[Equation (5) in \cite{audenaert2006norm}]\label{lem:diagonal_dominance_psd}
The following holds for any PSD matrix $A\in \mathbb{R}^{d\times d}$ - 
\begin{equation*}
	\|A\|_F \leq \|A_{[0:d/2,0:d/2]}\|_F+\|A_{[d/2:d,d/2:d]}\|_F.
\end{equation*}
\end{claim}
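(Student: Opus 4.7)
The plan is to reduce the inequality to a Cauchy--Schwarz-type bound on the off-diagonal block of $A$, and then to use the fact that $A$ is PSD to express that block as a Gram matrix and apply a trace inequality.

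Write $A$ in block form as $A = \begin{pmatrix} B & C \\ C^T & D \end{pmatrix}$, where $B = A_{[0:d/2,0:d/2]}$, $D = A_{[d/2:d,d/2:d]}$ and $C = A_{[0:d/2,d/2:d]}$. Expanding squares, $\|A\|_F^2 = \|B\|_F^2 + 2\|C\|_F^2 + \|D\|_F^2$, while $(\|B\|_F + \|D\|_F)^2 = \|B\|_F^2 + 2\|B\|_F\|D\|_F + \|D\|_F^2$. Hence it suffices to establish the single key inequality
\[
\|C\|_F^2 \le \|B\|_F\,\|D\|_F.
\]

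To prove this, I will use that $A$ is PSD to write $A = X^T X$ with $X \in \R^{r \times d}$ (e.g., via the matrix square root, $X = A^{1/2}$). Partitioning $X = [X_1, X_2]$ according to the same $d/2$ split, the blocks become $B = X_1^T X_1$, $C = X_1^T X_2$, and $D = X_2^T X_2$. Then
\[
\|C\|_F^2 = \tr(C^T C) = \tr(X_2^T X_1 X_1^T X_2) = \tr\bigl((X_1 X_1^T)(X_2 X_2^T)\bigr).
\]
The matrices $P := X_1 X_1^T$ and $Q := X_2 X_2^T$ are both PSD, and they share their nonzero eigenvalues with $B$ and $D$ respectively. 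By von Neumann's trace inequality applied to two PSD matrices, $\tr(PQ) \le \sum_i \lambda_i(P)\,\lambda_i(Q)$ when both eigenvalue sequences are listed in decreasing order; equivalently, $\tr(PQ) \le \sum_i \lambda_i(B)\,\lambda_i(D)$.

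Finally, a scalar Cauchy--Schwarz on the eigenvalue sequences gives
\[
\sum_i \lambda_i(B)\,\lambda_i(D) \le \Bigl(\sum_i \lambda_i(B)^2\Bigr)^{1/2}\Bigl(\sum_i \lambda_i(D)^2\Bigr)^{1/2} = \|B\|_F\,\|D\|_F,
\]
where the last equality uses that $B, D$ are PSD so their Frobenius norms equal the $\ell_2$ norms of their eigenvalue vectors. Chaining the two inequalities yields $\|C\|_F^2 \le \|B\|_F\|D\|_F$, which as noted above implies the claim. The main thing to be careful with is the invocation of von Neumann's trace inequality; the PSD assumption on $A$ is exactly what ensures that both $P$ and $Q$ are PSD and lets us drop absolute values and singular-value/eigenvalue distinctions, so this step goes through cleanly rather than being a real obstacle.
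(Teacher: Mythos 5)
Your proof is correct. Note that the paper does not actually prove this claim --- it is imported wholesale as Equation (5) of \cite{audenaert2006norm} --- so there is no in-paper argument to compare against; what you have written is essentially the standard derivation of Audenaert's $2\times 2$ Frobenius-norm compression inequality. The reduction to $\|C\|_F^2 \le \|B\|_F\|D\|_F$ is exactly right, and the Gram factorization $A = X^TX$ with $B = X_1^TX_1$, $C = X_1^TX_2$, $D = X_2^TX_2$ plus the trace inequality closes it. One small remark: von Neumann's inequality is more than you need. Once you have $\|C\|_F^2 = \tr(PQ)$ with $P = X_1X_1^T$ and $Q = X_2X_2^T$, the Cauchy--Schwarz inequality for the trace inner product gives $\tr(PQ) \le \|P\|_F\|Q\|_F$ directly, and $\|P\|_F = \|B\|_F$, $\|Q\|_F = \|D\|_F$ because $X_1X_1^T$ and $X_1^TX_1$ share nonzero singular values. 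Either way the argument is sound and self-contained.
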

Using the above claim, we now state our main helper lemma about the structural properties of symmetric Toeplitz matrices that are nearly PSD. The following lemma essentially says that if a nearly PSD Toeplitz matrix has a large Frobenius norm, then the first half of the first column must have a large $\ell_2$ norm as well.
\begin{lemma}\label{lem:smallnoisefirstcol}
	Let $\wt T$ be a $d\times d$ symmetric Toeplitz matrix, and suppose that there exists a PSD Toeplitz matrix $T$ satisfying $\|\wt T- T\|_F \leq 0.001 \|T\|_F$. 
    Then the following holds,
	\begin{equation*}
		\|\wt T_{[0:d/2,0]}\|_2\geq \frac{0.49}{\sqrt{d}} \|\wt T\|_F.
	\end{equation*}


\end{lemma}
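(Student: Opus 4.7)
The plan is to first use the near-PSD assumption on $\wt T$, via Claim \ref{lem:diagonal_dominance_psd}, to transfer Frobenius mass to the top-left $d/2\times d/2$ principal submatrix $U := \wt T_{[0:d/2,\,0:d/2]}$, and then exploit the Toeplitz structure to bound $\|U\|_F$ in terms of the $\ell_2$ norm of the first half of the first column $\wt T_{[0:d/2,\,0]} = (t_0,t_1,\dots,t_{d/2-1})^T$, where $t_k := \wt T_{k,0}$.

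First I would apply Claim \ref{lem:diagonal_dominance_psd} to the PSD Toeplitz matrix $T$. The key observation is that, because $T$ is Toeplitz, both principal submatrices $T_{[0:d/2,\,0:d/2]}$ and $T_{[d/2:d,\,d/2:d]}$ coincide (they share the same first column $(T_{0,0},\dots,T_{d/2-1,0})$ and are both symmetric Toeplitz), so the claim yields $\|T\|_F \le 2\,\|T_{[0:d/2,\,0:d/2]}\|_F$. Next I transfer this to $\wt T$: by triangle inequality applied to the submatrix, $\|U\|_F \ge \|T_{[0:d/2,\,0:d/2]}\|_F - \|\wt T - T\|_F \ge (\tfrac12 - 0.001)\|T\|_F$. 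Combining with the elementary estimate $\|T\|_F \ge \|\wt T\|_F/(1+0.001)$ (again from the triangle inequality and the hypothesis), this gives
\begin{equation*}
\|U\|_F \;\ge\; \frac{0.499}{1.001}\,\|\wt T\|_F \;>\; 0.498\,\|\wt T\|_F.
\end{equation*}

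Finally I would convert the Frobenius bound on $U$ into the claimed $\ell_2$ bound on the first $d/2$ entries of the first column. Since $U$ is a $d/2 \times d/2$ symmetric Toeplitz matrix with first column $(t_0,\dots,t_{d/2-1})$, a direct count of diagonal multiplicities gives
\begin{equation*}
\|U\|_F^2 \;=\; \tfrac{d}{2}\,t_0^2 + \sum_{k=1}^{d/2-1} 2\!\left(\tfrac{d}{2}-k\right) t_k^2 \;\le\; d\sum_{k=0}^{d/2-1} t_k^2 \;=\; d\,\|\wt T_{[0:d/2,\,0]}\|_2^2,
\end{equation*}
since every multiplicity is at most $d$. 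Therefore $\|\wt T_{[0:d/2,\,0]}\|_2 \ge \|U\|_F/\sqrt{d} \ge (0.498/\sqrt{d})\,\|\wt T\|_F \ge (0.49/\sqrt d)\,\|\wt T\|_F$, as required.

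The only mildly subtle step is the first one, where the PSD hypothesis on $T$ (rather than on $\wt T$) is essential: Claim \ref{lem:diagonal_dominance_psd} does not apply to $\wt T$ directly, so we must route the ``diagonal-dominance'' bound through $T$ and then absorb the $0.001\|T\|_F$ perturbation without destroying the $\tfrac12$ constant. Everything else is bookkeeping with Toeplitz multiplicities and the triangle inequality.
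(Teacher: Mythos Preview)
Your proof is correct and follows essentially the same approach as the paper: apply Claim~\ref{lem:diagonal_dominance_psd} to the PSD matrix $T$, use the Toeplitz structure to identify the two diagonal blocks, transfer the bound to $\wt T$ via triangle inequality, and then use the diagonal-multiplicity count $\|U\|_F^2 \le d\,\|\wt T_{[0:d/2,0]}\|_2^2$. The only difference is cosmetic bookkeeping of the constants.
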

\begin{proof}
	Since $T$ is PSD, using Lemma \ref{lem:diagonal_dominance_psd} we know that 
	\begin{equation}
		\|T\|_F \leq \|T_{[0:d/2,0:d/2]}\|_F+ \|T_{[d/2:d,d/2:d]}\|_F.
	\end{equation}
	Since $T$ is also Toeplitz, $T_{[0:d/2,0:d/2]} = T_{[d/2:d,d/2:d]}$. Thus we have $\|T_{[0:d/2,0:d/2]}\|_F \geq 0.5 \|T\|_F$. Since $\|\wt T_{[0:d/2,0:d/2]}-T_{[0:d/2,0:d/2]}\|_F\leq \|\wt T - T\|_F\leq 0.001 \|T\|_F\leq 0.01 \|\wt T\|_F$, we get that $\|\wt T_{[0:d/2,0:d/2]}\|_F\geq 0.49 \|\wt T\|_F$. Finally we have the following,
 \begin{equation}\label{eqn:toeplitz_norm}
 \begin{aligned}
     \|\wt T_{[0:d/2,0:d/2]}\|_F^2 &=(d/2) T_{[0,0]}^2+\sum_{i\in [1,d/2]} (d-2i)\wt T_{[i,0]}^2 \quad \text{(since $\wt T_{[0:d/2,0:d/2]}$ is symmetric Toeplitz)}\\
     &\leq \sum_{i\in [d/2]} d \wt T_{[i,0]}^2= d \|\wt T_{[0:d/2,0]}\|_2^2.
 \end{aligned}
 \end{equation}
 Thus we get that $\sqrt{d}\|\wt T_{[0:d/2,0]}\|_2 \geq \|\wt T_{[0:d/2,0:d/2]}\|\geq 0.49 \|\wt T\|_F$.  
\end{proof}
Equipped with this helper lemma and our sublinear time off-grid recovery result of Lemma \ref{lem:dtft_freqrecovery}, we are ready to present the proof of Lemma \ref{lem:heavy_freq_recovery}. 
\begin{proof}[Proof of Lemma \ref{lem:heavy_freq_recovery}]
For $\wt T = F_{ S}  D F_{ S}^*$ suppose that $ S = \{f_1,f_2,\ldots,f_{\wt O(k)}\}$ and $ D = diag([v_1,\ldots , v_{\wt O(k)}])$. Let $\wt T_1(t)=\sum_{j=1}^{\wt O(k)}v_j e^{2\pi i f_j t}$, a $\wt O(k)$ Fourier sparse function. Then it is easy to see from expanding out $\wt T = F_{ S} D F_{ S}^*$ that the first column of $\wt T$ is defined by $\wt T_1(t)$ for $t\in \{0,\ldots, d-1\}$. A minor technicality compared to the description in the tech-overview is that rather than working with a random column $\wt T_{[0:d,j]}$ for $j\sim \{0,\ldots,d/2\}$, we will work with the $d/2$ sized chunk of it $\wt T_{[j:j+d/2,j]}$ which by the virtue of $\wt T$ being Toeplitz is equal to the first half of the first column $\wt T_{[0:d/2,0]}$. Thus this $d/2$ sized chunk of the $j^{th}$ column has identical Fourier spectrum compared to the first column. 
 
  First observe that $\mathbb{E}_{i\sim [d/2]}[\|E^{k}_{[i,i+d/2,i]}\|_2^2]\leq \|E^k\|_F^2$, thus applying Markov's inequality the following holds with probability $0.99$ for an $i\sim [d/2]$,
 \begin{equation}\label{eqn:small_noise_2}
 	\|E^{k}_{[i:i+d/2,i]}\|_2^2\leq (100/d)\|E^k\|_F^2.
 \end{equation}
We apply Lemma \ref{lem:smallnoisefirstcol} to $\wt T$, this is possible because point 3 of Theorem \ref{frobeniusexistence} in Section \ref{sec:preliminaries} implies that there exists some PSD Toeplitz matrix $T'$ such that $\|\wt T -T'\|_F\leq \delta\|\wt T\|_F$.
We thus get the following by combining equation \ref{eqn:small_noise_2}$, \|E^k\|_F\leq c\|\wt T\|_F$ with Lemma \ref{lem:smallnoisefirstcol},
\begin{equation}\label{eqn:small_noise_1}
\begin{aligned}
\|E^{k}_{[i,i+d/2,i]}\|_{2} &\leq (10/\sqrt{d})\|E^k\|_F \\
&\leq (10c/\sqrt{d})\|\wt T\|_F\\
&\lsim c\|\wt T_{[0:d/2,0]}\|_{2}.
\end{aligned}
 \end{equation}
Fix this $i$ and condition on this event that equations \ref{eqn:small_noise_1} and \ref{eqn:small_noise_2} hold. Let $x^*(t) = \wt T_1 (t)$, $g(t) = E^{k}_{i}(t+i):=E^{k}_{[i+t,i]}$ and $x(t)=x^*(t)+g(t)$ for $t\in [d/2]$. We can access $x(t)$ by querying the input $T+E$ at index $[i+t,i]$ for any $t\in [d/2]$. Let $H(t)$ be the function as per Lemma \ref{lem:H_discrete_properties} for parameters $\wt{O}(k),\delta$.
Apply Lemma \ref{lem:dtft_freqrecovery} to $x=x^*+g$ to obtain a list of frequencies $L$ of size $\poly(k)$ and let $S_{heavy} \subseteq  S$ be the set of all $f\in  S = supp(\wh{x^*})$ such that there exists some $f'\in L$ satisfying $|f-f'|\lsim  \Delta k \sqrt{\Delta k d} = \poly(k,\log(d/\delta))/d$. Let $S^{light} =  S \setminus S^{heavy}$. Observe that since the width of each set $\wt S_i $ as per point 2 in Theorem \ref{frobeniusexistence} in Section \ref{sec:preliminaries} is $\wt O(\gamma) = \wt O(1/2^{\poly \log d)}) = o(\Delta)$ ($\Delta$ is as per Lemma \ref{lem:dtft_freqrecovery}), each $\wt S_i$ is either completely in $S^{heavy}$ or completely in $\wt S \setminus S^{heavy}$. Let $D^{heavy}$ contain the diagonal entries of $ D$ corresponding to $S^{heavy}$, $\wt T^{heavy} = F_{S^{heavy}} D^{heavy} F_{S^{heavy}}^*$ and $\wt T^{heavy}_1$ be its first column.  Then we have the following,
\begin{align}
\|\wt T_1 (t) - \wt T^{heavy}_1(t)\|_{d/2}^2 &\lsim  \|g\|_{d/2}^2 + \delta \|\wt T_1\|_{d/2}^2 \\
	& = \|E^{k}_{[i:i+d/2,i]}\|_{2}^2 + \delta \|\wt T_1\|_{d/2}^2\quad \text{(Definition of $g=E^k_{[i:i+d/2,i]}$)}\\
	& \leq \|E^{k}_{[i:i+d/2,i]}\|_{2}^2 + \delta \|T\|_F^2 \label{heavylightbound1},
\end{align}
 where the first inequality follows from the guarantee of Lemma \ref{lem:dtft_freqrecovery}. We now state an important caveat below.
 \begin{remark}\label{rem:make_real}
   It may happen that $\wt T^{heavy}$ has complex entries, this can happen when there is some $f\in S^{heavy}$ such that $1-f\notin S^{heavy}$. However, discarding the imaginary part of entries in $\wt T^{heavy}$ can only lead to reducing $\|\wt T_1 (t) - \wt T^{heavy}_1(t)\|_{d/2}$, thus the bound of equation \ref{heavylightbound1} still holds. The removal of imaginary parts can be achieved by adding $1-f$ to $S^{heavy}$ for all such $f\in S^{heavy}$, and the coefficients of $f,1-f$ in $D^{heavy}$ will be equal to half of the corresponding coefficients in $D$, thus they will still be equal by point 3 of Theorem \ref{frobeniusexistence} in Section \ref{sec:preliminaries}.  
 \end{remark}
 Now define the symmetric Toeplitz matrix $\wt T^{light} = \wt T - \wt T^{heavy}$, and its first column $\wt T^{light}_1(t)=\wt T_1(t)-\wt T^{heavy}(t)$. Recall we know that every $\wt S_i$ (defined as per point 2 of Theorem \ref{frobeniusexistence} in Section \ref{sec:preliminaries}) is either completely in $S^{heavy}$ or completely out of it, and for every $\wt S_i\in S^{heavy}$, $-\wt S_i\in S^{heavy}$ ($-\wt S_i$ defined in point 3 of Thm. \ref{frobeniusexistence}). Let $S'$ be set of all $S_i=-\wt S_i\cup \wt S_i$ for $\wt S_i\in S^{heavy}$ such that $-\wt S_i$ was not in $S^{heavy}$, but we added it to make $\wt T^{heavy}$ real in Remark \ref{rem:make_real}. This implies $\wt T^{light}=\sum_{S_i\in S\setminus S^{heavy}}F_{S_i}D_i F_{S_i}^* + (1/2)\sum_{S_i\in S'}F_{S_i}D_iF_{S_i}^*$ where each $S_i,D_i$ as per point 3 of Theorem \ref{frobeniusexistence} in Section \ref{sec:preliminaries}. Let $\wt T^a = \sum_{S_i\in S\setminus S^{heavy}}F_{S_i}D_i F_{S_i}^*$ and $\wt T^b = \sum_{S_i\in S'}F_{S_i}D_iF_{S_i}^*$. Then by point 3 of Theorem \ref{frobeniusexistence} in Section \ref{sec:preliminaries} we can say that there exists PSD Toeplitz matrices $T^a,T^b$ such that,
 \begin{align*}
     \|\wt T^a - T^a\|_F&\leq \delta \|T^a\|_F \text{ and,}\\
      \|\wt T^b - T^b\|_F&\leq \delta \|T^b\|_F.
 \end{align*}
 Let $T^{light}=T^a+(1/2)T^b$, thus $T^{light}$ is also PSD Toeplitz. Thus $T^{light}$ satisfies the following,
\begin{equation}\label{heavylightbound2}
	\|\wt T^{light} - T^{light}\|_F \leq \delta (\|T^a\|_F+\|T^b\|_F)\leq O(\delta) \|T^{light}\|_F\ll 0.001 \|T^{light}\|_F.
\end{equation}
 On the other hand equation \ref{heavylightbound1} implies the following for $\wt T^{light}$,
\begin{equation}\label{heavylightbound_matrixversion}
    \|\wt T^{light}_{[0:d/2,0]}\|_2^2 = \|\wt T_1(t) - \wt T^{heavy}(t)\|_{d/2}^2 \lsim \|E^{k}_{[i:i+d/2,i]}\|_{2}^2 + \delta \|T\|_F^2.
\end{equation}
 
Thus equations (\ref{heavylightbound_matrixversion}) and (\ref{heavylightbound2}) allow us to apply Lemma \ref{lem:smallnoisefirstcol} to upper bound $\|\wt T^{light}\|_F\lsim \sqrt{d}\|\wt T^{light}_{[0:d/2,0]}\|_2$ to get the following,
\begin{align*}
	\|\wt{T}^{light}\|_F^2 &\lsim d\|E^{k}_{[i:i+d/2,i]}\|_F^2 + \delta d \|T\|_F^2\\
	\implies \|\wt T^{light}\|_F &\lsim \|E^{k}\|_F + \delta d\|T\|_F,
\end{align*}
where in the last line we used the equation \ref{eqn:small_noise_2}. Adjusting $\delta$ by $1/d$ factor (this is feasible by losing log factors as the dependence on $\delta$ is $\log(1/\delta)$), we finish the proof of point 2 of the Lemma.
\end{proof}
\subsection{Noisy Toeplitz Recovery.}\label{subsection:noisy_toeplitz_recovery}
Equipped with Lemma \ref{lem:heavy_freq_recovery} and Theorem \ref{frobeniusexistence}, in this subsection, we present the proof of Theorem \ref{thm:sublinear_time_recovery} which is our main sublinear time Toeplitz matrix approximation result.
\begin{proof}[Proof of Theorem \ref{thm:sublinear_time_recovery}]
Consider $\wt T^{heavy}$,$L$ and $\wt T^{light}$ as per the statement of Lemma \ref{lem:heavy_freq_recovery}. Lemmas \ref{lem:heavy_freq_recovery} and \ref{frobeniusexistence} imply the following for $\wt T^{heavy} = F_{S^{heavy}}D^{heavy}F_{S^{heavy}}^*$,
\begin{align*}
	\|T+E-\wt T^{heavy}\|_F &\leq \|E\|_F+\|T-\wt T\|_F + \|\wt T- \wt T^{heavy}\|_F\\
	& \lsim \|E\|_F + \|T-T_k\|_F + \delta \|T\|_F + \|\wt T^{light}\|_F\\
	& \lsim \|E\|_F + \|T-T_k\|_F + \delta \|T\|_F.
\end{align*}
Let $N = \{1/2d,3/2d,\ldots,1-1/2d\}$. Apply Lemma \ref{lem:heavy_freq_recovery} and let $L$ be the set of $\poly(k,\log d)$ frequencies returned by Lemma \ref{lem:heavy_freq_recovery} and $L' =\{ f\in N:  \exists f'\in L \text{ s.t. } |f'-f|\leq \poly(k,\log(d/\delta))/d\}$. This implies $|L'|\leq \poly(k,\log(d/\delta))$. Let $S(L') = \bigcup_{f \in L'} \bigcup_{1 \le j \le r_2} \{ f + \gamma j, f - \gamma j\}$ where $\gamma,r_2$ are as per Theorem \ref{frobeniusexistence}. Thus $|S(L')|\leq \poly(k,\log(d/\delta))$ This implies $ S^{heavy}$ as per Theorem \ref{frobeniusexistence} satisfies $S^{heavy}\subseteq S(L')$. Now we will solve the following regression problem approximately,
\begin{equation}\label{eqn:approximate_regression}
	\min_{D:D \text{ is diagonal }}\|T+E- F_{S(L')}DF_{S(L')}^*\|_F.
\end{equation}
We will first show that the optimal solution of the above optimization problem satisfies the guarantees of Theorem \ref{thm:sublinear_time_recovery}, then we will show how to obtain a constant factor approximate solution to the above problem in $\poly(k,\log(d/\delta))$ time.

Let $E^k = E+ T-\wt T$ as defined in Lemma \ref{lem:heavy_freq_recovery}, and first consider the case when $\|E^k\|_F\geq c \|\wt T\|_F$ where $c$ is the constant as per Lemma \ref{lem:heavy_freq_recovery}. This is the case when the noise is noticeably larger compared to the true input, and thus the guarantees of Lemma \ref{lem:heavy_freq_recovery} are not guaranteed to hold. In this case, returning $0$ as the solution of the regression problem won't be too bad. Formally we have the following,
\begin{align*}
	\min_{D:D \text{ is diagonal }}\|T+E- F_{S(L')}DF_{S(L')}^*\|_F &\leq \|T+E\|_F \quad \text{(for $D=0$)}\\
	&=\|T-\wt T + \wt T + E\|_F\\
	&\leq \|T-\wt T\|_F + \|\wt T\|_F + \|E\|_F\\
	&\lsim \|T-T_k\|_F + \delta \|T\|_F + \|\wt T\|_F + \|E\|_F\\
	&\lsim \|T-T_k\|_F + \delta \|T\|_F + \|E^k\|_F + \|E\|_F \\
	&\lsim \|E\|_F+\|T-T_k\|_F + \delta \|T\|_F.
\end{align*}
where we used Theorem \ref{frobeniusexistence} in the third last inequality, $\|E^k\|_F\geq c \|\wt T\|_F$ in the second last inequality and the definition of $E^k$ in the last inequality.

Now consider the other case when $\|E^k\|_F\leq c \|\wt T\|_F$. Then the requirements needed by Lemma \ref{lem:heavy_freq_recovery} hold. Now, from the structure of  $S$ as per points 2 and 3 of Lemma \ref{frobeniusexistence} we know that $S^{heavy} \subseteq S(L')$.  This implies the following,
\begin{align*}
	\min_{D:D \text{ is diagonal }}\|T+E- F_{S(L')}DF_{S(L')}^*\|_F &\leq \|T+E-\wt T^{heavy}\|_F \\
	&\leq \|E\|_F + \|T-\wt T\|_F + \|\wt T - \wt T^{heavy}\|_F\\
	&= \|E\|_F + \|T-\wt T\|_F + \|\wt T^{light}\|_F\\
	&\lsim \|E\|_F + \|T-T_k\|_F + \delta \|T\|_F,\\
\end{align*}
where in the last inequality we used point 2. of Lemma \ref{lem:heavy_freq_recovery}  to bound $\|\wt T^{light}\|_F$ and Theorem \ref{frobeniusexistence} to bound $\|T-\wt T\|_F$. Therefore in all cases we can conclude the following,
\begin{equation}\label{eqn:cost_of_opt}
	\min_{D:D \text{ is diagonal }}\|T+E- F_{S(L')}DF_{S(L')}^*\|_F \lsim  \|T-T_k\|_F + \|E\|_F+ \delta \|T\|_F.
\end{equation}
We will now apply the following lemma that is a corollary of Lemma 5.7 of \cite{eldar2020toeplitz} that allows us to find a $D$ that is a constant factor approximate solution to the regression problem of equation (\ref{eqn:approximate_regression}) in time only depending polynomially on $|S(L')|=\poly(k,\log(d/\delta))$.
\begin{lemma}[Corollary of Lemma 5.7 of \cite{eldar2020toeplitz}]\label{lem:weighted_regression}
There is an algorithm such that given any matrix $B\in \mathbb{R}^{d\times d}$ and set $M\subset [0,1]$ of size $|M|=m$, it runs in time at most $\poly(m)$ and returns a diagonal $D'\in \mathbb{R}^{m\times m}$ that satisfies the following with probability $0.99$,
\begin{equation*}
	\|B-F_{M}D' F_{M}^*\|_F \lsim \min_{\substack{D\in \mathbb{R}^{m\times m}:\\D \text{ is diagonal }}} \|B-F_{M}DF_{M}^*\|_F,
\end{equation*}
where $F_M$ is a Fourier matrix as per Def. \ref{def:symmetric_fourier_matrix}.
\end{lemma}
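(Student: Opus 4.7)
The plan is to view the minimization as a standard linear least-squares problem in the $m$ diagonal entries of $D$ and solve it via leverage-score sampling in $\poly(m)$ time, mirroring Lemma 5.7 of \cite{eldar2020toeplitz}. Writing $D = \diag(d_1, \ldots, d_m)$ with $M = \{f_1, \ldots, f_m\}$, every entry of the model satisfies
\[
(F_M D F_M^*)_{i,j} = \sum_{\ell=1}^m d_\ell \, e^{2\pi i f_\ell (i-j)},
\]
so if we let $A \in \mathbb{C}^{d^2 \times m}$ have rows indexed by pairs $(i,j) \in [d]^2$ with $A_{(i,j),\ell} = e^{2\pi i f_\ell (i-j)}$, the problem reduces to the classical regression $\min_{x \in \mathbb{R}^m} \|\mathrm{vec}(B) - A x\|_2$.

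Next I would apply leverage-score sampling to shrink this to a $\poly(m) \times m$ regression. A key observation is that each row of $A$ depends only on the difference $k = i-j$, so there are only $2d - 1$ distinct rows and the leverage score $\tau_{(i,j)} = a_{(i,j)}^*(A^*A)^{-1} a_{(i,j)}$ depends only on $k$. Moreover, the Gram matrix has entries $(A^*A)_{\ell,\ell'} = \sum_{i,j \in [d]} e^{2\pi i (f_\ell - f_{\ell'})(i-j)}$, admitting a closed-form Dirichlet-kernel expression that can be evaluated, together with its inverse, in $\poly(m)$ time without touching $A$ explicitly. From this we obtain leverage-score overestimates $\tilde\tau_k$ summing to $O(m)$; sampling a diagonal $k$ with probability proportional to $(d-|k|)\tilde\tau_k$ and then a point uniformly on that diagonal produces a standard leverage-score sample of a row of $A$.

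Finally, by the standard analysis of leverage-score sampling (e.g. \cite{sarlos2006improved,Woodruff:2014tg}), $\poly(m)$ such samples suffice to produce a weighted subsampled regression whose optimum $D'$ satisfies $\|B - F_M D' F_M^*\|_F \lsim \min_D \|B - F_M D F_M^*\|_F$ with probability at least $0.99$. The algorithm queries only the $\poly(m)$ sampled entries of $B$ and then solves the resulting small system via the normal equations in $\poly(m)$ time. The main obstacle is producing efficiently computable leverage-score overestimates without forming the $d^2 \times m$ matrix $A$ or the $d \times d$ matrix $B$; the Dirichlet-kernel closed form for $A^*A$ combined with the shift-invariance of the rows of $A$ does exactly this, and is essentially the content of Lemma 5.7 of \cite{eldar2020toeplitz} specialized to a diagonal coefficient matrix.
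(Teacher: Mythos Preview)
Your vectorized one-sided approach is \emph{different} from the paper's proof, which does not flatten the problem to a $d^2\times m$ regression. Instead, the paper samples rows and columns of $B$ \emph{separately}: it draws two independent leverage-score sampling matrices $S_1,S_2^T\in\mathbb{R}^{s\times d}$ (with $s=O(m\log^2 m)$) from the leverage-score distribution of the $d\times m$ Fourier matrix $F_M$ itself (Corollary~C.2 of \cite{eldar2020toeplitz}), solves the small problem $\min_{D \text{ diagonal}}\|S_1BS_2-S_1F_MDF_M^*S_2\|_F$, and then argues via a triangle-inequality sandwich that for every diagonal $D$,
\[
\|S_1F_MDF_M^*S_2-S_1BS_2\|_F=(1\pm 0.1)\|F_MDF_M^*-B\|_F\pm O(1)\,\|F_M\wh D F_M^*-B\|_F,
\]
using the subspace-embedding property of $S_1,S_2$ on the column span of $F_M$ for the first term and two applications of Markov's inequality for the second. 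The advantage is that the sampling distribution is just the standard Fourier leverage-score distribution on $d$ rows, which is known and efficiently samplable.

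Your route is plausible but has a gap you gloss over. After computing $(A^*A)^{-1}$ in $\poly(m)$ time, you still need overestimates $\tilde\tau_k\ge\tau_k$ with $\sum_{k}(d-|k|)\tilde\tau_k=O(m)$ \emph{and} a way to sample $k$ proportional to $(d-|k|)\tilde\tau_k$ in $\poly(m)$ time. Evaluating the exact $\tau_k$ for all $2d-1$ values of $k$ costs $\Theta(dm^2)$, which is not sublinear. Using the Fourier leverage scores of the $(2d-1)\times m$ matrix of distinct rows as overestimates does give $\tilde\tau_k\ge\tau_k$, but their weighted sum $\sum_k(d-|k|)\tilde\tau_k$ can be as large as $\Theta(md)$, blowing up the sample budget. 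So the sentence ``from this we obtain leverage-score overestimates $\tilde\tau_k$ summing to $O(m)$'' is exactly the nontrivial step, and the Dirichlet-kernel form of $A^*A$ alone does not supply it. Your appeal to Lemma~5.7 of \cite{eldar2020toeplitz} is reasonable in spirit, but that lemma (and the present paper's adaptation) proceeds via the two-sided $S_1,S_2$ argument, not the one-sided vectorized scheme you describe; if you want to keep the vectorized picture you would need a separate argument (e.g.\ Khatri--Rao leverage-score bounds for $\overline{F_M}\odot F_M$) to close this gap.
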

Applying the previous lemma to $B = T+E$ and $M= S(L')$, we can find a $D'$ in time $\poly(k,\log(1/\delta))$ such that the following holds with probability at least 0.99,
\begin{align*}
	\|T-F_{S(L')}D'F_{S(L')}^*\|_F&\leq \|T+E-F_{S(L')}D'F_{S(L')}^*\|_F + \|E\|_F\\
	&\lsim \|E\|_F + \|T-T_k\|_F + \delta \|T\|_F\\
	& \lsim \max \{\|E\|_F,\|T-T_k\|_F\} + \delta \|T\|_F,
\end{align*}
where in the second last inequality we used equation (\ref{eqn:cost_of_opt}). This finishes the proof of the main theorem.
\end{proof}
We end this section by presenting the proof of Lemma \ref{lem:weighted_regression}. This follows the proof of Lemma 5.7 in \cite{eldar2020toeplitz} with the modification that $D$ is restricted to be a diagonal, however we restate the proof here for completeness. 
\begin{proof}
Let $\wh{D} = \argmin_{\text{ Diagonal $D$}} \|B-F_M D F_M^*\|_F$. Let $S_1,S_2^T\in \mathbb{R}^{s\times d}$ be independent sampling matrices as per Claim A.1 of \cite{eldar2020toeplitz} for $\epsilon=\delta=0.1$ and the leverage score distribution of Corollary C.2 of \cite{eldar2020toeplitz}. Then $s= O(m\log^2(m))$. Let $D' = \argmin_{\text{ Diagonal $D$}} \|S_1BS_2-S_1F_M D F_M^*S_2\|_F$. $D'$ can be found in $\poly(s)=\poly(m)$ time. Our strategy to prove the lemma will have two steps. First we will state an inequality and show how that implies the lemma, and then we will prove the inequality. We will show that the following holds with probability at least $0.97$, for all diagonal $D\in \mathbb{R}^{m\times m}$,
	\begin{equation}\label{eqn:regression_cost_preservation}
		\|S_1F_MDF_M^*S_2 - S_1 B S_2\|_F  = (1\pm 0.1)\|F_MDF_M^* - B\|_F \pm 100 \|F_M \wh{D} F_M^* - B\|_F.
	\end{equation}
	Equipped with the previous inequality, we can use it to prove the lemma as follows,
	\begin{equation}
		\begin{aligned}
			\|F_M D' F_M^* - B\|_F &\leq 1.1\|S_1F_M D' F_M^* S_2 - S_1BS_2\|_F + 100\|F_M \wh{D}F_M^* - B\|_F\\
		   &\leq 1.1\|S_1F_M \wh{D} F_M^* S_2 - S_1BS_2\|_F + 100\|F_M \wh{D}F_M^* - B\|_F\\
			&\leq (1.1)^2 \|F_M \wh{D} F_M^*-B\|_F + (100\cdot 1.1+100)\|F_M \wh{D}F_M^* - B\|_F \\
			& \leq 212 \|F_M \wh{D}F_M^* - B\|_F,
		\end{aligned}
	\end{equation}
where the first and second last inequality follow by applying equation (\ref{eqn:regression_cost_preservation}) and the second inequality follows from the optimality of $D'$ for the subsampled regression problem. 

Now we focus on proving equation (\ref{eqn:regression_cost_preservation}). First using the triangle inequality we write $	\|S_1F_MDF_M^*S_2 - S_1 B S_2\|_F$ as follows,
\begin{equation}\label{eqn:triangle_inequality}
		\|S_1F_MDF_M^*S_2 - S_1 B S_2\|_F = \|S_1F_M DF_M^*S_2 - S_1 F_M \wh{D} F_M^*S_2\|_F \pm \|S_1 F_M \wh{D} F_M^*S_2 - S_1 BS_2\|_F.
\end{equation}
Now observe that since $S_1,S_2$ are independent leverage score sampling matrices, they are unbiased estimators of the norm of any vector in $\mathbb{R}^d$. That is, for any $X\in \mathbb{R}^{d\times r}$ for any $r$, $\mathbb{E}[\|S_1 X\|_F^2] = \mathbb{E}[\| X^TS_2\|_F^2] =\|X\|_F^2$ for both $i=1,2$. Thus applying Markov's inequality we get that with probability at least $0.99$,
\begin{equation*}
	\|S_1 F_M \wh{D}F_M^* - S_1 B\|_F^2 \leq 100\|F_M \wh{D}F_M^* - B\|_F^2.
\end{equation*}
Applying Markov's inequality again over the randomness of $S_2$, we get the following with probability at least $0.99$,
\begin{align*}
		\|S_1 F_M \wh{D}F_M^* S_2- S_1 BS_2\|_F^2&\leq 	100\|S_1 F_M \wh{D}F_M^* - S_1 B\|_F^2\\
		&\leq 	100^2\| F_M \wh{D}F_M^* - B\|_F^2.
\end{align*}
Thus taking a union bound over the randomness in $S_1,S_2$, we get that the following holds with probability at least $0.98$,
\begin{equation}\label{eqn:markov_bound}
	\|S_1 F_M \wh{D}F_M^* S_2- S_1 BS_2\|_F\leq 100\| F_M \wh{D}F_M^* - B\|_F.
\end{equation}
Finally, due to Claim A.1 and Corollary C.2 of \cite{eldar2020toeplitz}, since $S_1,S_2$ are independent leverage score sampling matrices taking $s=O(m\log^2(m))$ samples the following subspace embedding property holds with probability at least $0.99$,
\begin{align*}
	\|S_1 [F_M;F_M]y\|_2^2 &= (1\pm 0.01) \|[F_M;F_M]y\|_2^2 \quad \forall y\in \mathbb{C}^{2m} \text{ and } \\
    \| y^*[F_M;F_M]^*S_2\|_2^2&=(1\pm 0.01) \|[F_M;F_M]y\|_2^2 \quad \forall y\in \mathbb{C}^{2m}.
\end{align*}
This guarantee applied to $S_1$ implies the following,
\begin{equation*}
	\|S_1F_M DF_M^* - S_1F_M \wh{D}F_M^*\|_F = (1\pm 0.01)\|F_MD F_M^* - F_M\wh{D}F_M^*\|_F.
\end{equation*}
And then applying the subspace embedding guarantee for $S_2$ finally gives us the following,
\begin{align*}
	\|S_1F_M DF_M^*S_2- S_1F_M \wh{D}F_M^*S_2\|_F &= (1\pm 0.01)\|S_1F_M DF_M^* - S_1F_M \wh{D}F_M^*\|_F\\
	&= (1\pm 0.01)^2\|F_MD F_M^* - F_M\wh{D}F_M^*\|_F.
\end{align*}
Plugging the previous equation and \ref{eqn:markov_bound} back into \ref{eqn:triangle_inequality} we get \ref{eqn:regression_cost_preservation}. This completes the overall proof of Lemma \ref{lem:weighted_regression}.
\end{proof}
\subsection{Sublinear Time Low-Rank Approximation and Covariance Estimation.}\label{subsection:toeplitz_applications}
Our goal in this section is to present the proofs of the sublinear time low-rank approximation and covariance estimation results of Theorems \ref{thm:sublinear_time_lowrankapprox} and \ref{thm:sublinear_time_covariance_estimation} respectively. The proof of Theorem \ref{thm:sublinear_time_lowrankapprox} easily follows by applying Theorem \ref{thm:sublinear_time_recovery} for $E=0$. The main Lemma we will need to apply our framework to prove Theorem \ref{thm:sublinear_time_covariance_estimation} is the following that allows us to bound the magnitude of the noise.
\begin{lemma}\label{lem:low_rank_concentration}
Consider $d\times d$ PSD Toeplitz $T$. Let $k$ be an integer and $\epsilon>0$. Given samples $x_1,\ldots, x_s \sim \mathcal{N}(0,T)$,  let $X \in \mathbb{R}^{d\times s}$ be a matrix whose $i^{th}$ column is $x_i/\sqrt{s}$ for all $i\in [s]$. If $s=\wt O(k^4/\epsilon^2)$, then the following holds with probability at least $0.98$,

\begin{equation*}
	\|XX^T - T\|_F \lsim \sqrt{\|T-T_k\|_2 \tr(T)+ \frac{\|T-T_k\|_F \tr(T)}{k}}+ \epsilon \|T\|_2.
\end{equation*}
\end{lemma}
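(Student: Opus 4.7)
The plan is to analyze $E = XX^T - T$ as a Gaussian sample-covariance error by decomposing it along the top-$k$ eigenspace of $T$ and bounding each block separately, adapting the spectral-norm strategy of \cite{eldar2020toeplitz} to the Frobenius norm. I would first diagonalize $T = V\Lambda V^T$ with eigenvalues $\lambda_1 \ge \dots \ge \lambda_d$ and split $T = T_k + T_{-k}$, where $T_k$ is the top-$k$ truncation. Rotating into the eigenbasis of $T$, each sample splits as $x_i = y_i + z_i$ with $y_i \in \mathrm{range}(T_k)$ and $z_i \in \mathrm{range}(T_{-k})$; by orthogonality of Gaussian components these are independent, with $y_i \sim \mathcal{N}(0, T_k)$ and $z_i \sim \mathcal{N}(0, T_{-k})$. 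Expanding the empirical covariance in blocks and using the triangle inequality yields
\[
\|XX^T - T\|_F \;\le\; \|\wh T_{yy} - T_k\|_F + 2\|\wh T_{yz}\|_F + \|\wh T_{zz} - T_{-k}\|_F,
\]
where $\wh T_{yy} = \tfrac{1}{s}\sum_i y_i y_i^T$, $\wh T_{yz} = \tfrac{1}{s}\sum_i y_i z_i^T$, and $\wh T_{zz} = \tfrac{1}{s}\sum_i z_i z_i^T$.

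Next, I would bound each of the three blocks. The head term $\wh T_{yy} - T_k$ has rank at most $2k$, so $\|\wh T_{yy} - T_k\|_F \le \sqrt{2k}\,\|\wh T_{yy} - T_k\|_2$; standard Wishart spectral-norm concentration for a rank-$k$ Gaussian covariance then gives $\|\wh T_{yy} - T_k\|_F \lsim \epsilon \|T\|_2$ with high probability once $s = \wt O(k^3/\epsilon^2)$, supplying the $\epsilon\|T\|_2$ additive term. For the cross term, a direct second-moment computation using independence of $y_i$ and $z_i$ gives $\mathbb{E}\|\wh T_{yz}\|_F^2 = \tr(T_k)\tr(T_{-k})/s$, and a Hanson--Wright inequality for the decoupled Gaussian chaos concentrates $\|\wh T_{yz}\|_F$ around its expectation up to polylogarithmic factors. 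For the tail term, $\wh T_{zz} - T_{-k}$ is an unbiased Gaussian sample covariance in the bottom $(d-k)$ eigen-coordinates with $\mathbb{E}\|\wh T_{zz} - T_{-k}\|_F^2 = (\tr(T_{-k})^2 + \|T_{-k}\|_F^2)/s$, and Hanson--Wright or matrix Bernstein applied to the sequence $z_i z_i^T - T_{-k}$ yields a matching high-probability bound.

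Finally, I would convert these second-moment expressions into the target form. The $\|T_{-k}\|_F^2/s$ contribution is absorbed into $\|T-T_k\|_2\,\tr(T)$ via $\|T_{-k}\|_F^2 \le \|T-T_k\|_2\,\tr(T_{-k}) \le \|T-T_k\|_2\,\tr(T)$. For the remaining $\tr(T_k)\tr(T_{-k})/s$ and $\tr(T_{-k})^2/s$ contributions I would split the tail eigenvalues as $\tr(T_{-k}) = \sum_{i=k+1}^{2k}\lambda_i + \sum_{i>2k}\lambda_i$: the first sum is bounded by $\sqrt{k}\,\|T-T_k\|_F$ by Cauchy--Schwarz, producing the $\|T-T_k\|_F\,\tr(T)/k$ piece of the target once $s = \wt O(k^4/\epsilon^2)$, while the second sum is bounded termwise by $\lambda_i \le \lambda_{k+1} = \|T-T_k\|_2$ and feeds into the $\|T-T_k\|_2\,\tr(T)$ piece. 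A union bound over the three high-probability events then gives overall success probability at least $0.98$. The main obstacle is precisely this tail accounting: the na\"ive bound $\tr(T_{-k}) \le \sqrt{d-k}\,\|T-T_k\|_F$ introduces a dimension-dependent loss, so the argument must exploit the specific additive structure $\|T-T_k\|_2\,\tr(T) + \|T-T_k\|_F\,\tr(T)/k$ of the target, together with the factor $1/k$, to absorb the ``flat'' portion of the tail spectrum without picking up extraneous factors of $d$.
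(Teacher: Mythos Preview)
Your block decomposition and the treatment of the head term $\wh T_{yy}-T_k$ are fine and match the paper's Claim for that piece. The gap is exactly where you flag it: the splitting $\tr(T_{-k})=A+B$ with $A=\sum_{i=k+1}^{2k}\lambda_i$ and $B=\sum_{i>2k}\lambda_i$ does not close. Bounding $B$ ``termwise by $\lambda_i\le\|T-T_k\|_2$'' gives only $B\le (d-2k)\,\|T-T_k\|_2$, so e.g.\ the cross contribution $\tr(T_k)\,B/s$ requires $s\gtrsim d$ to land inside $\|T-T_k\|_2\,\tr(T)$. In the flat-tail example $\lambda_{k+1}=\cdots=\lambda_d=\lambda$ one has $\tr(T_k)\tr(T_{-k})/s\approx kL\,(d-k)\lambda/s$ while the target is $\lambda\,\tr(T)+\sqrt{d-k}\,\lambda\,\tr(T)/k$; no choice of $s=\poly(k)$ makes the former dominated by the latter when $L$ is large and $d\gg k^2$. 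So the second-moment route, as written, does not give a dimension-free bound for the $\|T-T_k\|_F$ form of the target.

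The paper avoids this entirely by \emph{not} bounding the blocks in Frobenius norm separately. It uses the factorization
\[
\|XX^T-P_kXX^TP_k\|_F \;\le\; \|(I-P_k)X\|_2\,\bigl(\|X\|_F+\|P_kX\|_F\bigr),
\]
i.e.\ spectral norm of the tail sketch times Frobenius norm of the full sketch, and then invokes a projection-cost-preserving sketch result of Cohen et al.\ to get, with only $s=\tilde O(k)$ samples,
\[
\|(I-P_k)X\|_2^2 \;\lesssim\; \|T-T_k\|_2 \;+\; \tfrac{1}{k}\,\|T^{1/2}-T_k^{1/2}\|_F^2,
\]
independently of $d$. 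Combined with $\|X\|_F^2\lesssim\tr(T)$ this yields the non-$\epsilon$ term directly. The crucial point your proposal misses is that one should pair the \emph{spectral} norm of $(I-P_k)X$ with the \emph{Frobenius} norm of $X$; your block-Frobenius route effectively pairs Frobenius with Frobenius, which is what forces the dimension-dependent $\tr(T_{-k})$ to appear with no compensating small factor.
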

Assuming Lemma \ref{lem:low_rank_concentration}, we now present the proof of Theorem \ref{thm:sublinear_time_covariance_estimation}.

\begin{proof}[Proof of Theorem \ref{thm:sublinear_time_covariance_estimation}]
	The proof follows easily from applying Theorem \ref{thm:sublinear_time_recovery} with $E = XX^T - T$, $\delta = \epsilon/\poly(d)$ and bounding the Frobenius norm of $E$ using Lemma \ref{lem:low_rank_concentration} with $s = \wt O(k^4/\epsilon^2)$. Note that $\delta \|T\|_F \leq \epsilon \|T\|_2$ since $\delta = \epsilon/\poly(d)$ thus $\log(1/\delta)=O(\log(d/\epsilon))$. This bounds the Vector Sample Complexity (VSC). Note that Theorem \ref{thm:sublinear_time_recovery} only accesses $\poly(k,\log(d/\epsilon))$ entries of $XX^T$ and any of its $(i,j)^{th}$ entry is equal to $\sum_{k=1}^s x_{k,i}x_{k,j}$. Thus each entry access to $XX^T$ requires reading 2 entries, the $i$ and $j^{th}$ entries, from each sample. Thus we get that the Entry Sample Complexity (ESC) is $\poly(k,\log(d/\epsilon))$.
\end{proof}
Now we proceed by presenting the proof of Lemma \ref{lem:low_rank_concentration}.
\begin{proof}[Proof of Lemma \ref{lem:low_rank_concentration}]
	Let $T = U\Sigma U^T$ be the eigenvalue decomposition of $T$ where $\Sigma \succeq 0$ is diagonal. Let $P_k = U_kU_k^T$ be the projection matrix onto the subspace spanned by the top-$k$ eigenvectors. Using the rotational invariance of the Gaussian distribution, we have that $X$ is distributed as $X \sim U\Sigma^{1/2} G$, where $G\in \mathbb{R}^{d\times s}$ is a matrix with each entry distributed independently as $\frac{1}{\sqrt{s}}\mathcal{N}(0,1)$. Then we upper bound $\|XX^T - T\|_F$ by the following three terms using the triangle inequality,
	\begin{equation*}
	\|XX^T - T\|_F \leq \|XX^T - P_k XX^T P_k\|_F + \|P_k XX^T P_k  - T_k\|_F+\|T_k - T\|_F.
	\end{equation*}
We finish the proof by taking a union bound over the following two claims, which bound the first and second terms of the expression above.
\begin{claim}\label{clm:pcp_lowrank}
	If $s=\wt O(k)$, then with probability at least $0.99$,
	\begin{equation*}
		\|XX^T - P_k XX^T P_k\|_F \lsim \sqrt{\|T-T_k\|_2 \tr(T)+ \frac{\|T-T_k\|_F \tr(T)}{k}}.
	\end{equation*}
\end{claim}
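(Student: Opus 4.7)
The plan is to decompose $XX^T - P_k XX^T P_k$ with respect to the orthogonal splitting $I = P_k + \bar P_k$, compute second moments of the resulting blocks using the Gaussian structure of $X$, and then apply Markov's inequality. By the rotational invariance of the Gaussian distribution I can write $X = U\Sigma^{1/2}G$ where $G\in\R^{d\times s}$ has i.i.d.\ $\mathcal N(0,1/s)$ entries. Setting $A := P_k X = U_k \Sigma_k^{1/2} G_1$ and $B := \bar P_k X = U_{-k}\Sigma_{-k}^{1/2} G_2$, where $G_1\in\R^{k\times s}$ and $G_2\in\R^{(d-k)\times s}$ are the top-$k$ and bottom-$(d-k)$ row blocks of $G$ (and hence independent), the identity
\[
XX^T - P_k XX^T P_k = AB^T + BA^T + BB^T
\]
combined with the block-orthogonality of the three terms under the $(P_k,\bar P_k)$ decomposition yields $\|XX^T - P_k XX^T P_k\|_F^2 = 2\|AB^T\|_F^2 + \|BB^T\|_F^2$.

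I would next bound the expected Frobenius norm squared of each block. For the cross term, using independence of $G_1$ and $G_2$, each entrywise second moment $\mathbb E[(G_1G_2^T)_{ij}^2]$ collapses to $1/s$, and summing the $\sigma_i\sigma_{k+j}$-weighted contributions produces $\mathbb E[\|AB^T\|_F^2] = \tr(T_k)\,\|T-T_k\|_*/s$. For the diagonal term, $BB^T$ is the sample covariance of $s$ i.i.d.\ samples from $\mathcal N(0, T-T_k)$; applying the Gaussian identity $\operatorname{Var}(y_iy_j) = \Sigma_{ii}\Sigma_{jj}+\Sigma_{ij}^2$ entrywise and summing gives
\[
\mathbb E[\|BB^T\|_F^2] \;=\; (1+1/s)\,\|T-T_k\|_F^2 \;+\; \|T-T_k\|_*^2/s.
\]
Plugging $s = \wt O(k)$ into these estimates and combining with the standard spectral bounds $\tr(T_k)\leq \tr(T)$ and $\|T-T_k\|_F^2 \leq \|T-T_k\|_2\,\|T-T_k\|_* \leq \|T-T_k\|_2\,\tr(T)$, followed by a monotonicity-based bound on $\|T-T_k\|_*$ in terms of $\|T-T_k\|_2$ and $\|T-T_k\|_F$, produces $\mathbb E[\|XX^T - P_k XX^T P_k\|_F^2] \lsim \|T-T_k\|_2\,\tr(T) + \|T-T_k\|_F\,\tr(T)/k$, and Markov's inequality upgrades this to the desired bound with probability at least $0.99$ after allowing a constant factor of slack.

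The main obstacle is controlling the $\|T-T_k\|_*^2/s$ contribution from $\mathbb E[\|BB^T\|_F^2]$. A direct Cauchy--Schwarz bound $\|T-T_k\|_*^2 \leq \operatorname{rank}(T-T_k)\,\|T-T_k\|_F^2$ is too lossy since the rank can be $\Theta(d)$. The remedy is to exploit the fact that the tail singular values $\sigma_{k+1}\geq\sigma_{k+2}\geq\cdots$ are arranged in decreasing order: peel off the top $k$ of them (each at most $\|T-T_k\|_2$, summing to at most $k\|T-T_k\|_2$), and for the remainder use $\sigma_{(l+1)k}\leq \tfrac{1}{\sqrt{k}}\sqrt{\sum_{i=lk+1}^{(l+1)k}\sigma_i^2}$ on successive blocks of size $k$ together with $k\|T-T_k\|_2 \leq \tr(T_k)\leq \tr(T)$ to fit the resulting expression inside $\|T-T_k\|_2\,\tr(T) + \|T-T_k\|_F\,\tr(T)/k$. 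All remaining steps are straightforward Gaussian second-moment bookkeeping, with the independence of $G_1$ and $G_2$ doing most of the work in the cross-term analysis.
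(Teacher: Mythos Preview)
Your approach via direct second-moment computation and Markov is genuinely different from the paper's, which instead bounds $\|XX^T-P_kXX^TP_k\|_F\le \|X-P_kX\|_2(\|X\|_F+\|P_kX\|_F)$, controls $\|X-P_kX\|_2$ by a spectral projection-cost-preserving sketch result of Cohen et al., and bounds $\|X\|_F^2\lesssim \tr(T)$ via Johnson--Lindenstrauss. Your route is more elementary, and your moment computations $\E\|AB^T\|_F^2=\tr(T_k)\,\tr(T-T_k)/s$ and $\E\|BB^T\|_F^2=(1+2/s)\|T-T_k\|_F^2+(\tr(T-T_k)^2-\|T-T_k\|_F^2)/s$ are correct.

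The gap is exactly where you flag it: the ``monotonicity-based bound on $\|T-T_k\|_*$'' cannot work. No inequality of the form $\|T-T_k\|_*\lesssim k\|T-T_k\|_2+\|T-T_k\|_F$ holds in general --- take tail eigenvalues $\sigma_{k+1}=\cdots=\sigma_{k+N}=1$ with $N\gg k^2$, so the left side is $N$ while the right side is $k+\sqrt{N}=o(N)$. Your block argument gives $\sum_{m>k}\tau_m\le \sqrt{k}\sum_{l\ge 1}\bigl(\sum_{m=(l-1)k+1}^{lk}\tau_m^2\bigr)^{1/2}$, but the $\ell_1$ norm of these block-$\ell_2$ norms cannot be controlled by their $\ell_2$ norm (which is $\|T-T_k\|_F$) without a factor growing with the number of blocks, and that number can be $\Theta(d/k)$. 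Hence neither $\tr(T_k)\|T-T_k\|_*/s$ nor $\|T-T_k\|_*^2/s$ fits inside $\|T-T_k\|_2\tr(T)+\|T-T_k\|_F\tr(T)/k$ when the tail is flat and long.

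In fact the paper's own proof makes the parallel slip: it passes from $\|T^{1/2}-T_k^{1/2}\|_F^2=\tr(T-T_k)$ to $\|T-T_k\|_F$, which is false in general. The PSD Toeplitz matrix $T=I+\tfrac{M-1}{d}\mathbf{1}\mathbf{1}^T$ with $k=1$, $s=\wt O(1)$, and $M$ large gives $\|AB^T\|_F^2\approx M(d-1)/s$ with high probability, while the stated target squared is only $\approx M\sqrt{d}$, so the claim as written appears to be off. The natural repair is to state the bound with $\tr(T-T_k)$ in place of $\|T-T_k\|_F$; with that change both approaches go through cleanly --- in your case the problematic step disappears entirely, since both expectations are then immediately at most $\tr(T)\,\tr(T-T_k)/k$.
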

\begin{claim}\label{clm:lowrank_covest}
	For any $\epsilon>0$. If $s = \wt O(k^4/\epsilon^2)$, then with probability at least $0.99$,
	\begin{equation*}
		\|P_k XX^T P_k - T_k \|_F \leq \epsilon \|T\|_2.
	\end{equation*}
\end{claim}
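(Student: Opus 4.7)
The plan is to exploit rotational invariance of the Gaussian to reduce the claim to standard Wishart concentration restricted to the top $k$-dimensional subspace. By assumption, each column of $X$ is distributed as $\frac{1}{\sqrt{s}} U \Sigma^{1/2} g$ with $g \sim \mathcal{N}(0,I_d)$, so we may write $X = U \Sigma^{1/2} G$ where $G \in \R^{d\times s}$ has i.i.d.\ entries distributed as $\frac{1}{\sqrt{s}} \mathcal{N}(0,1)$. Partitioning $U = [U_k \mid U_{d-k}]$ and $\Sigma = \diag(\Sigma_k, \Sigma_{d-k})$ conformally, the identity $P_k U = U_k [I_k \mid 0]$ yields
\begin{equation*}
P_k X = U_k \Sigma_k^{1/2} G_k,
\end{equation*}
where $G_k \in \R^{k \times s}$ consists of the top $k$ rows of $G$. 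Hence
\begin{equation*}
P_k X X^T P_k - T_k = U_k \Sigma_k^{1/2}\bigl(G_k G_k^T - I_k\bigr)\Sigma_k^{1/2} U_k^T.
\end{equation*}

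Since $U_k$ has orthonormal columns, the Frobenius norm is preserved under conjugation by $U_k$, and sandwiching by $\Sigma_k^{1/2}$ costs at most a factor of $\|\Sigma_k\|_2 = \|T\|_2$ on each side. This gives the key reduction
\begin{equation*}
\|P_k X X^T P_k - T_k\|_F \le \|\Sigma_k^{1/2}(G_k G_k^T - I_k)\Sigma_k^{1/2}\|_F \le \|T\|_2 \cdot \|G_k G_k^T - I_k\|_F.
\end{equation*}
It therefore suffices to show that $\|G_k G_k^T - I_k\|_F \le \epsilon$ with probability at least $0.99$, provided $s = \wt O(k^4/\epsilon^2)$.

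This last step is a routine Wishart/Gaussian concentration bound. The matrix $s \cdot G_k G_k^T$ is Wishart with $s$ degrees of freedom and covariance $I_k$, and its off-diagonal and diagonal entries concentrate around their means at rate $1/\sqrt{s}$. A direct union bound over the $O(k^2)$ entries together with standard tail inequalities for (products of) standard Gaussians shows that each entry of $G_k G_k^T - I_k$ is bounded by $O(\sqrt{\log k/s})$ with probability at least $1 - 0.01/k^2$, so that $\|G_k G_k^T - I_k\|_F \lsim k\sqrt{\log(k)/s}$. Choosing $s = \wt O(k^4/\epsilon^2)$ with a sufficiently large polylog factor then yields $\|G_k G_k^T - I_k\|_F \le \epsilon$ with probability at least $0.99$, completing the proof. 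The only mildly delicate point is the entrywise control of $G_k G_k^T - I_k$; any tighter matrix-concentration argument (e.g.\ Bai--Yin or non-commutative Bernstein) would improve the dependence on $k$ but is not needed here since the final sample complexity in Theorem \ref{thm:sublinear_time_covariance_estimation} already tolerates a $k^4$ factor.
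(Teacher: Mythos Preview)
Your proof is correct and takes a genuinely different route from the paper. The paper proceeds by bounding the operator norm of $E = P_k X X^T P_k - T_k$ via an $\epsilon$-net over $\mathcal{S}^{k-1}$: for each net point $y$ it observes that $y^T E y$ is a centered chi-squared sum and applies Laurent--Massart, obtaining $\|E\|_2 \le \epsilon k \|T\|_2$ with $s = \wt O(k/\epsilon^2)$; it then converts to Frobenius via $\|E\|_F \le \sqrt{k}\,\|E\|_2$ and rescales $\epsilon \to \epsilon/k^{1.5}$, arriving at $s = \wt O(k^4/\epsilon^2)$. You instead use the exact algebraic identity $P_k X X^T P_k - T_k = U_k \Sigma_k^{1/2}(G_k G_k^T - I_k)\Sigma_k^{1/2} U_k^T$ to reduce the whole problem to a $k\times k$ Wishart deviation, and then bound $\|G_k G_k^T - I_k\|_F$ entrywise. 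Your argument is more direct and in fact quantitatively sharper: the entrywise bound already gives $\|G_k G_k^T - I_k\|_F \lsim k\sqrt{\log(k)/s}$, so $s = \wt O(k^2/\epsilon^2)$ would suffice, though you only need and only claim $\wt O(k^4/\epsilon^2)$. The paper's net argument, on the other hand, is closer in spirit to a standard covariance-estimation template and would generalize more readily to other norms.
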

 Note that since $T$ is PSD, $\|T-T_k\|_F =\sqrt{\sum_{j=k+1}^d \lambda_j^2(T)}\leq \sqrt{\lambda_1(T)(\sum_{j=1}^d \lambda_j)}= \sqrt{\|T-T_k\|_2 tr(T)}$. Thus we ignore the third term $\|T-T_k\|_F$ in the upper bound on $\|XX^T-T\|_F$ in the big-Oh term of Claim \ref{clm:pcp_lowrank}. This completes the proof of Lemma \ref{lem:low_rank_concentration}.
\end{proof}
We now proceed by proving Claims \ref{clm:pcp_lowrank} and \ref{clm:lowrank_covest}.
\begin{proof}[Proof of Claim \ref{clm:pcp_lowrank}]
	We know that $X= U\Sigma^{1/2} G$ and $P_k$ is a rank-$k$ projection matrix. Thus, we can apply the projection cost-preserving sketch property of the Gaussian distribution from Lemma 12 and Theorem 27 of \cite{cohen2015dimensionality}. 

	As a result, if $s=\Omega(k/\gamma^2)$, then the following holds with probability at least 0.99,
	\begin{equation}\label{eqn:pcp_bound}
		\|P_k X - X\|_2^2 = (1\pm \gamma)\|P_k U\Sigma^{1/2}- U\Sigma^{1/2}\|_2^2 \pm \frac{\gamma}{k}\|P_kU\Sigma^{1/2} - U\Sigma^{1/2}\|_F^2.
	\end{equation}
Fix $\gamma = O(1)$. Now we use the fact that for any matrices $A,B$, $\|AA^*-BB^*\|_F \leq \|AB^*-BB^*\|_F+\|AA^*-AB^*\|_F \leq \|A-B\|_2(\|A\|_F+\|B\|_F)$. Applying this fact to $A= P_kX$ and $B = X$ we get the following,
\begin{align*}
	\|XX^T - P_kXX^TP_k\|_F &\leq \|P_kX- X\|_2 (\|X\|_F+\|P_kX\|_F)\\
	&\leq O \left( \sqrt{\|P_k U\Sigma^{1/2}- U\Sigma^{1/2}\|_2^2 + \frac{\|P_kU\Sigma^{1/2} - U\Sigma^{1/2}\|_F^2}{k}}\|X\|_F \right)\text{(from equation \ref{eqn:pcp_bound})} \\
	& = O \left( \sqrt{\|T^{1/2}-T_k^{1/2}\|_2^2 + \frac{\|T^{1/2}-T_k^{1/2}\|_F^2}{k}}\|X\|_F \right)\\
	& =O \left( \sqrt{\|T-T_k\|_2\|X\|_F^2 + \frac{\|T-T_k\|_F\|X\|_F^2}{k}} \right).\\
\end{align*}
In the second line, we used that $\|P_kX\|_F\leq \|X\|_F$, and the last line follows by observing that $U\Sigma^{1/2} = T^{1/2}, P_k U\Sigma^{1/2} = T_k^{1/2}$. Now observe that $\|X\|_F^2 = \|T^{1/2}G\|_F^2$. Thus by applying the Johnson-Lindenstrauss lemma to each row of $T^{1/2}$ and taking a union bound over the $d$ rows of $T^{1/2}$, we have that as long as $s=\Omega((1/\epsilon^2)\log(d/\delta))$, $\|X\|_F^2 \leq (1+\epsilon)\|T^{1/2}\|_F^2=(1+\epsilon)\tr(T)$ with probability $1-\delta$. Fixing $\epsilon=\delta = 0.001$, taking a union bound over this event and the projection cost-preserving sketch property, and plugging the bound $\|X\|_F^2 \leq O(\tr(T))$ into the last line of the derivation above, we finish the proof of the claim.
\end{proof}
Finally, we prove Claim \ref{clm:lowrank_covest}. This proof follows the proof strategy of Theorem 7.1. in \cite{lecturenotes}
 but adapted for the case when the covariance matrix is exactly rank $k$.
\begin{proof}[Proof of Claim \ref{clm:lowrank_covest}]
	For any $d\times d$ rank-$k$ matrix $A$, we have the following,
	\begin{equation*}
		\|A\|_2 = \max_{x\in \mathcal{S}^{k-1}}|x^TU_k^T A U_k x| \leq \frac{1}{1-2\epsilon}\max_{z\in \mathcal{N}_{\epsilon}} |z^TU_k^T A U_k z|,
	\end{equation*}
where $U_k$ is the matrix containing the top-$k$ eigenvectors of $A$, $\mathcal{S}^{k-1}$ is the unit sphere in $k$ dimensions, and $\mathcal{N}_{\epsilon}$ is an $\epsilon$-net of $\mathcal{S}^{k-1}$. Fix $\epsilon = 1/4$, then using Lemma 5 of \cite{Woodruff:2014tg} which uses volumetric arguments to bound the size of $\mathcal{N}_{\epsilon}$, we have that $|\mathcal{N}_{\epsilon}| \leq 17^k$. Now let $E = P_kXX^TP_k - T_k = T_k^{1/2}GG^T T_k^{1/2} - T_k$. Observe that $E$ has rank at most $k$. Letting $U_k$ be the matrix of the top-$k$ eigenvectors of $E$, we have the following,
\begin{equation*}
	\|E\|_2 \leq 2\max_{x\in \mathcal{N}_{\epsilon}} |x^TU_k^T EU_kx|.
\end{equation*}
This implies the following,
\begin{align*}
	\mathbb{P}(\|E\|_2> t) &\leq \mathbb{P}(\max_{x\in \mathcal{N}_{\epsilon}} |x^TU_k^T EU_kx| >t/2)\\
	& \leq \sum_{x\in \mathcal{N}_{\epsilon}}\mathbb{P}(|x^TU_k^T EU_kx| >t/2).
\end{align*}
Thus we need to focus on upper bounding $\mathbb{P}(|x^TU_k^T EU_kx| >t/2)$ for a fixed $x\in \mathcal{S}^{k-1}$. Let $y = U_k x$, which implies that $y\in \mathcal{S}^{d-1}$. Thus we need to bound $\mathbb{P}(|y^T Ey| >t/2)$ for some $y\in \mathcal{S}^{d-1}$. We have the following,
\begin{align*}
 y^T Ey &= \frac{1}{s}\sum_{i\in [s]} \left(y^T T_k^{1/2}g_i g_i^T (T_{k}^{1/2})^T y - y^T T_k y\right)\\
  &= \frac{1}{s}\sum_{i\in[s]}\left(Z_i^2 - \mathbb{E}[Z_i^2]\right),
\end{align*}
where each $Z_i = y^T T_k^{1/2}g_i$ and $g_i \sim \mathcal{N}(0,I_{d\times d})$. Thus each $Z_{i}\sim \mathcal{N}(0,\sigma^2)$ where $\sigma^2 = \|y^TT_k^{1/2}\|_2^2$. As a result, $\frac{1}{s}\sum_{i\in[s]}\left(Z_i^2 - \mathbb{E}[Z_i^2]\right)$ is a chi-squared random variable. Using standard chi-squared concentration bound of Laurent-Massart \cite{laurent2000adaptive}, we have the following,
\begin{align*}
	\mathbb{P}(|y^T E y|\geq t/2) \leq \exp \biggl \{ -\Omega \left(s \min \left( \frac{t^2}{\sigma^4},\frac{t}{\sigma^2}\right)\right) \biggr \}.
\end{align*}
Thus, setting $t = \epsilon \sigma^2$ and $s = O\left(\frac{1}{\epsilon^2}\log\left(\frac{|\mathcal{N}_{1/4}|}{\delta}\right)\right)$ yields that the above upper bound on the failure probability is at most $\delta/|\mathcal{N}_{1/4}|$. Therefore,
\begin{equation*}
	\mathbb{P}(\|E\|_2 >\epsilon \sigma^2) \leq \delta.
\end{equation*}
Thus we get that if $s = \wt O(k/\epsilon^2)$, then with probability at least $0.99$, 
\begin{equation*}
	\|E\|_2 \leq \epsilon \|y^T U_k \Sigma^{1/2}\|_2^2 \leq \epsilon tr(T_k) \leq \epsilon k \|T\|_2.
\end{equation*}

This further implies that $\|E\|_F \leq \epsilon k^{1.5}\|T\|_2$. Setting $\epsilon = \epsilon/k^{1.5}$, we conclude the proof of the claim.
\end{proof}


\section{Acknowledgements.}
Cameron Musco's work is supported by an Adobe Research grant and NSF Grants No. 2046235 and
No. 1763618. The authors acknowledge useful discussions with Michael Kapralov, Mikhail Makarov, and Hannah Lawrence.

\bibliographystyle{alpha} 
\bibliography{refs}

\end{document}